\newcommand{\set}[1]{\ensuremath{\left\{ #1 \right\}}}
\newcommand{\wsp}{{\sc WSP}}
\newcommand{\wspi}[1]{\textnormal{WSP}_{#1}}
\newcommand{\rrbac}{{\sf R$^2$BAC}}
\newcommand{\card}[1]{\left|#1\right|}
\newcommand{\children}[1]{\Delta(#1)}
\newcommand{\spbr}[1]{\sp{(#1)}}
\newcommand{\xspbr}[2]{#1\spbr{#2}}
\newcommand{\ceil}[1]{\lceil #1 \rceil}
\newcommand{\floor}[1]{\lfloor #1 \rfloor}
\newenvironment{newstuff}{\color{black}}{}
\newenvironment{oldstuff}{\color{red}}{}
\begin{document}

\markboth{J. Crampton et al.}{On the Parameterized Complexity and Kernelization of the Workflow Satisfiability Problem}

\title{On the Parameterized Complexity and Kernelization of the Workflow Satisfiability Problem}
\author{%
  JASON CRAMPTON\affil{Royal Holloway, University of London}
  GREGORY GUTIN\affil{Royal Holloway, University of London}
  ANDERS YEO\affil{University of Johannesburg}}

\begin{abstract}
A workflow specification defines a set of steps and the order in which those steps must be executed.
Security requirements may impose constraints on which groups of users are permitted to perform subsets of those steps.
A workflow specification is said to be satisfiable if there exists an assignment of users to workflow steps that satisfies all the constraints.
An algorithm for determining whether such an assignment exists is important, both as a static analysis tool for workflow specifications, and for the construction of run-time reference monitors for workflow management systems.
Finding such an assignment is a hard problem in general, but work by Wang and Li in 2010 using the theory of parameterized complexity suggests that efficient algorithms exist under reasonable assumptions about workflow specifications.
In this paper, we improve the complexity bounds for the workflow satisfiability problem.
We also generalize and extend the types of constraints that may be defined in a workflow specification and prove that the satisfiability problem remains fixed-parameter tractable for such constraints.
Finally, we consider preprocessing for the problem and prove that in an important special case, in polynomial time, we can reduce the given input into an equivalent one, where the number of users is at most the number of steps.
We also show that no such reduction exists for two natural extensions of this case, which bounds the number of users by a polynomial in the number of steps, provided a widely-accepted complexity-theoretical assumption holds.
\end{abstract}

\category{D4.6}{Operating Systems}{Security and Protection}[Access controls]
\category{F2.2}{Analysis of Algorithms and Problem Complexity}{Nonnumerical Algorithms and Problems}
\category{H2.0}{Database Management}{General}[Security, integrity and protection]

\terms{Algorithms, Security, Theory}

\keywords{authorization constraints, workflow satisfiability, parameterized complexity}

\acmformat{Crampton, J., Gutin, G., Yeo, A.  2013. On the Parameterized Complexity of the Workflow Satisfiability Problem.}

\begin{bottomstuff}
A preliminary version of this paper appeared in the Proceedings of CCS 2012.

Author's addresses:
J. Crampton, Department of Mathematics, Royal Holloway, University of London;
G. Gutin, Department of Computer Science, Royal Holloway, University of London;
A. Yeo, Department of Mathematics, University of Johannesburg.
\end{bottomstuff}

\maketitle

\section{Introduction}\label{sec:intro}

It is increasingly common for organizations to computerize their business and management processes.
The co-ordination of the tasks or steps that comprise a computerized business process is managed by a workflow management system (or business process management system).
Typically, the execution of these steps will be triggered by a human user, or a software agent acting under the control of a human user, and the execution of each step will be restricted to some set of authorized users.

A workflow typically specifies the steps that comprise a business process and the order in which those steps should be performed.
Moreover, it is often the case that some form of access control, often role-based, should be applied to limit the execution of steps to authorized users.
In addition, many workflows require controls on the users that perform groups of steps.
\begin{newstuff}The concept of a Chinese wall, for example, limits the set of steps that any one user can perform~\cite{BrNa89}, as does separation-of-duty, which is a central part of the role-based access control model~\cite{ansi-rbac04}.\end{newstuff}
Hence, it is important that workflow management systems implement security controls that enforce authorization rules and business rules, in order to comply with statutory requirements or best practice~\cite{BaBuKa10}.
It is these ``security-aware'' workflows that will be the focus of the remainder of this paper.

A simple, illustrative example for purchase order processing~\cite{cram:sacmat05} is shown in Figure~\ref{fig:example-workflow}.
In the first step of the workflow, the purchase order is created and approved (and then dispatched to the supplier).
The supplier will submit an invoice for the goods ordered, which is processed by the create payment step.
When the supplier delivers the goods, a goods received note (GRN) must be signed and countersigned.
Only then may the payment be approved and sent to the supplier.
Note that a workflow specification need not be linear: the processing of the GRN and of the invoice can occur in parallel, for example.

In addition to defining the order in which steps must be performed, the workflow specification includes rules to prevent fraudulent use of the purchase order processing system.
In our example, these rules take the form of constraints on users that can perform pairs of steps in the workflow: the same user may not sign and countersign the GRN, for example.
(We introduce more complex rules in Sections~\ref{sec:wsp} and~\ref{sec:fpt-organizational-constraints}.)

\begin{figure}[h]\centering
\hspace*{.075\textwidth}
\subfigure[Ordering on steps]{
\begin{tikzpicture}[->,.=stealth',node distance=8mm and 7mm,semithick,auto,state/.style={draw,circle,inner sep=3pt}]
  \node[state]  (t1)                      {$s_1$};
  \node[state]  (t2) [right=of t1]        {$s_2$};
  \node[state]  (t3) [above right=of t2]  {$s_3$};
  \node[state]  (t4) [below right=of t3]  {$s_4$};
  \node[state]  (t5) [above right=of t4]  {$s_5$};
  \node[state]  (t6) [below right=of t5]  {$s_6$};
  \path (t1) edge (t2)
   (t2) edge (t3)
   (t2) edge (t4)
   (t3) edge (t5)
   (t4) edge (t6)
   (t5) edge (t6);
\end{tikzpicture}}
\hfill
\subfigure[Constraints]{
\begin{tikzpicture}[-,node distance=8mm and 8mm,semithick,auto,state/.style={draw,circle,inner sep=3pt}]
  \node[state] (t1) {$s_1$};
  \node[state] (t2) [right=of t1] {$s_2$};
  \node[state] (t3) [above=of t2] {$s_3$};
  \node[state] (t4) [below=of t2] {$s_4$};
  \node[state] (t5) [right=of t3] {$s_5$};
  \node[state] (t6) [right=of t4] {$s_6$};
  \path (t1) edge [dotted] node {$=$} (t3)
        (t3) edge [dotted] node {$\ne$} (t5)
        (t1) edge [dotted] node[swap] {$\ne$} (t4)
        (t1) edge [dotted] node[swap] {$\ne$} (t2)
        (t4) edge [dotted] node {$\ne$} (t6);
\end{tikzpicture}}
\hspace*{.075\textwidth}

\subfigure[Legend]{\footnotesize\setlength{\extrarowheight}{2pt}
  \begin{tabular}{|ll|ll|ll|}
    \hline
    $s_1$ & create purchase order &     $s_4$ & create payment &     $\ne$ & different users must perform steps \\
    $s_2$ & approve purchase order &     $s_5$ & countersign GRN & $=$ & same user must perform steps \\
    $s_3$ & sign GRN &   $s_6$ & approve payment &  & \\
    \hline
    \multicolumn{6}{c}{} \\
  \end{tabular}}
\caption{A simple constrained workflow for purchase order processing}\label{fig:example-workflow}
\end{figure}
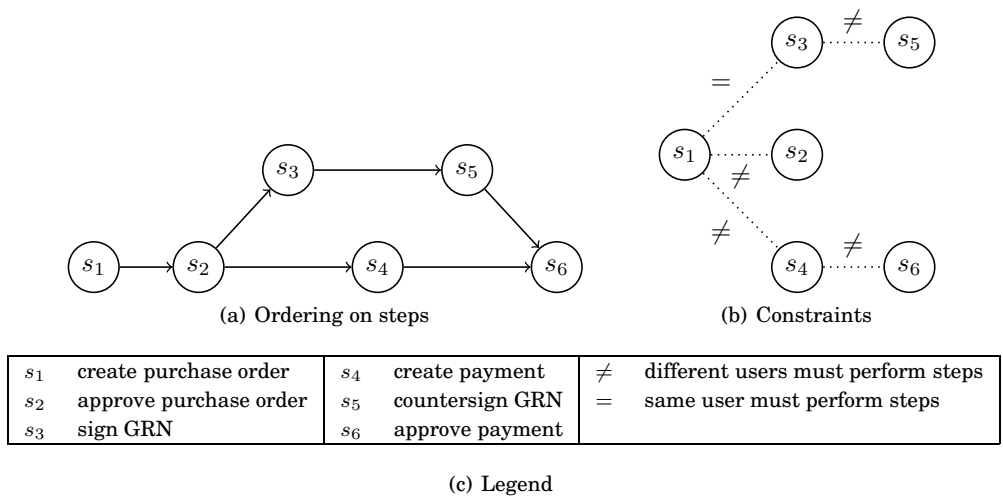

It is apparent that it may be impossible to find an assignment of authorized users to workflow steps such that all constraints are satisfied.
In this case, we say that the workflow specification is \emph{unsatisfiable}.
The {\sc Workflow Satisfiability Problem} (\wsp) is known to be NP-hard, even when the set of constraints only includes constraints that have a relatively simple structure (and that would arise regularly in practice).%
\footnote{In particular, the {\sc Graph $k$-Colorability} problem can be reduced to a special case of \wsp\ in which the workflow specification only includes separation-of-duty constraints~\cite{WangLi10}.}

It has been argued that it would be of practical value to be able to define constraints in terms of organizational structures,  rather than just the identity of particular users~\cite{WangLi10}.
One of the contributions of this paper is to introduce a model for hierarchical organizations based on the notion of equivalence classes and partition refinements.
We demonstrate how to construct an instance of our model from a management structure and illustrate why constraints defined over such models are of practical value.
\begin{newstuff}
The use of cardinality constraints in access control policies has also attracted considerable interest in the academic community~\cite{JoBeLaGh05,SaCoFeYo96,SiZu97}.
Cardinality constraints can encode a number of useful requirements that cannot be encoded using the constraints that have been used in prior work on \wsp.
A second contribution of this paper is to introduce \emph{counting} constraints for workflows---a natural extension of cardinality constraints---and to examine \wsp\ when such constraints form part of a workflow specification.
\end{newstuff}

\citeN{WangLi10} observed that the number of steps in a workflow is likely to be small relative to the size of the input to the workflow satisfiability problem.
This observation led them to study the problem using tools from parameterized complexity and to prove that the problem is fixed-parameter tractable for certain classes of constraints.
These results demonstrate that it is feasible to solve \wsp\ for many workflow specifications in practice.
However, Wang and Li also showed that for many types of constraints the problem is fixed-parameter intractable unless the parameterized complexity hypotheses $\text{FPT}\neq\text{W[1]}$ fails, which is highly unlikely.
(We provide a short introduction to parameterized complexity in Section~\ref{sec:fpt}.)
In this paper, we extend the results of Wang and Li in several different ways.
  \begin{enumerate}[1.]
    \item \begin{newstuff}First, we introduce the notion of counting constraints, a generalization of cardinality constraints, and extend the analysis of \wsp\ to include such constraints.\end{newstuff}
    \item Our second contribution is to introduce a new approach to \wsp, which makes use of a powerful, recent result in the area of exponential-time algorithms~\cite{BjHuKo09}.
        \begin{newstuff}
        We establish necessary and sufficient conditions on constraints that will admit the use of our approach.
        In particular, we show that counting constraints satisfy these conditions, as do the constraints considered by Wang and Li.
        \end{newstuff}
        This approach allows us to develop algorithms with a significantly better worst-case performance than those of Wang and Li.
        Moreover, we demonstrate that our result cannot be significantly improved, provided a well-known hypothesis about the complexity of solving 3-SAT holds.
    \item Our third extension to the work of Wang and Li is to define constraints in terms of hierarchical organizational structures and to prove, using our new technique, that \wsp\ remains fixed-parameter tractable in the presence of such hierarchical structures and hierarchy-related constraints.
    \item \begin{newstuff}Our fourth contribution is to instigate the systematic study of parameterized compression (also known as kernelization) of WSP instances.%
						  \footnote{Kernelization of WSP instances can be extremely useful in speeding up the solution of WSP: the compressed instance can be solved using any suitable algorithm (such as a SAT solver), not necessarily by an FPT algorithm.}
						  We show that a result of~\citeN[Theorem 3.3]{FeFrHeNaRo11} on a problem equivalent to a special case of WSP can be slightly extended and significantly improved using graph matchings.
						  We also prove that two natural further extensions of the result of Fellows et al. are impossible subject to a widely-accepted complexity-theoretical hypothesis.%
		  \end{newstuff}
\end{enumerate}

In the next section, we introduce the workflow satisfiability problem.
\begin{newstuff}
In Section~\ref{sec:wsp-fpt}, we provide a brief introduction to fixed-parameter tractability, prove a general result  characterizing the constraints for which WSP is fixed-parameter tractable, and apply this result to counting constraints.
In Section~\ref{sec:fpt-entailment-constraints} we extend the results of Wang and Li, by improving the complexity of the algorithms used to solve \wsp\ and by introducing constraints based on equivalence relations.
In Section~\ref{sec:fpt-organizational-constraints}, we introduce a model for an organizational hierarchy and a class of constraint relations defined in terms of such hierarchies.
We demonstrate that \wsp\ remains fixed-parameter tractable for workflow specifications that include constraints defined over an organizational hierarchy.
In Section~\ref{sec:kernelization}, we discuss kernelization of {\sc WSP} and prove that in an important special case, in polynomial time, we can transform the given input into an equivalent one, where the number of users is at most the number of steps.
We also show that no polynomial transformation exists for two natural extensions of this case, which bounds the number of users by a polynomial in the number of steps, unless a certain complexity-theoretical assumption fails.
The paper concludes with a summary of our contributions and discussions of related and future work.
\end{newstuff}

\section{The Workflow Satisfiability Problem}\label{sec:wsp}

In this section, we introduce our notation and definitions, derived from earlier work by~\citeN{cram:sacmat05} and~\citeN{WangLi10}, and then define the workflow satisfiability problem.

A partially ordered set (or poset) is a pair $(X,\leqslant)$, where $\leqslant$ is a reflexive, anti-symmetric and transitive binary relation defined over $X$.
If $(X,\leqslant)$ is a poset, then we write $x \parallel y$ if $x$ and $y$ are incomparable; that is, $x \not\leqslant y$ and $y \not\leqslant x$.
We may write $x \geqslant y$ whenever $y \leqslant x$.
We may also write $x < y$ whenever $x \leqslant y$ and $x \ne y$.
Finally, we will write $[n]$ to denote $\set{1,\dots,n}$.

\begin{definition}\label{def:workflow}
A \emph{workflow specification} is a partially ordered set of steps $(S,\leqslant)$. 
An \emph{authorization policy} for a workflow specification is a relation $A \subseteq S \times U$.
A \emph{workflow authorization schema} is a tuple $(S,U,\leqslant,A)$, where $(S,\leqslant)$ is a workflow specification and $A$ is an authorization policy. 
\end{definition}

If $s < s'$ then $s$ must be performed before $s'$ in any instance of the workflow; if $s \parallel s'$ then $s$ and $s'$ may be performed in either order.
\begin{newstuff}Our definition of workflow specification does not permit repetition of tasks (loops) or repetition of sub-workflows (cycles).\end{newstuff}
User $u$ is authorized to perform step $s$ only if $(s,u) \in A$.%
\footnote{In practice, the set of authorized step-user pairs, $A$, will not be defined explicitly.
          Instead, $A$ will be inferred from other access control data structures.
          In particular, \rrbac~--~the role-and-relation-based access control model of~\citeN{WangLi10}~--~introduces a set of roles $R$, a user-role relation ${\it UR} \subseteq U \times R$ and a role-step relation ${\it SA} \subseteq R \times S$ from which it is possible to derive the steps for which users are authorized.
          For all common access control policies (including \rrbac), it is straightforward to derive $A$.
          We prefer to use $A$ in order to simplify the exposition.}
We assume that for every step $s \in S$ there exists some user $u \in U$ such that $(s,u) \in A$.

\begin{definition}
Let $(S,U,\leqslant,A)$ be a workflow authorization schema.
A \emph{plan} is a function $\pi: S \rightarrow U$.
A plan $\pi$ is \emph{authorized} for $(S,U,\leqslant, A)$ if $(s,\pi(s)) \in A$ for all $s \in S$.
\end{definition}

\begin{newstuff}
The access control policy embodied in the authorization relation $A$ imposes restrictions on the users that can perform specific steps in the workflow.
A \emph{workflow authorization constraint} imposes restrictions on the execution of \emph{sets} of steps in a workflow.
A constraint is defined by some suitable syntax and its meaning is provided by the restrictions the constraint imposes on the users that execute the sets of steps defined in the constraint.
In other words, constraint satisfaction is defined with reference to a plan; a \emph{valid} plan is one that is authorized and allocates users in such a way that the constraint is satisfied.
A very simple example of a constraint is one requiring that steps $s$ and $s'$ are executed by different users.
Then a valid plan $\pi$ (with respect to this constraint) has the property that $\pi(s) \ne \pi(s')$.
A \emph{constrained workflow authorization schema} is a tuple $(S,U,\leqslant,A,C)$, where $C$ is a set of workflow constraints.\footnote{The set of constraints defines what has been called a \emph{history-dependent authorization policy}~\cite{BaBuKa12}; the relation $A$ defines a \emph{history-independent policy}.}
A plan is \emph{valid} for an authorization schema if it is authorized and satisfies all constraints in $C$.
We define particular types of constraints in Section~\ref{sec:constraint-types} and~\ref{sec:entailment-constraint-types}.
\end{newstuff}

We may now define the workflow satisfiability problem, as defined by~\citeN{WangLi10}.

\begin{center}
\fbox{%
      \begin{tabulary}{.95\textwidth}{@{}r<{~}@{}L@{}}
        \multicolumn{2}{@{}l}{\sc Workflow Satisfiability Problem (WSP)}\\
        \emph{Input:} & A constrained workflow authorization schema $(S,U,\leqslant,A,C)$\\
        \emph{Output:} & A valid plan $\pi : S \rightarrow U$ or an answer that there exists no valid plan
       \end{tabulary}%
      }
\end{center}
\begin{newstuff}
We will write $c$, $n$ and $k$ to denote the number of constraints, users and steps, respectively, in an instance of \wsp.
We will analyze the complexity of the workflow satisfiability problem in terms of these parameters.
\end{newstuff}

\begin{newstuff}
\subsection{Applications of \wsp}

An algorithm that solves \wsp\ can be used by a workflow management system in one of three ways, depending on how users are allocated to steps in an instance of the workflow.
Some systems allocate an authorized user to each step when a workflow instance is generated.
Other systems allocate users to only those steps that are ready to be performed in an instance of the workflow.
(A step is ready only if all its immediate predecessor steps have been completed.)
The third possibility is to allow users to select a step to execute from a pool of ready steps maintained by the workflow management system.

For the first type of system, it is important to know that a workflow is satisfiable and an algorithm that solves \wsp\ can simply be used as a static analysis tool.
The NP-hardness of the problem suggests that the worst-case run-time of such an algorithm will be exponential in the size of the input.
Hence, it is important to find an algorithm that is as efficient as possible.

For the second and third cases, the system must guarantee that the choice of user to execute a step (whether it is allocated by the system or selected by the user) does not prevent the workflow instance from completing.
This analysis needs to be performed each time a user is allocated to, or selects, a step in a workflow instance.
The question can be resolved by solving a new instance of \wsp, in which those steps to which users have been allocated are assumed to have a single authorized user (namely, the user that has been allocated to the task)~\cite[\S 3.2]{cram:sacmat05}.
Assuming that these checks should incur as little delay as possible, particularly in the case when users select steps in real time~\cite{KoSc08}, it becomes even more important to find an algorithm that can decide \wsp\ as efficiently as possible.

The definition of workflow satisfiability given above assumes that the set of users and the authorization relation are given.
This notion of satisfiability is appropriate when the workflow schema is designed ``in-house''.
A number of large information technology companies develop business process systems which are then configured by the end users of those systems.
Part of that configuration includes the assignment of users to steps in workflow schemas.
The developer of such a schema may wish to be assured that the schema is satisfiable for some set of users and some authorization relation, since the schema is of no practical use if no such user set and authorization relation exist.
The desired assurance can be provided by solving an instance of \wsp\ in which there are $k$ users, each of which is authorized for all steps.
The developer may also determine the minimum number of users required for a workflow schema to be satisfiable.
The minimum number must be between $1$ and $k$ and, using a binary search, can be determined by examining $\ceil{\log_2 k}$ instances of \wsp.
\end{newstuff}

\begin{newstuff}
\subsection{Constraint Types}\label{sec:constraint-types}

In this paper, we consider two forms of constraint: \emph{counting} constraints and \emph{entailment} constraints.
A counting constraint has the form $(t_\ell,t_r,S')$, where $1 \leqslant t_\ell \leqslant t_r \leqslant k$ and $S' \subseteq S$.
A counting constraint is a generalization of the cardinality constraints introduced in the RBAC96 model~\cite{SaCoFeYo96} and widely adopted by subsequent access control models~\cite{ansi-rbac04,BeboFe01,JoBeLaGh05}.

A plan $\pi : S \rightarrow L$ satisfies counting constraint $(t_\ell,t_r,S')$ if a user performs either no steps in $S'$ or between $t_\ell$ and $t_r$ steps.
In other words, no user is assigned to more than $t_r$ steps in $S'$ and each user (if involved in the execution of steps in $S'$) must perform at least $t_\ell$ steps.
Many requirements give rise to counting constraints of the form $(t,t,S')$, which we will abbreviate to $(t,S')$.
A number of requirements that arise in the literature and in practice can be represented by counting constraints.
  \begin{description}
    \item[Separation of duty] The constraint $(1,\set{s',s''})$ requires that no user executes both $s'$ and $s''$.  More generally, the constraint $(1,\card{S'}-1,S')$ requires that no user executes all the steps in $S'$.
    \item[Binding of duty] The constraint $(2,\set{s',s''})$ requires that the same user executes both $s'$ and $s''$.  More generally, the constraint $(\card{S'},S')$ requires that all steps in $S'$ are executed by the same user.
    \item[Division of duty] The constraint $(\floor{\card{S'}/v},\ceil{\card{S'}/v},S')$ requires that the steps in $S'$ are split as equally as possible between $v$ different users. The special case $(1,S')$ requires that a different user performs each step in $S'$.
    \item[Threshold constraints] The constraint $(1,t,S')$ requires that no user executes more than $t$ steps in $S'$.%
          \footnote{These constraints are similar in structure and analogous in meaning to SMER (statically, mutually-exclusive, role) constraints~\cite{LiTrBi07}; the SMER constraint $(t,S')$ requires that no user is authorized for $t$ or more roles in the set of roles $S'$.  These constraints are also similar to the cardinality constraints defined in RBAC96~\cite{SaCoFeYo96}.}
    \item[Generalized threshold constraints] The constraint $(t_\ell,t_r,S')$ requires that each user (involved in the execution of steps in $S'$) performs between $t_\ell$ and $t_r$ of those steps.
  \end{description}

Counting constraints are not able to encode certain types of requirements.
For this reason, we also consider entailment constraints, which have the form $(\rho,S',S'')$, where $\rho \subseteq U \times U$ and $S',S'' \subseteq S$.
A plan $\pi$ satisfies entailment constraint $(\rho,S',S'')$ if and only if there exists $s' \in S'$ and $s'' \in S''$ such that
$(\pi(s'),\pi(s'')) \in \rho$.
A plan $\pi$ satisfies a set of constraints $C$ (which may be a mixture of counting and entailment constraints) if $\pi$ satisfies each constraint in $C$.

Counting constraints represent ``universal'' restrictions on the execution of steps (in the sense that every user in a plan must satisfy the requirement stipulated).
In contrast, entailment constraints are ``existential'' in nature: they require the existence of a pair of steps for which a condition on the two users who execute those steps (defined by the binary relation $\rho$) is satisfied.

We could write $\delta$ to the denote the diagonal relation $\set{(u,u) : u \in U}$ and $\overline{\delta}$ to denote $(U \times U) \setminus \delta$.
However, we will prefer to use the less formal, but more intuitive, notation $(\ne,S',S'')$ and $(=,S',S'')$ to denote the constraints $(\overline{\delta},S',S'')$ and $(\delta,S',S'')$, respectively.

There are some requirements that can be represented by a counting constraint or an entailment constraint.
The counting constraint $(1,\set{s_1,s_2})$, for example, is satisfied by plan $\pi$ if and only if the entailment constraint $(\ne,\set{s_1},\set{s_2})$ is satisfied.
We say that two constraints $\gamma$ and $\gamma'$ are \emph{equivalent} if a plan $\pi$ satisfies $\gamma$ if and only if it satisfies $\gamma'$.
Thus $(1,\set{s_1,s_2})$ is equivalent to $(\ne,\set{s_1},\set{s_2})$.
Similarly, $(2,\set{s_1,s_2})$ is equivalent to $(=,\set{s_1},\set{s_2})$.
Nevertheless, there is no counting constraint (or set of such constraints) that is equivalent to $(=,S_1,S_2)$.
Equally, there is no entailment constraint (or set of such constraints) that is equivalent to $(t,S')$.
\end{newstuff}

\begin{newstuff}
\subsection{Entailment Constraint Subtypes}\label{sec:entailment-constraint-types}

Previous work on workflow satisfiability has not considered counting constraints.
Moreover, our definition of entailment constraint is more general than prior definitions.
Thus, we study more general constraints for \wsp\ than have been investigated before.

\citeN{cram:sacmat05} defined entailment constraints in which $S_1$ and $S_2$ are singleton sets: we will refer to constraints of this form as \emph{Type $1$} constraints; for brevity we will write $(\rho,s_1,s_2)$ for the Type 1 constraint $(\rho,\set{s_1},\set{s_2})$.
\citeN{WangLi10} defined constraints in which at least one of $S_1$ and $S_2$ is a singleton set: we will refer to constraints of this form as \emph{Type $2$} constraints and we will write $(\rho,s_1,S_2)$ in preference to $(\rho,\set{s_1},S_2)$.
The Type 2 constraint $(\rho,s_1,S_2)$ is equivalent to $(\rho,S_2,s_1)$ if $\rho$ is symmetric, in which case we will write $(\rho,s_1,S_2)$ in preference to $(\rho,S_2,s_1)$.
Note that both $\delta$ and $\overline{\delta}$ are symmetric binary relations.
Constraints in which $S_1$ and $S_2$ are arbitrary sets will be called \emph{Type $3$} constraints.

We note that Type 1 constraints can express requirements of the form described in Section~\ref{sec:intro}, where we wish to restrict the combinations of users that perform pairs of steps.
The plan $\pi$ satisfies constraint $(=,s,s')$, for example, if the same user is assigned to both steps by $\pi$, and satisfies constraint $(\ne,s,s')$ if different users are assigned to $s$ and $s'$.

Type 2 constraints provide greater flexibility, although Wang and Li, who introduced these constraints, do not provide a use case for which such a constraint would be needed.
However, there are forms of separation-of-duty requirements that are most naturally encoded using Type 3 constraints.
Consider, for example, the requirement that a set of steps $S' \subseteq S$ must not all be performed by the same user~\cite{ArGiPo09}.
We may encode this as the constraint $(\ne,S',S')$, which is satisfied by a plan $\pi$ only if there exists two steps in $S'$ that are allocated to different users by $\pi$.%
\footnote{%
It is interesting to note that a Type 3 constraint $(\ne,S',S'')$ can be encoded as a Type 2 constraint, thereby providing retrospective motivation for the introduction of Type 2 constraints by Wang and Li.
In particular, we may encode $(\ne,S',S'')$ as $(\ne,s,S' \cup S'' \setminus \set{s})$ for some $s \in S' \cup S''$.
The equivalence of these two constraints is left as an exercise for the interested reader.
(Note that we may also encode this requirement as the counting constraint $(1,\card{S'}-1,S')$.)}
The binding-of-duty constraint $(=,S',S'')$ cannot be directly encoded using Type 2 constraints or counting constraints.

Now consider a business rule of the form ``two steps must be performed by members of the same organizational unit''.
The constraint relations $=$ and $\ne$ do not allow us to define such constraints.
In Section~\ref{sec:fpt-entailment-constraints}, we model constraints of this form using equivalence relations defined on the set of users.
In Section~\ref{sec:fpt-organizational-constraints}, we introduce a model for hierarchical organizational structures, represented in terms of multiple, related equivalence relations defined on the set of users.
We then consider constraints derived from such equivalence relations and the complexity of \wsp\ in the presence of such constraints.

Henceforth, we will write \wsp$(\rho_1,\dots,\rho_t)$ to denote a special case of \wsp\ in which all constraints have the form $(\rho_i,S',S'')$ for some $\rho_i \in \set{\rho_1,\dots,\rho_t}$ and for some $S',S'' \subseteq S$.
We will write $\wspi{i}(\rho_1,\dots,\rho_t)$ to denote a special case of \wsp$(\rho_1,\dots,\rho_t)$, in which there are no constraints of Type $j$ for $j > i$.
So $\wspi{1}(=,\ne)$, for example, indicates an instance of \wsp\ in which all constraints are of Type 1 and only includes constraints of the form $(=,s_1,s_2)$ or $(\ne,s_1,s_2)$ for some $s_1,s_2 \in S$.
For ease of exposition, we will consider counting constraints and entailment constraints separately.
Our results, however, hold when a workflow specification includes both types of constraints.
\end{newstuff}

\begin{newstuff}
\section{$\wspi{}$ and Fixed-Parameter Tractability}\label{sec:wsp-fpt}

In order to make the paper self-contained, we first provide a short overview of parameterized complexity, what it means for a problem to be fixed-parameter tractable, and summarize the results obtained by Wang and Li for $\wspi{}$.
We then introduce the notion of an eligible set of steps.
The identification of eligible sets is central to our method for solving WSP.
In the final part of this section, we state and prove a ``master'' theorem from which a number of useful results follow as corollaries.
The master theorem also provides useful insights into the structure of constraints that will result in instances of WSP that are fixed-parameter tractable.
\end{newstuff}

\subsection{Parameterized Complexity}\label{sec:fpt}

A na\"ive approach to solving \wsp\ would consider every possible assignment of users to steps in the workflow.
There are $n^k$ such assignments if there are $n$ users and $k$ steps, so an algorithm of this form would have (worst-case) complexity $O(cn^k)$, where $c$ is the number of constraints.
Moreover, Wang and Li showed that \wsp\ is NP-hard, by reducing {\sc Graph $k$-Colorability} to \wsp$(\ne)$~\cite[Lemma 3]{WangLi10}.
In short, \wsp\ is hard to solve in general.
The importance of finding an efficient algorithm for solving \wsp\ led Wang and Li to look at the problem from the perspective of parameterized complexity~\cite[\S 4]{WangLi10}.

Suppose we have an algorithm that solves an NP-hard problem in time $O(f(k)n^d)$, where $n$ denotes the size of the input to the problem, $k$ is some (small) parameter of the problem, $f$ is some function in $k$ only, and $d$ is some constant (independent of $k$ and $n$).
Then we say the algorithm is a \emph{fixed-parameter tractable} (FPT) algorithm.
If a problem can be solved using an FPT algorithm then we say that it is an \emph{FPT problem} and that it belongs to the class FPT.

Wang and Li showed, using an elementary argument, that $\wspi{2}(\ne)$ is FPT and can be solved in time $O(k^{k+1}N)$, where $N$ is the size of the entire input to the problem~\cite[Lemma 8]{WangLi10}.
They also showed that \mbox{$\wspi{2}(\ne,=)$} is FPT~\cite[Theorem 9]{WangLi10}, using a rather more complex approach: specifically, they constructed an algorithm that runs in time \mbox{$O(k^{k+1}(k-1)^{k2^{k-1}}N)$}; it follows that $\wspi{2}(=,\ne)$ is FPT.

When the runtime $O(f(k)n^d)$ is replaced by the much more powerful $O(n^{f(k)})$, we obtain the class XP, where each problem is polynomial-time solvable for any fixed value of $k$.
There is an infinite collection of parameterized complexity classes, $\text{W}[1], \text{W}[2],\dots$, with $\text{FPT} \subseteq \text{W[1]} \subseteq \text{W[2]} \subseteq \dots \subseteq \text{XP}$.
\begin{newstuff}
Informally, a parameterized problem belongs to the complexity class W[$i$] if there exists an FPT algorithm that transforms every instance of the problem into an instance of {\sc Weighted Circuit Satisfiability} for a circuit of weft $i$.
It can be shown that FPT is the class W[$0$].
The problems {\sc Independent Set} and {\sc Dominating Set} are in W[1] and W[2], respectively.
It is widely-believed and often assumed that $\text{FPT} \neq \text{W}[1]$.
For a more formal introduction to the W family of complexity classes, see~\citeN{FlumGrohe06}.
\end{newstuff}

\citeN[Theorem 10]{WangLi10} proved that WSP (for arbitrary relations defined on the user set) is W[1]-hard in general, using a reduction from {\sc Independent Set}.
By definition, FPT is a subset of W[1] and a parameterized analog of Cook's Theorem~\cite{DowneyFellows99} as well as the Exponential Time Hypothesis~\cite{FlumGrohe06,ImPaZa01} strongly support the widely held view that FPT is not equal to W[1].
One of the main contributions of this paper is to extend the set of special cases of WSP that are known to be FPT.

Henceforth, we often write $\widetilde{O}(T)$ instead of $O(T \log^d T)$ for any constant $d$.
That is, we use the notation $\widetilde{O}$ to suppress polylogarithmic factors.
This notation is often used in the literature on algorithms---see, for example, \citeN{BjHuKo09} and \citeN{KauKriRon03}---to avoid cumbersome runtime bounds.

\subsection{Eligible Sets}

The basic idea behind our results is to construct a valid plan by partitioning the set of steps $S$ into blocks of steps, each of which is allocated to a single (authorized) user.
More formally, let $\pi$ be a valid plan for a workflow \mbox{$(S,U,\leqslant,A,C)$} and define an equivalence relation $\sim_\pi$ on $S$, where $s \sim_\pi s'$ if and only if $\pi(s) = \pi(s')$.
We denote the set of equivalence classes of $\sim_\pi$ by $S/\pi$ and write $[s]_{\pi}$ to denote the equivalence class containing $s$.
\begin{newstuff}An equivalence class in $S/\pi$ comprises the set of steps that are assigned to a single user by plan $\pi$.\end{newstuff}
It is easy to see that there are certain ``forbidden'' subsets $S'$ of $S$ for which there cannot exist a valid plan $\pi$ such that $S' \in S/\pi$.
Consider, for example, the constraint $(\ne,s,s')$: then, for any valid plan $\pi$, it must be the case that $[s]_\pi \ne [s']_\pi$; in other words, there does not exist a valid plan $\pi$ such that $\set{s,s'} \in S/\pi$.
This motivates the following definition.

\begin{newstuff}
\begin{definition}
Given a workflow $(S,U,\leqslant,A,C)$ and a constraint $\gamma \in C$, a set $F \subseteq S$ is \emph{$\gamma$-ineligible} if any plan $\pi : S \rightarrow U$ such that $F \in S/\pi$ violates $\gamma$.
We say $F$ is \emph{eligible} if and only if it is not ineligible.
We say $F \subseteq S$ is \emph{$C$-ineligible} or simply \emph{ineligible} if $F$ is $\gamma$-ineligible for some $\gamma \in C$.
\end{definition}

A necessary condition for a valid plan is that no equivalence class is an ineligible set; equivalently, every equivalence class in a plan must be an eligible set.
For many constraints $\gamma$, we can determine whether $F \subseteq S$ is $\gamma$-ineligible or not in time polynomial in the number of steps.
Consider, for example, the requirement that no user executes more than $t$ steps: then $F \subseteq S$ is eligible if and only if $\card{F} \leqslant t$.
Similarly, we can test for the ineligibility of $F$ with respect to $(\ne,\set{s_1,s_2})$ by determining whether $F \supseteq \set{s_1,s_2}$.

\begin{definition}
We say a constraint $\gamma$ is \emph{regular} if any plan $\pi$ in which each equivalence class $[s]_\pi$ is an eligible set satisfies $\gamma$.
\end{definition}

The regularity of a constraint is a sufficient condition to guarantee that we can construct a valid plan using eligible sets.
With one exception, all constraints we consider are regular.

\begin{proposition}\label{pro:regular-constraints}
All counting constraints are regular and all entailment constraints of the form $(\ne,S_1,S_2)$ are regular.
Entailment constraints of the form $(=,S_1,S_2)$ are regular if at least one of $S_1$ and $S_2$ is a singleton set.
\end{proposition}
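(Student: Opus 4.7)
The plan is a case analysis over the three constraint families in the proposition. For each constraint $\gamma$, I will spell out what it means for a plan $\pi$ to satisfy $\gamma$, describe enough of the $\gamma$-ineligible sets to proceed, and then show that eligibility of every equivalence class $[s]_\pi$ forces $\pi$ to satisfy $\gamma$.

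For a counting constraint $\gamma = (t_\ell, t_r, S')$, the key observation is that whether a particular user $u$ causes a violation depends only on $\card{\pi^{-1}(u) \cap S'}$---a property of the single class $\pi^{-1}(u)$. Thus ``$F$ is $\gamma$-ineligible'' is equivalent to the purely local condition $\card{F \cap S'} \notin \set{0} \cup \set{t_\ell,\dots,t_r}$, so if every class of $\pi$ is eligible, every user's step count in $S'$ lies in $\set{0} \cup \set{t_\ell,\dots,t_r}$ and $\pi$ satisfies $\gamma$. For $\gamma = (\ne, S_1, S_2)$, satisfaction is equivalent to $\pi$ assigning different users to some $s_1 \in S_1$ and $s_2 \in S_2$, which fails iff all of $S_1 \cup S_2$ sits in a single class. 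Hence any $F \supseteq S_1 \cup S_2$ is ineligible, and eligibility of every class precludes a class covering $S_1 \cup S_2$, yielding satisfaction.

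For $\gamma = (=, s_1, S_2)$ (the singleton case, after swapping sides if necessary), satisfaction is equivalent to $[s_1]_\pi \cap S_2 \ne \emptyset$, i.e., the class of $s_1$ meets $S_2$. A set $F$ that contains $s_1$ but is disjoint from $S_2$ is therefore ineligible: any plan with $F \in S/\pi$ has $[s_1]_\pi = F$, so the intersection is empty and $\gamma$ is violated. If every class of $\pi$ is eligible then in particular $[s_1]_\pi$ is eligible, which forces $[s_1]_\pi \cap S_2 \ne \emptyset$ and thus satisfaction of $\gamma$. The only step requiring slight care is singling out the specific class $[s_1]_\pi$ (rather than an arbitrary class) to apply eligibility to, and this is also exactly where the singleton hypothesis is essential: without it, satisfaction of $(=, S_1, S_2)$ can require ``coordination'' between two distinct classes---one holding a witness from $S_1$, the other a witness from $S_2$, meeting on a common user---and per-class eligibility is then too weak to enforce such coordination.
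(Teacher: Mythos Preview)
Your proof is correct and follows the same case-by-case structure as the paper. Your treatment of $(\ne,S_1,S_2)$ is in fact cleaner than the paper's: you directly characterize violation as ``$S_1\cup S_2$ contained in a single class,'' whereas the paper argues by picking a step and finding another one outside its class without explicitly checking that the two witnesses can be taken from $S_1$ and $S_2$ respectively. For $(=,s_1,S_2)$ your argument matches the paper's exactly; the paper additionally exhibits a concrete counterexample for the non-singleton case, while you give the conceptual reason (satisfaction may require coordination between two classes)---either suffices, since the proposition only asserts the ``if'' direction.

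One small overstatement in the counting-constraint case: the biconditional you assert, namely that $F$ is $\gamma$-ineligible \emph{if and only if} $\card{F\cap S'}\notin\{0\}\cup\{t_\ell,\dots,t_r\}$, need not hold. The forward direction can fail: take $S'=S$, $t_\ell=t_r=3$, $\card{S}=5$, and $\card{F}=3$; then $\card{F\cap S'}=3$ is in range, yet no plan with $F$ as a class can satisfy $\gamma$ because the remaining two steps cannot be partitioned into blocks of size $3$, so $F$ is still ineligible. Fortunately your argument only uses the other implication (bad count $\Rightarrow$ ineligible, equivalently eligible $\Rightarrow$ good count), which is immediate, so the proof stands. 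Just replace ``is equivalent to'' with ``implies'' (in the direction you actually need) and you are fine; the paper itself is equally informal here, declaring the counting case ``trivial.''
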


\begin{proof}
The result is trivial for counting constraints.

Given an entailment constraint $(\ne,S_1,S_2)$, a plan $\pi$ in which all equivalence classes are eligible, and $[s]_\pi$ for some $s \in S_1 \cup S_2$, we have that $[s]_\pi \not\supseteq S_1 \cup S_2$ (since, by assumption, $[s]_\pi$ is eligible).
Hence, there exists an element $s' \in S_1 \cup S_2$ with $s' \not\in [s]_\pi$.
Since the equivalence classes in $S/\pi$ form a partition of $S$, there exists an equivalence class $[s']_\pi \ne [s]_\pi$.
Hence, the constraint is satisfied (since each equivalence class is assigned to a different user).
Thus the constraint is regular.

We demonstrate, by exhibiting a counterexample, that a partition of $S$ into eligible sets does not guarantee the satisfaction of a Type 3 constraint of the form $(=,S_1,S_2)$.
Consider, for example, $S = \set{s_1,s_2,s_3,s_4}$ and the constraint $(=,\set{s_1,s_2},\set{s_3,s_4})$.
Then $\set{s_1},\dots,\set{s_4}$ are eligible sets, but a plan in which $u_i$ is assigned to $s_i$ is not valid.

Finally, consider the Type 2 constraint $(=,s_1,S_2)$.
Any eligible set for this constraint that contains $s_1$ must contain an element of $S_2$.
Hence a partition of $S$ into eligible sets ensures that the constraint will be satisfied (and hence that the constraint is regular).
\end{proof}

\end{newstuff}

\subsection{Reducing \wsp\ to {\sc Max Weighted Partition}}

\begin{newstuff}
We now state and prove our main result.
We believe this result subsumes existing results in the literature on the complexity of WSP.
Moreover, the result considerably enhances our understanding of the types of constraints that can be used in a workflow specification if we wish to preserve fixed-parameter tractability of WSP.
We explore the consequences and applications of our result in Sections~\ref{sec:fpt-entailment-constraints} and~\ref{sec:fpt-organizational-constraints}.

\begin{theorem}\label{thm:master}
Let $W = (S,U,\leqslant,A,C)$ be a workflow specification such that
  \begin{inparaenum}[(i)]
    \item each constraint $\gamma$ is regular and
    \item there exists an algorithm that can determine whether $F \subseteq S$ is $\gamma$-eligible in time polynomial in $k$.
  \end{inparaenum}
Then the workflow satisfiability problem for $W$ can be solved in time $\widetilde{O}(2^k(c + n^2))$.
\end{theorem}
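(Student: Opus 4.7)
The plan is to recast WSP as a weighted set-partition problem and then apply the subset-convolution/inclusion-exclusion machinery of~\cite{BjHuKo09}, which is precisely what the section title promises. First, I would characterize valid plans in a \emph{labelled} form. For any plan $\pi:S\rightarrow U$, write $S_u := \pi^{-1}(u)$. Then $\pi$ is authorized iff $S_u\subseteq A_u$ for every $u$, and by the regularity hypothesis on $C$, $\pi$ satisfies every constraint iff each non-empty $S_u$ is $C$-eligible. Thus valid plans are in bijection with \emph{admissible labelled partitions} $(S_u)_{u\in U}$ of $S$: pairwise disjoint blocks indexed by users, with union $S$, such that each $S_u$ is either empty or both $C$-eligible and contained in $A_u$. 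Crucially, indexing blocks by users automatically enforces the fact that different equivalence classes of $\sim_\pi$ are assigned different users (this is the main conceptual subtlety and the point where an unlabelled partitioning formulation would fail).

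Next, I would encode admissibility as a subset convolution. For each $u\in U$ define $f_u:2^S\rightarrow\{0,1\}$ by $f_u(B)=1$ iff $B\subseteq A_u$ and ($B=\emptyset$ or $B$ is $C$-eligible). Then
\[
    N \;=\; \bigl(f_{u_1}\ast f_{u_2}\ast\cdots\ast f_{u_n}\bigr)(S)
\]
is exactly the number of valid plans, where $\ast$ denotes the disjoint (subset) convolution on $2^S$. In particular, WSP has a solution iff $N>0$, so it suffices to evaluate the iterated convolution at $S$.

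Finally, I would bound the running time. Precomputation: for each $B\subseteq S$ test $C$-eligibility once, using the polynomial-time oracle on each of the $c$ constraints, in $\widetilde{O}(2^k c)$ total; for each pair $(u,B)$ test $B\subseteq A_u$, in $\widetilde{O}(2^k\cdot n)$ with the containment matrix maintained bitwise. Convolutions: compose the $f_u$'s pairwise, each step handled by the fast subset-convolution of~\cite{BjHuKo09} in $\widetilde{O}(2^k)$; the linear-algebraic bookkeeping over the $n$ successive convolutions contributes an $\widetilde{O}(n^2 2^k)$ term. Summing yields the claimed $\widetilde{O}(2^k(c+n^2))$. To \emph{exhibit} a valid plan (not merely decide $N>0$), I would use self-reduction: for each step $s$ in turn, restrict $\pi(s)$ to a candidate user $u$ (i.e., shrink $A_{u'}$ by removing $s$ for $u'\neq u$), re-evaluate $N$, and commit once a positive value is found; storing intermediate convolution tables absorbs this into the same $\widetilde{O}$-bound. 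The main obstacle, and the reason I chose the labelled-partition encoding, is ensuring distinctness of users across blocks while still obtaining a $2^k$-type bound; an unlabelled encoding (weighting each block by the number of authorized users) over-counts because the same user could be reused on two blocks whose union may be ineligible, whereas the labelled convolution naturally treats each user as a single factor.
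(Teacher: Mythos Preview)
Your proposal is correct and essentially identical to the paper's proof: you define the same per-user indicator functions $f_u$ (the paper calls them $\phi_u$) and invoke the Bj\"orklund--Husfeldt--Koivisto machinery to combine them over a labelled $n$-partition of $S$. The only cosmetic difference is that the paper packages the combination step as an instance of {\sc Max Weighted Partition} (testing whether the optimum equals $n$) rather than as an iterated subset convolution (testing whether the count is positive), but these are the same computation under the hood and yield the same $\widetilde{O}(2^k(c+n^2))$ bound.
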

\end{newstuff}

The proof of this result reduces an instance of WSP to an instance of the {\sc Max Weighted Partition} problem, which, by a result of \citeN{BjHuKo09}, is FPT.
We state the problem and the relevant result, before proving Theorem~\ref{thm:master}.

\begin{center}
\fbox{%
        \begin{tabulary}{.95\textwidth}{@{}r<{~}@{}L@{}}
          \multicolumn{2}{@{}l}{\sc Max Weighted Partition}\\
          \emph{Input:} & A set $S$ of $k$ elements and $n$ functions $\phi_i$, $i\in [n]$, from $2^S$ to integers from the range $[-M,M]$ $(M\ge 1)$.\\
          \emph{Output:} & An $n$-partition $(F_1,\ldots ,F_n)$ of $S$ that maximizes $\sum_{i=1}^n\phi_i(F_i).$
        \end{tabulary}%
      }
\end{center}

\begin{theorem}[\citeN{BjHuKo09}]\label{thm:MWP}
{\sc Max Weighted Partition} can be solved in time $\widetilde{O}(2^{k}n^2M)$.
\end{theorem}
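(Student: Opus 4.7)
The plan is to express the optimum as an iterated max-sum subset convolution and to compute each such convolution in roughly $\widetilde{O}(2^k M)$ time via fast zeta and M\"obius transforms on the Boolean lattice. For $i \in [n]$ and $T \subseteq S$, define $\Phi_i(T) = \max\bigl\{\sum_{j \le i}\phi_j(F_j) : F_1 \uplus \cdots \uplus F_i = T\bigr\}$, with $\Phi_0(\emptyset)=0$ and $\Phi_0(T)=-\infty$ for $T \ne \emptyset$. Then $\Phi_i(T) = \max_{A \subseteq T}\bigl(\Phi_{i-1}(T \setminus A) + \phi_i(A)\bigr)$, which is exactly the max-sum subset convolution $\Phi_{i-1} \star \phi_i$; the desired optimum is $\Phi_n(S)$, so it suffices to perform $n-1$ such convolutions.

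To carry out a single convolution $f \star g$ with integer values in $[-W, W]$, I would linearize $\max$ by replacing each value $v$ with the monomial $x^{v+W}$ (equivalently, the indicator vector of length $2W+1$); max-sum then corresponds to ordinary subset convolution over this polynomial representation, and the original max value at $T$ is read off as the largest exponent at which the result has a non-zero coefficient. A single Boolean subset convolution on $2^S$ can be computed in $\widetilde{O}(2^k)$ time by a ranked zeta/M\"obius transform: one weights each subset $A$ by an auxiliary factor $x^{|A|}$ so that non-disjoint pairs contribute only to strictly higher-order terms, which can then be stripped by inversion. Running this coordinate-wise over the $O(W)$ polynomial coefficients gives a per-convolution cost of $\widetilde{O}(2^k W)$.

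Since the values of $\Phi_i$ lie in $[-iM, iM]$, computing $\Phi_i = \Phi_{i-1} \star \phi_i$ costs $\widetilde{O}(2^k \cdot iM)$; summing over $i = 1,\dots,n$ yields the claimed bound $\sum_{i=1}^n \widetilde{O}(2^k iM) = \widetilde{O}(2^k n^2 M)$ overall. The main obstacle is the second step: reducing subset convolution from the naive $O(3^k)$ enumeration of pairs $(A, T \setminus A)$ down to $\widetilde{O}(2^k)$ is non-trivial and relies on the M\"obius-inversion argument that separates disjoint from non-disjoint contributions by cardinality ranking, and one must also verify that embedding $\max$-$+$ semantics into an ordinary polynomial ring preserves correctness once we extract the top non-zero coefficient. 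Once these pieces are in place, the optimal partition itself is recovered by standard back-pointer bookkeeping stored during the convolutions, which adds only a polynomial factor absorbed in the $\widetilde{O}$ notation.
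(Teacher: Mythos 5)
The paper does not prove this theorem at all: it is imported verbatim from the cited work of Bj\"orklund, Husfeldt and Koivisto, and your sketch is essentially a faithful reconstruction of their argument --- the optimum as an iterated max-sum subset convolution $\Phi_i=\Phi_{i-1}\star\phi_i$, each convolution evaluated in $\widetilde{O}(2^kW)$ time via the ranked zeta/M\"obius transform after embedding max-plus values into a polynomial (or large-base integer) ring. The outline is correct, and the cost accounting $\sum_{i=1}^{n}\widetilde{O}(2^k\, iM)=\widetilde{O}(2^k n^2 M)$ matches the claimed bound.
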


\begin{newstuff}
\begin{proof}[of Theorem~\ref{thm:master}]
We construct a binary matrix with $n$ rows (indexed by elements of $U$) and $2^k$ columns (indexed by elements of $2^S$): every entry in the column labeled by the empty set is defined to be $1$; the entry indexed by $u \in U$ and $F \subseteq S$ is defined to be $0$ if and only if $F \ne \emptyset$ is $C$-ineligible or there exists $s \in F$ such that $(s,u) \not\in A$.
In other words, the non-zero matrix entry indexed by $u$ and $F$ defines a $C$-eligible set and $u$ is authorized for all steps in $F$, and thus represents a set of steps that could be assigned to a single user in a valid plan.

The matrix defined above encodes a family of functions $\set{\phi_u}_{u \in U}$, $\phi_u : 2^S \rightarrow \set{0,1}$.
We now solve {\sc Max Weighted Partition} on input $S$ and $\set{\phi_u}_{u \in U}$.
Given that $\phi_u(F) \leqslant 1$, $\sum_{u \in U} \phi_u(F_u) \leqslant n$, with equality if and only if we can partition $S$ into different $C$-eligible blocks and assigned them to different users.
Since each $\gamma$ is regular, $W$ is satisfiable if and only if MWP returns a partition having weight $n$.

We now consider the complexity of the above algorithm.
By assumption, we can identify the ineligible sets in  $O(c \cdot k^d \cdot 2^k) = \widetilde{O}(c 2^k)$ time for some integer $d$ independent of $k$ and $c$.
And we can check whether a user is authorized for all steps in $F \subseteq S$ in $O(k)$ time.
Thus we can construct the matrix in $O(2^k \cdot n \cdot k) = \widetilde{O}(2^k n)$ time.
Finally, we can solve {\sc Max Weighted Partition} in $\widetilde{O}(2^k n^2)$ time.
Thus, the total time required to solve WSP for $W$ is $\widetilde{O}(2^k (c + n + n^2)) = \widetilde{O}(2^k(c + n^2))$.
\end{proof}
\end{newstuff}

\begin{newstuff}
\begin{theorem}\label{thm:counting-constraints-fpt}
WSP is FPT for any workflow specification in which all the constraints are counting constraints.
\end{theorem}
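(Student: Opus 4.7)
The plan is to deduce Theorem \ref{thm:counting-constraints-fpt} as a direct corollary of Theorem \ref{thm:master}. To invoke the master theorem, I need to verify its two hypotheses for an arbitrary workflow specification whose constraint set $C$ consists entirely of counting constraints.

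First I would handle regularity (condition (i) of Theorem \ref{thm:master}). This is immediate: Proposition \ref{pro:regular-constraints} already asserts that every counting constraint is regular. So no extra work is required beyond citing it.

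Next I would verify the polynomial-time eligibility check (condition (ii)). Fix a counting constraint $\gamma = (t_\ell, t_r, S')$ and a candidate block $F \subseteq S$. Recall that a plan $\pi$ satisfies $\gamma$ exactly when every user either performs no step in $S'$ or performs between $t_\ell$ and $t_r$ steps of $S'$. If $F$ is assigned in its entirety to a single user $u$, then $u$ performs exactly $|F \cap S'|$ steps of $S'$. Consequently $F$ is $\gamma$-ineligible if and only if $|F \cap S'| \notin \{0\} \cup [t_\ell, t_r]$. Computing $|F \cap S'|$ and comparing against $t_\ell, t_r$ takes $O(k)$ time, so $\gamma$-eligibility can be tested in time polynomial in $k$, as required.

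With both hypotheses of Theorem \ref{thm:master} in force, the theorem applies and WSP on the given input is solved in time $\widetilde{O}(2^k(c+n^2))$, which is FPT in the parameter $k$. I do not anticipate any serious obstacle here: the only non-trivial work has already been absorbed into Theorem \ref{thm:master} and Proposition \ref{pro:regular-constraints}, and the eligibility test for counting constraints is structurally simple because a counting constraint imposes only a cardinality condition on the intersection of the block with $S'$, which is precisely the data preserved by collapsing a whole block onto one user.
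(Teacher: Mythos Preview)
Your proposal is correct and follows essentially the same route as the paper: invoke Proposition~\ref{pro:regular-constraints} for regularity, exhibit a polynomial-time eligibility test for counting constraints, and then apply Theorem~\ref{thm:master}. Your eligibility criterion $|F\cap S'|\in\{0\}\cup[t_\ell,t_r]$ is in fact stated slightly more carefully than the paper's version, which writes $t_\ell\leqslant |F|\leqslant t_r$ without the intersection with $S'$ or the zero case.
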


\begin{proof}
A plan $\pi : S \rightarrow L$ satisfies counting constraint $\gamma = (t_\ell,t_r,S')$ if a user performs either no steps in $S'$ or between $t_\ell$ and $t_r$ steps.
Hence, $F \subseteq S$ is eligible if and only if $t_\ell \leqslant \card{F} \leqslant t_r$, a test that can clearly be evaluated in $O(k)$ time.
The result now follows by Proposition~\ref{pro:regular-constraints} and Theorem~\ref{thm:master}.
\end{proof}

While the above result appears easy to state and prove, nothing was known about the complexity of incorporating such constraints into workflow specifications.
Moreover, counting constraints can be used to encode (Type 1) entailment constraints of the form $(\ne,s_1,s_2)$ and $\wspi{1}(\ne)$ is known to be NP-complete~\cite[Lemma 3]{WangLi10}.
Finally, counting constraints can encode requirements that cannot be expressed using entailment constraints.
Hence, WSP in the presence of counting constraints is at least as hard as $\wspi{1}(\ne)$.
Therefore, there is no immediate reason to suppose that WSP for counting constraints would be FPT.
In short, Theorem~\ref{thm:counting-constraints-fpt} is non-trivial, thus demonstrating the power of Theorem~\ref{thm:master}.

At first glance, it is perhaps surprising to discover that counting constraints have no effect on the fixed-parameter tractability of WSP.
However, on further reflection, the structure of the proof of Theorem~\ref{thm:master} suggests that \emph{any} constraint whose satisfaction is phrased in terms of the steps that a single user performs can be incorporated into a workflow specification without comprising fixed-parameter tractability.

It also becomes apparent that there are certain constraints whose inclusion may cause problems.
Any constraint whose satisfaction is defined in terms of the set of users that perform a set of steps may be problematic.
The requirement that a workflow be performed by at least three users, for example, cannot be encoded using the counting or entailment constraints we have defined in this paper.
Moreover, it is difficult to envisage an eligibility test for such a constraint and, if such a test exists, whether it can be evaluated in time polynomial in $k$.
However, we can express a constraint of this form as a counting constraint such that the original constraint is satisfied if the  counting constraint is satisfied.
Specifically, the requirement that a set of $S'$ steps be performed by at least $t$ users can be enforced by ensuring that each user performs no more than $(\card{S'}-1)/(t-1)$ steps.\footnote{Of course, this means that certain plans that do not violate the original requirement are invalid.  That is, the counting constraint ``over-enforces'' the original requirement.  See the work of \citeN{LiTrBi07} for further details on constraint rewriting of this nature.}
\end{newstuff}

\begin{newstuff}
\section{Entailment Constraints}\label{sec:fpt-entailment-constraints}

In this section we focus on workflow specifications that include only entailment constraints.
In doing so, we demonstrate further the power of Theorem~\ref{thm:master}.
We also show that the time complexity obtained in Theorem~\ref{thm:master} cannot be significantly improved even for a very special case of WSP.
We conclude with a discussion of and comparison with related work.

\subsection{$\wspi{}(\ne)$}

By Proposition~\ref{pro:regular-constraints}, any constraint $\gamma$ of the form $(\ne,S_1,S_2)$ is regular.
Moreover, there exists an easy test to determine whether $F \subseteq S$ is $\gamma$-ineligible.
Specifically, $F$ is $\gamma$-ineligible if and only if $F \supseteq S_1 \cup S_2$, since any plan that allocated a single user to the steps in $F$ would be invalid.
Hence, we can determine in time polynomial in the sizes of $F$, $S_1$ and $S_2$ (that is, in $k$) the eligibility of $\gamma$.

\begin{theorem}\label{thm:1level}
$\wspi{}(\ne)$ can be solved in time $\widetilde{O}(2^k(c + n^2))$.
\end{theorem}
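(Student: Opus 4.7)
The plan is to obtain this as a direct corollary of the master theorem (Theorem~\ref{thm:master}), since the theorem statement's time bound matches the master theorem's bound exactly, and all constraints in $\wspi{}(\ne)$ are of the form $(\ne, S_1, S_2)$. So essentially no new combinatorics is needed; the work is just verifying the two hypotheses of Theorem~\ref{thm:master}.

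First I would verify regularity. This is immediate from Proposition~\ref{pro:regular-constraints}, which explicitly states that every entailment constraint of the form $(\ne, S_1, S_2)$ is regular. No additional argument is required, since $\wspi{}(\ne)$ permits $S_1$ and $S_2$ to be arbitrary subsets of $S$ (including Type 3 constraints), and the proposition covers this case.

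Next I would verify the polynomial-time eligibility test. As noted in the paragraph preceding the theorem, a set $F \subseteq S$ is $\gamma$-ineligible for $\gamma = (\ne, S_1, S_2)$ if and only if $F \supseteq S_1 \cup S_2$: indeed, if $F$ contains all of $S_1 \cup S_2$ then assigning a single user to $F$ forces every pair $(s_1, s_2) \in S_1 \times S_2$ to receive the same user, violating $(\ne, S_1, S_2)$; conversely, if $F \not\supseteq S_1 \cup S_2$, then there exists $s \in (S_1 \cup S_2) \setminus F$, and this $s$ lies in a different block of any plan refining this partition, so some pair in $S_1 \times S_2$ receives distinct users. Checking whether $F \supseteq S_1 \cup S_2$ takes time $O(k)$, which is polynomial in $k$ as required.

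With both hypotheses verified, Theorem~\ref{thm:master} applies immediately and delivers the claimed runtime of $\widetilde{O}(2^k(c + n^2))$. There is no real obstacle here: the proof is a one-line invocation of the master theorem plus an observation already made informally in the preceding paragraph, and its primary purpose is to record explicitly that the general $(\ne, S_1, S_2)$ case (including Type 3 constraints, which were not handled by prior work) falls inside the scope of our FPT framework.
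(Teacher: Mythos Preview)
Your proposal is correct and matches the paper's proof essentially verbatim: the paper's proof is a two-line invocation of Theorem~\ref{thm:master}, relying on Proposition~\ref{pro:regular-constraints} for regularity and on the observation (stated in the paragraph just before the theorem) that $F$ is $(\ne,S_1,S_2)$-ineligible if and only if $F \supseteq S_1 \cup S_2$. You have in fact given slightly more detail on the converse direction of the ineligibility characterization than the paper does.
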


\begin{proof}
The result follows from Theorem~\ref{thm:master} and the fact that every constraint is regular and the eligibility of any constraint can be determined in time polynomial in $k$.
\end{proof}

Our next result asserts that it is impossible, assuming the well-known \emph{Exponential Time Hypothesis}~\cite{ImPaZa01}, to improve this result to any significant degree.

\begin{center}
\fbox{%
        \begin{tabulary}{.95\textwidth}{@{}L@{}}
          \sc Exponential Time Hypothesis \\
          There exists a real number $\epsilon > 0$ such that {\sc 3-SAT} cannot be solved in time $O(2^{\epsilon n})$, where $n$ is the number of variables.
        \end{tabulary}%
      }
\end{center}

\begin{theorem}\label{thm:lb}
Even if there are just two users, $\wspi{2}(\ne)$ cannot be solved in time $\widetilde{O}(2^{\epsilon k})$ for some positive real $\epsilon$, where $k$ is the number of steps, unless the Exponential Time Hypothesis fails.
\end{theorem}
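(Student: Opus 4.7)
The plan is to give a reduction from 3-SAT to $\wspi{2}(\ne)$ with exactly two users, in which the number of steps $k$ in the resulting WSP instance is linear in the number of variables of the 3-SAT instance. Under such a reduction, an $\widetilde{O}(2^{\epsilon k})$ algorithm for $\wspi{2}(\ne)$ would yield an $\widetilde{O}(2^{O(\epsilon)\,n})$ algorithm for 3-SAT, contradicting ETH for all sufficiently small $\epsilon > 0$.

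Given a 3-SAT formula with variables $x_1,\dots,x_n$ and clauses $C_1,\dots,C_m$, I would build the WSP instance as follows. Take $U = \set{u_T, u_F}$, interpreting the assigned user as a truth value. For each variable $x_i$, create a pair of steps $s_i^+, s_i^-$, each authorized for both users. Create one additional ``anchor'' step $s_F$ whose authorization in $A$ includes only $u_F$, so any authorized plan must have $\pi(s_F) = u_F$. Add a Type 1 constraint $(\ne, s_i^+, s_i^-)$ for each $i$, so that $\pi(s_i^+)$ and $\pi(s_i^-)$ always represent a literal and its negation. For each clause $C_j = l_{j,1} \cor l_{j,2} \cor l_{j,3}$, let $L_j = \set{\sigma(l_{j,k}) : k=1,2,3}$, where $\sigma(x_i) = s_i^+$ and $\sigma(\neg x_i) = s_i^-$, and add the Type 2 constraint $(\ne, s_F, L_j)$.

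For correctness, the Type 2 constraint $(\ne, s_F, L_j)$ is satisfied iff there exists $s \in L_j$ with $\pi(s) \ne \pi(s_F) = u_F$, i.e., iff at least one literal of $C_j$ evaluates to true under the assignment $x_i \mapsto [\pi(s_i^+) = u_T]$; the variable constraints ensure that this assignment is well defined, and every truth assignment is realized by some authorized plan. Thus the WSP instance is satisfiable iff the formula is. The construction is computable in time polynomial in $n+m$ and yields $k = 2n + 1$ steps over a two-user instance of $\wspi{2}(\ne)$. Hence an algorithm solving $\wspi{2}(\ne)$ in time $\widetilde{O}(2^{\epsilon k})$ would solve 3-SAT in time $\widetilde{O}(2^{\epsilon(2n+1)})$; taking $\epsilon < \epsilon_0/2$, where $\epsilon_0$ is the constant from ETH, contradicts ETH.

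The main obstacle, and the point requiring a small trick, is encoding a disjunctive 3-SAT clause using only the $\ne$ relation. With exactly two users, an unanchored $\ne$ constraint expresses the symmetric condition ``not all equal'', whereas a 3-SAT clause needs the asymmetric condition ``at least one literal is true''. Introducing the anchor step $s_F$ and pinning it to $u_F$ through the authorization relation breaks this symmetry and turns a single Type 2 $\ne$ constraint into exactly the disjunctive clause we need, keeping the number of steps linear in $n$.
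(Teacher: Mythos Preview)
Your proof is correct and takes a genuinely different route from the paper's. The paper first invokes the Sparsification Lemma to pass to $L$-{\sc Linear-3-SAT} (with $m \le Ln$), then gives a gadget reduction from $L$-{\sc Linear-3-SAT} to {\sc NAE-3-SAT} (each clause replaced by three NAE-clauses with two fresh variables and one shared variable $z$, yielding at most $2(1+L)n$ variables), and finally observes that with two users and \emph{full} authorization a Type~2 constraint $(\ne,s_\ell,\{s_p,s_q\})$ encodes a NAE clause directly, giving $k = 2n'$ steps where $n'$ is the number of NAE variables. You bypass both intermediate stages by exploiting the authorization relation: pinning the anchor step $s_F$ to $u_F$ via $A$ breaks the $u_T/u_F$ symmetry, so a single Type~2 constraint $(\ne,s_F,L_j)$ encodes an ordinary disjunctive clause rather than a NAE clause. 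Your argument is shorter and self-contained (no Sparsification Lemma, no NAE detour), and it yields the tighter step count $k = 2n+1$ directly in terms of the original 3-SAT variables. The price is that your reduction uses a restricted authorization relation at one step, whereas the paper's construction has $A = S \times U$; thus the paper's proof establishes the slightly stronger statement that the lower bound holds even when every user is authorized for every step.
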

The proof of this result can be found in the appendix.

\subsection{$\wspi{}(=)$}

Given a constraint $\gamma$ of the form $(=,S_1,S_2)$, any set $F$ that contains $S_1$ but no element of $S_2$ is ineligible; equally, any set $F$ that contains $S_2$ but no element of $S_1$ is ineligible.
Hence, we can determine $\gamma$-ineligibility in time polynomial in $k$ (as we only require subset inclusion and intersection operations on sets whose cardinalities are no greater than $k$).
However, a constraint $\gamma$ of the form $(=,S_1,S_2)$ is not necessarily regular (Proposition~\ref{pro:regular-constraints}).
Nevertheless, we have the following result.

\begin{theorem}\label{thm:1level-type3}
$\wspi{2}(=)$ can be solved in time $\widetilde{O}(2^k(c+n^2))$, where $k$ is the number of steps, $c$ is the number of constraints and $n$ is the number of users.
$\wspi{}(=)$ can be solved in time $\widetilde{O}(2^{k+c}(c+n^2))$.
\end{theorem}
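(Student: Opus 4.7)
The first assertion is essentially a direct consequence of the machinery already developed. By Proposition~\ref{pro:regular-constraints}, every Type~2 constraint of the form $(=,s_1,S_2)$ is regular, and the eligibility of a subset $F \subseteq S$ with respect to such a constraint can be tested in time polynomial in $k$: $F$ is $\gamma$-ineligible for $\gamma = (=,s_1,S_2)$ iff $s_1 \in F$ and $F \cap S_2 = \emptyset$, which is a test about membership and intersection in sets of size at most $k$. The hypotheses of Theorem~\ref{thm:master} are therefore met, and we obtain the $\widetilde{O}(2^k(c+n^2))$ bound immediately.

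The main obstacle for the second assertion is precisely the point flagged in the proof of Proposition~\ref{pro:regular-constraints}: a Type~3 constraint $(=,S_1,S_2)$ need not be regular. A partition of $S$ into eligible blocks can still fail the constraint because the constraint is existential across the blocks (we need \emph{some} block to intersect both $S_1$ and $S_2$), whereas Theorem~\ref{thm:master} only checks block-by-block eligibility. So a direct application of Theorem~\ref{thm:master} to $\wspi{}(=)$ is not available, and the $2^c$ overhead in the claimed bound must arise from a device for handling the Type~3 constraints.

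The plan is to reduce $\wspi{}(=)$ to an instance of $\wspi{2}(=)$ by introducing, for each Type~3 constraint $\gamma_i = (=,S_1^i,S_2^i)$, a new \emph{virtual step} $t_i$ that may be performed by any user and that is incomparable to every other step. We then replace $\gamma_i$ by the two Type~2 constraints $(=,t_i,S_1^i)$ and $(=,t_i,S_2^i)$. The resulting workflow has $k + c$ steps, at most $2c$ constraints (all of Type~1 or Type~2), and the same $n$ users. Correctness of the reduction is straightforward in both directions: if the original $\gamma_i$ is satisfied by a plan $\pi$ with $\pi(s_1) = \pi(s_2) = u$ for some $s_1 \in S_1^i$, $s_2 \in S_2^i$, we extend $\pi$ by setting $\pi(t_i) = u$, which satisfies both new constraints; conversely, any plan satisfying the two Type~2 constraints forces $\pi(s_1) = \pi(t_i) = \pi(s_2)$ for some such $s_1,s_2$, witnessing $\gamma_i$.

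Applying the first part of the theorem to this reduced instance yields a running time of $\widetilde{O}(2^{k+c}(2c + n^2)) = \widetilde{O}(2^{k+c}(c+n^2))$, as required. The only subtlety worth remarking on is that allowing every user to be authorized for each $t_i$ does not create spurious solutions: any assignment of $t_i$ that satisfies $(=,t_i,S_1^i)$ and $(=,t_i,S_2^i)$ automatically forces $\pi(t_i)$ to be a user who is authorized for (and actually performs) some step in $S_1^i$ and some step in $S_2^i$, so eligibility of the virtual step imposes no hidden constraints. This completes the plan.
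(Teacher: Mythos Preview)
Your proposal is correct and follows essentially the same approach as the paper: the first part is obtained directly from Theorem~\ref{thm:master} together with Proposition~\ref{pro:regular-constraints}, and the second part is handled by introducing one fresh universally-authorized dummy step per Type~3 constraint and replacing $(=,S_1,S_2)$ by the pair of Type~2 constraints linking the dummy step to $S_1$ and to $S_2$. The equivalence argument and the resulting $\widetilde{O}(2^{k+c}(c+n^2))$ bound match the paper's proof line for line.
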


\begin{proof}
The first result follows immediately from Theorem~\ref{thm:master} and Proposition~\ref{pro:regular-constraints}, since the latter result asserts that constraints of the form $(=,s_1,S_2)$ are regular.

To obtain the second result, we rewrite a Type 3 constraint $(=,S_1,S_2)$ as two Type $2$ constraints, at the cost of introducing additional workflow steps.
Specifically, we replace a Type $3$ constraint $(=,S_1,S_2)$ with the constraints $(=,S_1,s_{\rm new})$ and $(=,s_{\rm new},S_2)$, where $s_{\rm new}$ is a ``dummy'' step.
Every user is authorized for $s_{\rm new}$.
Observe that if we have a plan that satisfies $(=,S_1,S_2)$ then there exists a user $u$ and steps $s_1 \in S_1$ and $s_2 \in S_2$ such that $\pi(s_1) = \pi(s_2)$.
Hence we can find a plan that satisfies $(=,S_1,s_{\rm new})$ and $(=,s_{\rm new},S_2)$: specifically, we extend $\pi$ by defining $\pi(s_{\rm new}) = u$.
Similarly, if we have a plan that satisfies $(=,S_1,s_{\rm new})$ and $(=,s_{\rm new},S_2)$ then there exists a user $u$ and steps $s_1$ and $s_2$ such that $u = \pi(s_{\rm new}) = \pi(s_1) = \pi(s_2)$ and we may construct a valid plan for $(=,S_1,S_2)$.

The rewriting of a (Type 3) constraint $(=,S_1,S_2)$ requires the replacement of one Type 3 constraint with two Type 2 constraints and the creation of one new step.
In other words, we can derive an equivalent instance of $\wspi{2}(=)$ having no more than $c$ additional constraints and no more than $c$ additional steps.
Since Type 2 constraints are regular, the result now follows by Theorem~\ref{thm:1level}.
\end{proof}

\begin{corollary}\label{cor:type3-equality-constraints-fpt}
$\wspi{}(=)$ is FPT.
\end{corollary}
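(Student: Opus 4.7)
The plan is to deduce the corollary directly from Theorem~\ref{thm:1level-type3}. That theorem gives an algorithm solving $\wspi{}(=)$ in time $\widetilde{O}(2^{k+c}(c+n^2))$. At first glance this is not of the shape $f(k) \cdot \mathrm{poly}(n, c)$ required for fixed-parameter tractability in the parameter $k$, because of the $2^c$ factor in the exponent. The goal is therefore to show that, without loss of generality, $c$ itself can be bounded by a function of $k$ alone, so that the $2^c$ factor collapses into $f(k)$.

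The key observation is that the space of Type 3 equality constraints is finite once $k$ is fixed. A constraint $(=,S_1,S_2)$ is determined by its two defining subsets $S_1,S_2 \subseteq S$, and since the relation $=$ on $U$ is symmetric we may regard $(=,S_1,S_2)$ and $(=,S_2,S_1)$ as the same constraint. Hence the number of pairwise distinct such constraints is at most $\binom{2^k}{2}+2^k < 4^k$. Removing syntactically duplicated constraints clearly does not alter the set of valid plans and can be done in time polynomial in the input size. After this preprocessing, we may assume that $c \leqslant 4^k$.

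Substituting this bound into Theorem~\ref{thm:1level-type3} yields a running time of $\widetilde{O}(2^{k+4^k}(4^k+n^2))$, which is of the form $f(k)\cdot \widetilde{O}(n^2)$ for the function $f(k) = 4^k \cdot 2^{k+4^k}$. This is FPT in the parameter $k$, proving the corollary. There is no real obstacle here: the only subtlety is recognising that, although the runtime bound of Theorem~\ref{thm:1level-type3} carries $c$ in the exponent, $c$ is automatically bounded by a function of $k$ once duplicate constraints are discarded. (If one desired a better dependence on $k$, one could alternatively enumerate the $B_k \leqslant k^k$ partitions of $S$, check the existential condition of each Type 3 equality constraint against each partition, and solve the resulting user-to-block assignment by bipartite matching, giving an $k^{O(k)}\cdot\mathrm{poly}(n,c)$ algorithm; but the corollary as stated only requires FPT membership, which follows immediately from the argument above.)
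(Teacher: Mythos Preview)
Your proposal is correct and takes essentially the same approach as the paper: bound the number of distinct Type~3 equality constraints by a function of $k$ and then invoke Theorem~\ref{thm:1level-type3}. The only difference is that the paper first discards constraints with $S_1 \cap S_2 \neq \emptyset$ (which are trivially satisfied by any plan) and then counts ordered pairs of disjoint subsets to obtain the slightly sharper bound $c \leqslant \sum_{j=1}^{k}\binom{k}{j}2^{k-j} = 3^k$, whereas you count all unordered pairs of subsets to get $c < 4^k$; either bound suffices for FPT membership.
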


\begin{proof}
We may assume without loss of generality that $S_1 \cap S_2 = \emptyset$: the constraint is trivially satisfied if there exists $s \in S_1 \cap S_2$, since we assume there exists at least one authorized user for every step.
Hence, the number of constraints having this form is no greater than $\sum_{j=1}^k \binom{k}{j}2^{k-j} = 3^k$.
Hence, $\wspi{}(=)$ is FPT, since we can replace $2^{k+c}$ in the run-time by $2^{k+3^k}$, as required.
\end{proof}

\subsection{$\wspi{}(=,\ne)$ and Related Work}

We can combine the results of the previous sections in a single theorem.
Clearly, we could also incorporate counting constraints into this result.

\begin{theorem}
$\wspi{}(=,\ne)$ can be solved in time $\widetilde{O}(2^{k+c}(c+n^2))$.
\end{theorem}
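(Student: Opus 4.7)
The plan is to combine the techniques already established in Theorems~\ref{thm:1level} and~\ref{thm:1level-type3} and reduce the problem to an application of the master theorem (Theorem~\ref{thm:master}). Given an instance of $\wspi{}(=,\ne)$ with $k$ steps, $n$ users, and $c$ constraints, I would partition the constraint set $C$ into three groups: (a) the inequality constraints $(\ne,S_1,S_2)$ of arbitrary type; (b) the equality constraints of Type $1$ or $2$, i.e.\ $(=,s_1,S_2)$; and (c) the Type $3$ equality constraints $(=,S_1,S_2)$ where both $S_1$ and $S_2$ have size at least two. Groups (a) and (b) need no modification, since every constraint in them is regular by Proposition~\ref{pro:regular-constraints} and admits a polynomial-in-$k$ eligibility test by the arguments already given in the proofs of Theorems~\ref{thm:1level} and~\ref{thm:1level-type3}.

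The only obstacle is group (c): Proposition~\ref{pro:regular-constraints} shows that these Type $3$ equality constraints need not be regular, so Theorem~\ref{thm:master} does not apply to them directly. Here I would reuse the construction from the proof of Theorem~\ref{thm:1level-type3}: for each constraint $(=,S_1,S_2)$ in group (c), introduce a fresh dummy step $s_{\rm new}$, declare every user authorized for it, and replace the constraint with the two Type $2$ constraints $(=,S_1,s_{\rm new})$ and $(=,s_{\rm new},S_2)$. The equivalence between the original and rewritten instances was already verified in that earlier proof, so I would simply invoke it rather than rederive it.

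After processing, every constraint in the new instance is regular and its ineligibility can be decided in time polynomial in the number of steps; moreover, the new instance has at most $k+c$ steps and at most $2c$ constraints, while the user set is unchanged. Theorem~\ref{thm:master} then gives a running time of
\[
  \widetilde{O}\bigl(2^{k+c}(2c + n^2)\bigr) \;=\; \widetilde{O}\bigl(2^{k+c}(c + n^2)\bigr),
\]
which is the claimed bound. Since the rewriting itself is performed in $O(c)$ time and the construction of the matrix needed by Theorem~\ref{thm:master} dominates the remaining work, no additional care is required for bookkeeping. The proof is therefore essentially a one-line composition of the master theorem with the Type $3$ equality-constraint rewriting, with the main conceptual content already absorbed into the earlier results.
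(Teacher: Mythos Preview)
Your proposal is correct and is exactly the approach the paper has in mind: the theorem is stated as a direct combination of Theorems~\ref{thm:1level} and~\ref{thm:1level-type3}, and your write-up simply spells out that combination (rewrite each Type~3 equality constraint via a dummy step as in Theorem~\ref{thm:1level-type3}, observe that the remaining $\ne$ and Type~$1$/$2$ $=$ constraints are regular with polynomial eligibility tests, and invoke Theorem~\ref{thm:master} on the resulting instance with at most $k+c$ steps and at most $2c$ constraints).
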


The special case of the workflow satisfiability problem $\wspi{2}(\ne)$ was studied by Wang and Li from the perspective of fixed-parameter tractability; the complexity of their algorithm is $O(k^{k+1} N) = 2^{O(k \log k)} N$, where $N$ is the size of the input~\cite[Lemma 8]{WangLi10}.
\citeN{FeFrHeNaRo11} considered the fixed-parameter tractability of a special case of the \emph{constraint satisfaction problem}~\cite{Tsang93} in which all constraints have the same form; with these restrictions, the constraint satisfaction problem is identical to $\wspi{1}(\ne)$.
The algorithm of Fellows {et al.} has complexity $O(k! k n) = 2^{O(k \log k)} n$, where $n$ is the number of users~\cite[Theorem 3.1]{FeFrHeNaRo11}.
Our algorithm has complexity $\widetilde{O}(2^k(c+n^2)) = O(2^{k+d \log k}(c + n^2))$, where $d = O(1)$, which represents a considerable improvement in the term in $k$.

More significantly, \citeN[Theorem 9]{WangLi10} showed that $\wspi{2}(\ne,=)$ is FPT; the complexity of their algorithm is \mbox{$O(k^{k+1}(k-1)^{k2^{k-1}}n)$}.
Our algorithm to solve $\wspi{2}(=,\ne)$ retains the complexity $\widetilde{O}(2^k(c + n^2))$, which is clearly a substantial improvement on the result of Wang and Li.
Finally, we note that our results are the first to consider Type 3 constraints.

\subsection{Constraints Based on Equivalence Relations}

The work of~\citeN[\S 2]{cram:sacmat05} and of~\citeN[Examples 1, 2]{WangLi10} has noted that a constraint of practical interest is that users performing two steps must be from the same department.\footnote{However, little is known about the complexity of the \wsp\ when such constraints are used, a deficiency we address in the next section.}
In the workflow illustrated in Figure~\ref{fig:example-workflow} one might require, for example, that the two users who perform steps $s_3$ and $s_5$ belong to the same department.
Note, however, that we will still require that these two users be different.
More generally, we might wish to insist that the user who approves the purchase order (step $s_2$) belongs to the same department as the user who creates the order (step $s_1$).

In short, there are many practical situations in which some auxiliary information defines an equivalence relation on the set of users (membership of department, for example) where we may wish to require that two steps are performed by users belonging to either the same equivalence class or to different equivalence classes.
In this section, we introduce two relations that allow us to model organizational structures, in which users are partitioned (possibly at several levels) into different organizational units, such as departments.

\begin{newstuff}
Given an equivalence relation $\sim$ on $U$, a plan $\pi$ satisfies the constraint $(\sim,S_1,S_2)$ if there exist $s_1 \in S_1$ and $s_2 \in S_2$ such that $\pi(s_1)$ and $\pi(s_2)$ belong to the same equivalence class.
Similarly, a plan $\pi$ satisfies the constraint $(\not\sim,S_1,S_2)$ if there exist $s_1 \in S_1$ and $s_2 \in S_2$ such that $\pi(s_1)$ and $\pi(s_2)$  belong to different equivalence classes.
\end{newstuff}
Hence, the constraint $(\sim,s_3,s_5)$ would encode the requirement that the signing and countersigning of the goods received note must be performed by users belonging to the same equivalence class (department, in this example).
\begin{newstuff}
More generally, a constraint of the form $(\sim,s,s')$ represents a weaker constraint than one of the form $(=,s,s')$, since more plans satisfy such a constraint.
Conversely, a constraint of the form $(\nsim,s,s')$ is stronger than $(\ne,s,s')$, as it requires that the two users who perform $s$ and $s'$ are different and, in addition, they belong to different equivalence classes.
\end{newstuff}

\begin{theorem}\label{pro:1level-partition}
For any user set $U$ and any equivalence relation $\sim$ defined on $U$, $\wspi{}(\sim,\not\sim)$ is FPT.
\end{theorem}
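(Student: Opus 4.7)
The plan is to reduce $\wspi{}(\sim,\not\sim)$ to $\wspi{}(=,\ne)$ and then invoke the FPT result for the latter stated just above. I would first observe that a direct application of Theorem~\ref{thm:master} does \emph{not} work out of the box: constraints of the form $(\not\sim,S_1,S_2)$ are not generally regular. For instance, if $U=\{u_1,u_2\}$ with $u_1\sim u_2$ and $S=\{s_1,s_2\}$ with the single constraint $(\not\sim,s_1,s_2)$, then the singletons $\{s_1\},\{s_2\}$ are eligible, yet the plan $\pi(s_i)=u_i$ violates the constraint. So some preprocessing is needed to cast the problem into a form to which the existing machinery applies.

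The preprocessing step I would use is to \emph{coalesce each $\sim$-class into a super-user}. Letting $U_1,\dots,U_m$ be the equivalence classes of $\sim$ on $U$, I would construct an auxiliary instance $W'=(S,U',\leqslant,A',C')$, where $U'=\{U_1,\dots,U_m\}$; $(s,U_i)\in A'$ iff there exists $u\in U_i$ with $(s,u)\in A$; and each constraint $(\sim,S_1,S_2)\in C$ is replaced by $(=,S_1,S_2)$ while each $(\not\sim,S_1,S_2)\in C$ is replaced by $(\ne,S_1,S_2)$. Note that $|U'|\le n$, $|S|$ is unchanged, and the Type of each constraint is preserved.

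The key correctness claim to verify is that $W$ is satisfiable iff $W'$ is satisfiable. For the forward direction, given a valid plan $\pi$ for $W$, define $\pi'(s)=U_i$ when $\pi(s)\in U_i$; the authorization and constraint translations follow from the identity $\pi(s_1)\sim\pi(s_2)\iff\pi'(s_1)=\pi'(s_2)$. For the converse, given a valid plan $\pi'$ for $W'$ and each $s\in S$, simply pick any $u\in\pi'(s)$ authorized for $s$ (such a user exists by definition of $A'$) and set $\pi(s)=u$. The crucial observation, which I view as the only real subtlety, is that different steps assigned by $\pi'$ to the same super-user $U_i$ may be fulfilled by \emph{different} actual users in $U_i$, so no additional authorization constraint propagates across steps within a super-class; this is what makes the reduction work cleanly.

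Once the reduction is in place, applying the preceding theorem on $\wspi{}(=,\ne)$ to $W'$ decides satisfiability of $W'$, and hence of $W$, in time $\widetilde{O}(2^{k+c}(c+n^2))$, establishing that $\wspi{}(\sim,\not\sim)$ is FPT. (Moreover, the type-preservation of the reduction immediately yields the stronger statement that $\wspi{i}(\sim,\not\sim)$ reduces to $\wspi{i}(=,\ne)$ for $i=1,2,3$.)
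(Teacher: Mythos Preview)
Your proposal is correct and follows essentially the same approach as the paper: coalesce each $\sim$-class into a super-user, translate $(\sim,\not\sim)$ constraints to $(=,\ne)$ constraints, observe that satisfiability is preserved, and invoke the FPT result for $\wspi{}(=,\ne)$. Your write-up is in fact more explicit than the paper's---you justify why the master theorem does not apply directly and spell out both directions of the equivalence (including the observation that different steps mapped to the same super-user may be assigned to distinct real users)---whereas the paper simply asserts the equivalence and cites Theorem~\ref{thm:1level} and Corollary~\ref{cor:type3-equality-constraints-fpt}.
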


\begin{proof}
Consider an instance of the problem $\mathcal{W} = (S,U,\leqslant,A,C)$ and let $V_1,\dots,V_m$ be the equivalence classes of $\sim$.
Then consider the following workflow specification: $\mathcal{W}' = (S,U',\leqslant,A',C')$, where
  \begin{compactitem}
    \item $U' = \set{V_1,\dots,V_m}$;
    \item $A' \subseteq S \times U'$ and $(s,V_i) \in A'$ if there exists $u \in V_i$ such that $(s,u) \in A$;
    \item each constraint of the form $(\sim,S_1,S_2)$ in $C$ is replaced by $(=,S_1,S_2)$ in $C'$; and
    \item each constraint of the form $(\not\sim,S_1,S_2)$ in $C$ is replaced by $(\ne,S_1,S_2)$ in $C'$.
  \end{compactitem}
Observe that $\mathcal{W}$ is satisfiable if and only if $\mathcal{W}'$ is, and deciding the satisfiability of $\mathcal{W}'$ is FPT by Theorem~\ref{thm:1level} and Corollary~\ref{cor:type3-equality-constraints-fpt}.%
\end{proof}

Of course, we could also include counting constraints in the workflow specification.
Let us assume, for ease of explanation, that an equivalence relation partitions a user set into different organizational units.
  \begin{description}
    \item[Separation of duty] The constraint $(1,\set{s',s''})$ requires that users from different organizational units perform $s'$ and $s''$.  More generally, the constraint $(1,\card{S'}-1,S')$ requires that no single unit executes all the steps in $S'$.
    \item[Binding of duty] The constraint $(2,\set{s',s''})$ requires that users from the same organizational unit execute both $s'$ and $s''$.  More generally, the constraint $(\card{S'},S')$ requires that all steps in $S'$ are executed by users from the same unit.
  \end{description}
The other forms of counting constraints introduced in Section~\ref{sec:constraint-types} can be interpreted in analogous ways in the presence of an equivalence relation defined on the set of users.
\end{newstuff}

\section{Organizational Hierarchies}\label{sec:fpt-organizational-constraints}

We now show how we can use multiple equivalence relations to define an organizational hierarchy.
In Section~\ref{subsec:org-hierarchy-fpt}, we describe a fixed-parameter tractable algorithm to solve \wsp\ in the presence of constraints defined over such structures.


Let $S$ be a set.
An {\em $n$-partition} of $S$ is an $n$-tuple $(F_1,\ldots ,F_n)$ such that  $F_1\cup \cdots \cup F_n=S$ and $F_i\cap F_j=\emptyset$ for all $i\neq j\in [n]$.
We will refer to the elements of an $n$-partition as \emph{blocks}.%
\footnote{One or more blocks in an $n$-partition may be the empty set.}

\begin{definition}
Let $(X_1,\ldots ,X_p)$ and $(Y_1,\ldots Y_q)$ be $p$- and $q$-partitions of the same set.
We say that $(Y_1,\ldots Y_q)$ is a {\em refinement} of $(X_1,\ldots ,X_p)$ if for each $i\in [q]$ there exists $j\in [p]$ such that $Y_i\subseteq X_j.$

\end{definition}

\begin{definition}
Let $U$ be the set of users in an organization.
An \emph{organizational $\ell$-hierarchy} is a collection of $\ell$ partitions of $U$, $\xspbr{U}{1},\dots,\xspbr{U}{\ell}$, where  $\xspbr{U}{i}$ is a refinement of $\xspbr{U}{i+1}$.
\end{definition}

The $i$th partition is said to be the $i$th \emph{level} of the hierarchy.
Each member of $\xspbr{U}{i}$ is a subset of $U$; we write $\xspbr{u}{i}$ to denote a block in the $i$th level of the hierarchy.

A constraint of the form $(\sim_i,s_1,s_2)$, for example, is satisfied by plan $\pi$ if \mbox{$\pi(s_1),\pi(s_2) \in \xspbr{u}{i}$} for some \mbox{$\xspbr{u}{i} \in \xspbr{U}{i}$}.
Note, however, that we may still define a constraint \mbox{$(\ne,s_1,s_2)$} which requires that the steps $s_1$ and $s_2$ are performed by different users.

More generally, a constraint of the form $(\sim_i,S_1,S_2)$ is satisfied by plan $\pi$ if there exists $s_1 \in S_1$ and $s_2 \in S_2$ such that $\pi(s_1)$ and $\pi(s_2)$ belong to the same block in $\xspbr{U}{i}$.
A constraint of the form $(\not\sim_i,S_1,S_2)$ is satisfied by $\pi$ if there exist  $s_1 \in S_1$ and $s_2 \in S_2$ such that $\pi(s_1)$ and $\pi(s_2)$ belong to different blocks in $\xspbr{U}{i}$.
Note that if $\pi$ satisfies $(\sim_i,S_1,S_2)$, then it satisfies $(\sim_j,S_1,S_2)$ for all $j > i$.
Conversely, if $\pi$ satisfies $(\not\sim_i,S_1,S_2)$, then it also satisfies $(\not\sim_j,S_1,S_2)$ for all $j < i$.
In other words, for each $S_1,S_2 \subseteq S$, we may and will assume without loss of generality that there is at most one constraint of the form $(\sim_i,S_1,S_2)$ and at most one constraint of the form $(\not\sim_j,S_1,S_2)$.

We now introduce the notion of a canonical hierarchy.
Informally, each level of a canonical hierarchy is different, the top level comprises a single block and the bottom level comprises the set of all singleton blocks.
Two canonical hierarchies are shown in Figure~\ref{fig:canonical-org-hierarchies}, in which $a,\dots,j$ represent users and the rectangles define the partition blocks.
Note that each level is a refinement of the one above.

\begin{figure}[h]\centering
\hspace*{.1\textwidth}
\subfigure[]{\label{fig:org-hrcy-from-man-tree}%
\begin{tikzpicture}[blk/.style={rectangle,inner sep=2pt,draw,minimum height=.6cm,thin},%
                    pt/.style={inner sep=0pt},
                    x=.6cm,y=.6cm,scale=.75,transform shape]
  \node[blk,minimum width=6cm] at (0,6) {};
  \foreach \z in {6,5,4,3,2,1,0}{%
    \foreach \x/\y in {a/-4.5,b/-3.5,c/-2.5,d/-1.5,e/-0.5,f/0.5,g/1.5,h/2.5,i/3.5,j/4.5}
      \node[pt] at (\y,\z) {$\x$};}
  \node[blk,minimum width=5.4cm] at (-0.5,5) {}; \node[blk,minimum width=.6cm] at (4.5,5) {};
  \node[blk,minimum width=2.4cm] at (-3,4) {}; \node[blk,minimum width=3cm] at (1.5,4) {}; \node[blk,minimum width=.6cm] at (4.5,4) {};
  \node[blk,minimum width=2.4cm] at (-3,3) {}; \node[blk,minimum width=2.4cm] at (1,3) {}; \node[blk,minimum width=.6cm] at (3.5,3) {}; \node[blk,minimum width=.6cm] at (4.5,3) {};
  \node[blk,minimum width=2.4cm] at (-3,2) {};
  \node[blk,minimum width=.6cm] at (-0.5,2) {};
  \node[blk,minimum width=1.8cm] at (1.5,2) {};
  \node[blk,minimum width=.6cm] at (3.5,2) {};
  \node[blk,minimum width=.6cm] at (4.5,2) {};
  \node[blk,minimum width=1.8cm] at (-3.5,1) {};
  \node[blk,minimum width=.6cm] at (-1.5,1) {};
  \node[blk,minimum width=.6cm] at (-0.5,1) {};
  \node[blk,minimum width=1.2cm] at (1,1) {};
  \node[blk,minimum width=.6cm] at (2.5,1) {};
  \node[blk,minimum width=.6cm] at (3.5,1) {};
  \node[blk,minimum width=.6cm] at (4.5,1) {};
  \foreach \x in {-4.5,-3.5,-2.5,-1.5,-0.5,0.5,1.5,2.5,3.5,4.5}
    \node[blk,minimum width=.6cm] at (\x,0) {};
\end{tikzpicture}}
\hfill
\subfigure[]{\label{fig:alt-org-hierarchy}%
\begin{tikzpicture}[blk/.style={rectangle,inner sep=2pt,draw,minimum height=.6cm,thin},%
                    pt/.style={inner sep=0pt},
                    x=.6cm,y=.6cm,scale=.75,transform shape]
  \node[blk,minimum width=6cm] at (0,3) {};
  \foreach \z in {3,2,1,0}{%
    \foreach \x/\y in {a/-4.5,b/-3.5,c/-2.5,d/-1.5,e/-0.5,f/0.5,g/1.5,h/2.5,i/3.5,j/4.5}
      \node[pt] at (\y,\z) {$\x$};}
  \node[blk,minimum width=2.4cm] at (-3,2) {}; \node[blk,minimum width=3cm] at (1.5,2) {}; \node[blk,minimum width=.6cm] at (4.5,2) {};
  \node[blk,minimum width=2.4cm] at (-3,1) {};
  \node[blk,minimum width=.6cm] at (-0.5,1) {};
  \node[blk,minimum width=1.8cm] at (1.5,1) {};
  \node[blk,minimum width=.6cm] at (3.5,1) {};
  \node[blk,minimum width=.6cm] at (4.5,1) {};
  \foreach \x in {-4.5,-3.5,-2.5,-1.5,-0.5,0.5,1.5,2.5,3.5,4.5}
    \node[blk,minimum width=.6cm] at (\x,0) {};
\end{tikzpicture}}
\hspace*{.1\textwidth}
\caption{Two canonical organizational hierarchies}\label{fig:canonical-org-hierarchies}
\end{figure}
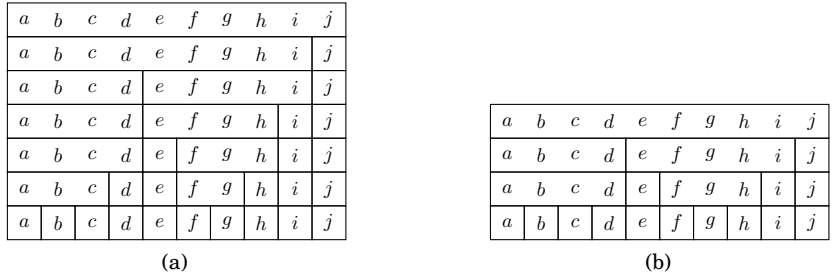

More formally, we have the following definition.

\begin{definition}
Let $\mathcal{H} = {\xspbr{U}{1},\dots,\xspbr{U}{\ell}}$, where $\xspbr{U}{i}$ is a refinement of $\xspbr{U}{i+1}$,  be a hierarchy.
We say $\mathcal{H}$ is \emph{canonical} if it satisfies the following conditions:
  \begin{inparaenum}[(i)]
    \item $\xspbr{U}{i} \ne \xspbr{U}{i+1}$;
    \item $\xspbr{U}{\ell}$ is a $1$-partition containing the set $U$;
    \item $\xspbr{U}{1}$ is an $n$-partition containing every singleton set (from $U$).
  \end{inparaenum}
\end{definition}

Let $\xspbr{U}{1},\dots,\xspbr{U}{\ell}$ be some hierarchy and let $C$ be a set of workflow constraints.
We conclude this section by showing how we may convert the hierarchy into a canonical hierarchy by first removing duplicate levels, adding suitable top and bottom levels (if required), and making appropriate adjustments to $C$.
More formally, we perform the following operations:
  \begin{compactitem}
    \item If $\xspbr{U}{i} = \xspbr{U}{i+1}$ for some $i$ then we replace all constraints of the form $(\sim_{i+1},S_1,S_2)$ and \mbox{$(\not\sim_{i+1},S_1,S_2)$} with constraints of the form $(\sim_i,S_1,S_2)$ and \mbox{$(\not\sim_i,S_1,S_2)$}, respectively.
          We then remove $\xspbr{U}{i+1}$ from the hierarchy as there are now no constraints that apply to $\xspbr{U}{i+1}$.
    \item If no partition in the hierarchy has one element (consisting of a single block $U$), then add such a partition to the hierarchy.
          Clearly every partition is a refinement of the $1$-partition $(U)$.
    \item If no partition in the hierarchy has $n$ elements, then add such a partition to the hierarchy.
          Clearly such a partition is a refinement of every other partition.
    \item Finally, we renumber the levels and the constraints where appropriate with consecutive integers.
  \end{compactitem}

The conversion of a hierarchy to canonical form can be performed in $O(\ell n+c)$ time (since we require $O(\ell n)$ time to find all layers that may be deleted and then delete them, and $O(c)$ time to update the constraints).
The number of levels in the resulting canonical hierarchy is no greater than $\ell + 2$.

\subsection{Organizational Hierarchies from Management Structures}

\newcommand{\ule}{\prec}
\newcommand{\uge}{\succ}
\newcommand{\uleq}{\preccurlyeq}
\newcommand{\ugeq}{\succcurlyeq}

We now illustrate how organization hierarchies may be constructed in a systematic fashion from management structures.
Given a set of users $U$, we assume that an organization defines a hierarchical binary relation $\ugeq$ on $U$ in order to specify management responsibilities and reporting lines.
We assume that the Hasse diagram of $(U,\ugeq)$ is a directed tree in which non-leaf nodes represent users with some managerial responsibility and edges are directed from root node to leaf nodes.
Let $G_{\rm man} = (U,E_{\rm man})$ denote the Hasse diagram of $(U,\ugeq)$.
The fact that $G_{\rm man}$ is a tree means that no user has more than one manager.
A user $u$ has direct responsibility for (or is the line manager of) user $v$ if $(u,v) \in E_{\rm man}$.
We also assume that the out-degree of a non-leaf node is at least two.

We now describe one method by which an organizational hierarchy may be derived from a management tree.
Given a management tree $G_{\rm man}$ we iteratively construct management trees with fewer and fewer nodes as follows:
  \begin{compactenum}[(1)]
    \item we first identify every sub-tree in which there is a single non-leaf node;
    \item for each such sub-tree we form a single leaf node whose label is formed from the labels for the respective leaf nodes;
    \item for each resulting sub-tree we form a single node whose label is formed from the labels of the child and parent nodes.
  \end{compactenum}
We then repeat for the resulting tree, terminating when we have a tree containing a single node.

The above procedure is illustrated in Figure~\ref{fig:man-tree-to-org-hierarchy}.
The figure shows a sequence of trees, the first of which defines the management tree in which each node is labeled with a single user.
Each management tree thus derived is associated with a partition; the corresponding partition of $U$ is written below each tree in Figure~\ref{fig:man-tree-to-org-hierarchy}, with a vertical bar indicating the block boundaries.
By construction, the collection of partitions forms a canonical organizational hierarchy.
The organizational hierarchy derived from the management tree in Figure~\ref{fig:man-tree-to-org-hierarchy} is displayed in Figure~\ref{fig:org-hrcy-from-man-tree}.
Note that the number of levels in the organizational hierarchy is equal to $2p + 1$, where $p$ is the number of edges in the longest directed path in $G_{\rm man}$.

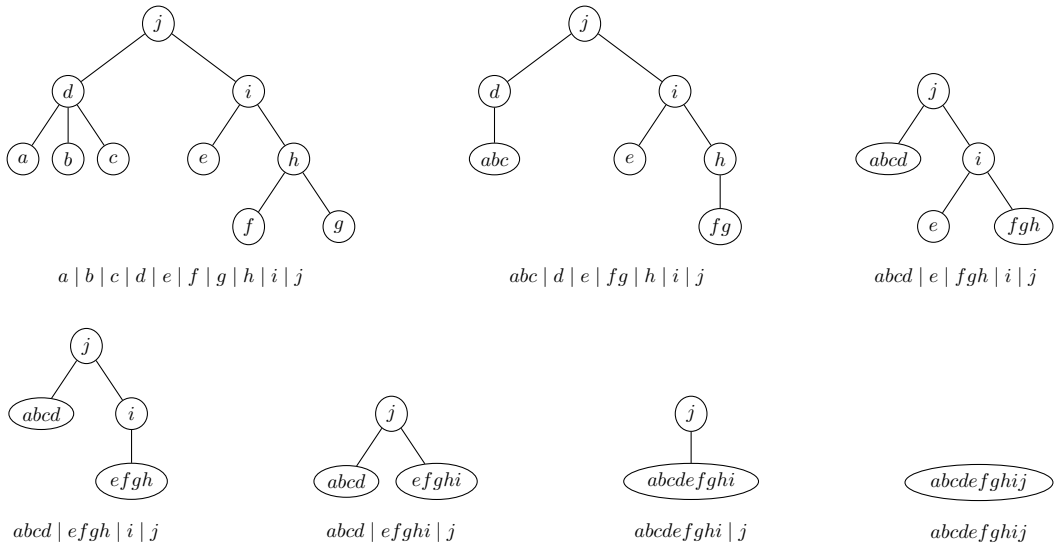
\begin{figure}[!ht]\centering
\begin{tikzpicture}[usr/.style={ellipse,inner sep=2pt,draw,thin,minimum width=16pt,minimum height=16pt},x=0.8cm,y=0.6cm,scale=.75,transform shape]
  \node[usr] (j) at (0,6) {$j$};
  \node[usr] (d) at (-2,4) {$d$};
  \node[usr] (i) at (2,4) {$i$};
  \node[usr] (a) at (-3,2) {$a$};
  \node[usr] (b) at (-2,2) {$b$};
  \node[usr] (c) at (-1,2) {$c$};
  \node[usr] (e) at (1,2) {$e$};
  \node[usr] (h) at (3,2) {$h$};
  \node[usr] (f) at (2,0) {$f$};
  \node[usr] (g) at (4,0) {$g$};
  \draw (j) -- (d); \draw (j) -- (i); \draw (d) -- (a); \draw (d) -- (b); \draw (d) -- (c); \draw (i) -- (e); \draw (i) -- (h); \draw (h) -- (f); \draw (h) -- (g);
  \node at (0.5,-1.5) {$a \mid b \mid c \mid d \mid e \mid f \mid g \mid h \mid i \mid j$};
\end{tikzpicture}
\hfill%
\begin{tikzpicture}[usr/.style={ellipse,inner sep=2pt,draw,thin,minimum width=16pt,minimum height=16pt},scale=.75,x=0.8cm,y=0.6cm,transform shape]
  \node[usr] (j) at (0,6) {$j$};
  \node[usr] (d) at (-2,4) {$d$};
  \node[usr] (i) at (2,4) {$i$};
  \node[usr] (abc) at (-2,2) {$abc$};
  \node[usr] (e) at (1,2) {$e$};
  \node[usr] (h) at (3,2) {$h$};
  \node[usr] (fg) at (3,0) {$fg$};
  \draw (j) -- (d); \draw (j) -- (i); \draw (d) -- (abc); \draw (i) -- (e); \draw (i) -- (h); \draw (h) -- (fg);
  \node at (0.5,-1.5) {$a b c \mid d \mid e \mid f g \mid h \mid i \mid j$};
\end{tikzpicture}
\hfill%
\begin{tikzpicture}[usr/.style={ellipse,inner sep=2pt,draw,thin,minimum width=16pt,minimum height=16pt},scale=.75,x=0.8cm,y=0.6cm,transform shape]
  \node[usr] (j) at (0,6) {$j$};
  \node[usr] (abcd) at (-1,4) {$abcd$};
  \node[usr] (i) at (1,4) {$i$};
  \node[usr] (e) at (0,2) {$e$};
  \node[usr] (fgh) at (2,2) {$fgh$};
  \draw (j) -- (abcd); \draw (j) -- (i); \draw (i) -- (e); \draw (i) -- (fgh);
  \node at (0.5,0.5) {$a b c d \mid e \mid f g h \mid i \mid j$};
\end{tikzpicture}
\hfill\\[12pt]%
\begin{tikzpicture}[usr/.style={ellipse,inner sep=2pt,draw,thin,minimum width=16pt,minimum height=16pt},scale=.75,x=0.8cm,y=0.6cm,transform shape]
  \node[usr] (j) at (0,6) {$j$};
  \node[usr] (abcd) at (-1,4) {$abcd$};
  \node[usr] (i) at (1,4) {$i$};
  \node[usr] (efgh) at (1,2) {$efgh$};
  \draw (j) -- (abcd); \draw (j) -- (i); \draw (i) -- (efgh);
  \node at (0,.5) {$a b c d \mid e f g h \mid i \mid j$};
\end{tikzpicture}
\hfill%
\begin{tikzpicture}[usr/.style={ellipse,inner sep=2pt,draw,thin,minimum width=16pt,minimum height=16pt},scale=.75,x=0.8cm,x=0.8cm,y=0.6cm,transform shape]
  \node[usr] (j) at (0,6) {$j$};
  \node[usr] (abcd) at (-1,4) {$abcd$};
  \node[usr] (efghi) at (1,4) {$efghi$};
  \draw (j) -- (abcd); \draw (j) -- (efghi);
  \node at (0,2.5) {$a b c d \mid e f g h i \mid j$};
\end{tikzpicture}
\hfill%
\begin{tikzpicture}[usr/.style={ellipse,inner sep=2pt,draw,thin,minimum width=16pt,minimum height=16pt},scale=.75,x=0.8cm,y=0.6cm,transform shape]
  \node[usr] (j) at (0,6) {$j$};
  \node[usr] (abcdefghi) at (0,4) {$abcdefghi$};
  \draw (j) -- (abcdefghi);
  \node at (0,2.5) {$a b c d e f g h i \mid j$};
\end{tikzpicture}
\hfill%
\begin{tikzpicture}[usr/.style={ellipse,inner sep=2pt,draw,thin,minimum width=16pt,minimum height=16pt},scale=.75,x=0.8cm,y=0.6cm,transform shape]
  \node[usr] (j) at (0,6) {$abcdefghij$};
  \node at (0,4.5) {$a b c d e f g h i j$};
\end{tikzpicture}
\caption{Building the blocks of an organizational hierarchy from a management tree}\label{fig:man-tree-to-org-hierarchy}
\end{figure}

Having constructed the organizational hierarchy, we may now define constraints on step execution.
We will use our purchase order workflow from Figure~\ref{fig:example-workflow} as an example and the organizational hierarchy in Figure~\ref{fig:org-hrcy-from-man-tree}.

We could, for example, define the constraint $(\sim_5,s_1,s_2)$.
In the absence of other constraints, this constraint means that users from the set $\set{a,b,c,d}$ or $\set{e,f,g,h,i}$ (which we might suppose represent two distinct departments within the management structure) or user $j$ could raise (step $s_1$) and approve (step $s_2$) purchase orders, but an attempt by a user from one department to approve an order raised by a member of another department would violate the constraint.

We could define a second constraint $(\nsim_4,s_1,s_2)$, which means that user $i$ must perform one of $s_1$ and $s_2$ (and also means that no user from $\set{a,b,c,d,j}$ can perform either $s_1$ or $s_2$ because there would be no way to simultaneously satisfy constraints $(\sim_5,s_1,s_2)$ and $(\nsim_4,s_1,s_2)$).
If we assume that junior members of the department (users $e$, $f$, $g$ and $h$) are not authorized to approve purchase orders, the collective effect of the two constraints above and the authorization policy is to require that
  \begin{inparaenum}[(a)]
    \item purchase orders are only approved by managers, and
    \item purchase orders are only raised by junior members of staff.
  \end{inparaenum}

Pursuing the last point briefly, it has long been recognized that a limitation of role-based access control is the ``feature'' that (senior) users assigned to the most powerful roles accrue all the permissions of more junior roles (see~\citeN{MoLu99}, for example).
It is interesting to note that the constraints and the method of constructing an organizational hierarchy described above can be used to restrict the steps that senior managers can perform.

In summary, we believe that our definition of organizational hierarchy provides an appropriate way of modeling hierarchical management structures and supports the specification of constraints that provide greater flexibility than those in the literature~\cite{BeFeAt99,cram:sacmat05,WangLi10}, which have focused on constraints involving only $=$ and $\ne$.
Moreover, as we will see in the next section, the complexity of \wsp\ for these new constraints remains fixed-parameter tractable.

Finally, we note that there are several ways in which the construction of an organizational hierarchy from a management tree described above could be modified.
At each iteration we could, for example, collapse the root node and all the leaf nodes into a single node.
In doing so, we remove the distinction between the line manager of an organizational unit and the remaining members of the unit.
If we adopt this approach for the management tree in Figure~\ref{fig:man-tree-to-org-hierarchy}, we derive the organizational hierarchy shown in Figure~\ref{fig:alt-org-hierarchy}.
Clearly this construction results in fewer layers in the organizational hierarchy (equal to $p+1$, where $p$ is the length of the longest directed path in the management tree) and, therefore, supports fewer choices of workflow constraints.

Each method will give rise to different organizational hierarchies, some with more levels, some with fewer, with each hierarchy allowing for the specification of a different set of constraints.
The method used to construct an organizational hierarchy will usually depend on the workflow, the organization and the type of constraints that are required.
\begin{newstuff}
An alternative approach to both those described above would be to ``stratify'' the management tree into levels and, working from the bottom level up, collapse all departments at a specific level into single nodes.
Using the management tree in Figure~\ref{fig:man-tree-to-org-hierarchy}, for example, the users $f$ and $g$ form the lowest level in the stratified tree and would be merged into a single unit first; this would be followed by the merging of users $a$, $b$ and $c$ and of $e$, $f$, $g$ and $h$.
The resulting canonical hierarchy will be rather similar to the one depicted in Figure~\ref{fig:org-hrcy-from-man-tree}, although the departments will form at different levels in the new hierarchy.
\end{newstuff}
The study of such hierarchies and the utility of the constraints that can be defined over them will be the subject of future work.

\subsection{Organizational Hierarchy Constraints}\label{subsec:org-hierarchy-fpt}

We have seen that if we are given a single equivalence relation and only use the binary relations $\sim$ and $\not\sim$ then $\wspi{2}(\sim,\not\sim)$ may be transformed into an instance of \mbox{$\wspi{2}(=,\ne)$}, which is known to be FPT.
We prove in Theorem~\ref{thm:llevel} that the problem remains in FPT for organizational hierarchies with $\ell$ levels (defined by $\ell$ equivalence relations).
In fact, the results in Theorem~\ref{thm:1level} and Proposition~\ref{pro:1level-partition} correspond to special cases of Theorem~\ref{thm:llevel}, in which the hierarchy has two levels.
Figure~\ref{fig:two-level-hierarchies} illustrates these hierarchies, where each user is represented by an unfilled circle and blocks of users are enclosed by a rectangle.
Conversely, it is these special cases that provide the foundation for the bottom-up iterative method that we use in the proof of Theorem~\ref{thm:llevel} to solve \wsp\ for more complex hierarchical structures.

\begin{figure}[h]
  \subfigure[Single users]{
  \begin{minipage}{.45\textwidth}\centering
    \begin{tikzpicture}[rct/.style={rectangle,draw,fill=black!20,minimum width=0.4cm,minimum height=0.5cm},%
                        crc/.style={circle,draw,inner sep=0pt,fill=white,minimum width=2mm}]
      \node[rct,minimum width=3.5cm] (rtop) at (0,1) {};
      \foreach \x/\y in {-1.5/0,-1/1,-0.5/2,0/3,0.5/4,1/5,1.5/6}
        \node[rct] (r\y) at (\x,0) {};
      \foreach \x in {0,1}{
        \foreach \y in {-1.5,-1,-0.5,0,0.5,1,1.5}
          \node[crc] at (\y,\x) {};
      }
      \foreach \x in {0,1,2,3,4,5,6}
        \draw (rtop) -- (r\x.north);
    \end{tikzpicture}
    \vspace*{6pt}
    \end{minipage}
  }
  \hfill
  \subfigure[Non-trivial equivalence relation]{
    \begin{minipage}{.5\textwidth}\centering
    \begin{tikzpicture}[rct/.style={rectangle,draw,fill=black!20,minimum width=0.4cm,minimum height=0.5cm},%
                        crc/.style={circle,draw,inner sep=0pt,fill=white,minimum width=2mm}]
      \node[rct,minimum width=3.5cm] (rtop) at (0,1) {};
      \foreach \x/\y/\z in {-1.5/0/0.4cm,-0.75/1/0.9cm,0.75/2/1.9cm}
        \node[rct,minimum width=\z] (r\y) at (\x,0) {};
      \foreach \x in {0,1}{
        \foreach \y in {-1.5,-1,-0.5,0,0.5,1,1.5}
          \node[crc] at (\y,\x) {};
      }
      \foreach \x in {0,1,2}
        \draw (rtop) -- (r\x.north);
    \end{tikzpicture}
    \vspace*{6pt}
    \end{minipage}
  }
\caption{Two-level hierarchies}\label{fig:two-level-hierarchies}
\end{figure}
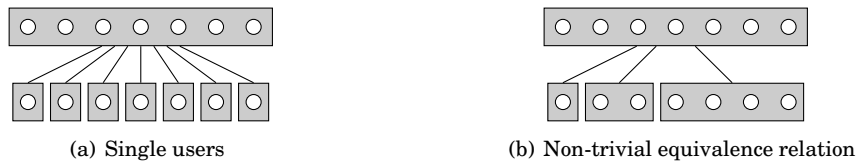

Recall Wang and Li proved that \wsp\ is not FPT, in general.
One crucial factor in determining the complexity of \wsp\ is the nature of the binary relations used to define entailment constraints.
Informally, Wang and Li showed that for a particular choice of relational structure on the user set, \wsp\ is an instance of {\sc Independent Set}, which is known to be W[1]-complete.
Constraints based on equivalence relations, however, do not compromise the fixed parameter tractability of \wsp\ because of the particular structure that is imposed on the user set---namely, a partition into no more than $2^k$ blocks.

Before proving the main result of this section, we consider canonical hierarchies with exactly three levels.
There are several reasons for doing so:
  \begin{itemize}
    \item if we are given a non-trivial equivalence relation $\sim$ and we are interested in \mbox{\wsp$(=,\ne,\sim,\nsim)$} then there are three levels in the organizational hierarchy;
    \item constraints containing $\sim$ and $\nsim$ have useful applications for many types of authorization policies; and
    \item three-level hierarchies represent the ``tipping point'' at which \wsp\ becomes hard, in the sense that no polynomial kernel exists (see Section~\ref{sec:kernelization}).
  \end{itemize}

There are several situations in which we may have a single non-trivial equivalence relation.
Perhaps the most obvious one arises when a set of users is grouped into distinct departments or organizational units, as we have previously noted.
Other possibilities arise from a natural re-interpretation of the authorization relation $A \subseteq S \times U$: specifically, we define $u \sim_A u'$ if and only if $u$ and $u'$ are authorized for the same workflow steps.
Then there are a maximum of $2^k$ equivalence classes (each associated with a particular subset of workflow steps).
In a role-based view of authorization~\cite{SaCoFeYo96}, a set of permissions (such as execution of workflow steps) defines a role.
With this interpretation, a constraint of the form $(\sim,s_1,s_2)$ requires that $s_1$ and $s_2$ are performed by users that are assigned to the same role(s), with an analogous interpretation for $(\nsim,s_1,s_2)$.\footnote{Of course, we could replace $A$ with user- and permission-role assignment relations, but we could still derive the same equivalence classes.}

We may also consider an authorization policy that associates users and workflow steps with a security label, as in the Bell-LaPadula security model~\cite{bell:secu76}.
More formally, let $(L,\leqslant)$ be a partially ordered set of security labels and $\lambda : U \cup S \rightarrow L$ a function that associates each user and step with a security label.
Then a user is authorized to perform step $s$ if and only if $(s,u) \in A$ and $\lambda(u) \geqslant \lambda(s)$.
Clearly $\sim_\lambda$, where $u \sim_\lambda u'$ if and only if $\lambda(u) = \lambda(u')$ is an equivalence relation.
The constraint $(\sim,s_1,s_2)$ requires that steps $s_1$ and $s_2$ be performed by users with the same security clearance.
In short, there seem to be a number of situations in which the use of constraints defined by equivalence relations will be useful.%

\begin{theorem}\label{thm:llevel}
Given a workflow $(S,U,\leqslant,A,C)$ and a canonical hierarchy with $\ell$ levels, $\wspi{2}(\sim_1,\not\sim_1,\dots,\sim_{\ell},\not\sim_\ell)$ can be solved in time \mbox{$\widetilde{O}(3^k n(c + n))$}, where $n$, $k$ and $c$ are the numbers of users, steps and constraints, respectively.
\end{theorem}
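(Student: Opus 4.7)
My plan is to extend the MWP-based framework of Theorem~\ref{thm:master} to accommodate hierarchical constraints by considering eligibility at every level of the canonical hierarchy simultaneously.

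First, I would observe that a plan $\pi$ induces a chain of nested partitions of $S$: at the finest level, the partition $S/\pi$ groups steps by the user assigned to them; at each coarser level $i$, two steps $s$ and $s'$ lie in the same block if and only if $\pi(s)$ and $\pi(s')$ belong to the same block of $\xspbr{U}{i}$. A Type 2 constraint $(\sim_i, s_1, S_2)$ is then equivalent to requiring that $s_1$ and some $s \in S_2$ share a block at level $i$, while $(\not\sim_i, s_1, S_2)$ requires that some $s \in S_2$ lies in a different level-$i$ block from $s_1$. This reformulation exposes each hierarchical constraint as a condition on a single level of the induced partition chain.

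Second, I would argue that at any fixed level $i$, the hierarchical constraints reduce to $(=,\ne)$-style constraints in which the blocks of $\xspbr{U}{i}$ play the role of ``super-users'': $\sim_i$ corresponds to $=$ and $\not\sim_i$ corresponds to $\ne$. By Proposition~\ref{pro:regular-constraints}, Type 2 instances of these constraints are regular at the level at which they are interpreted, and a straightforward check establishes ineligibility in time polynomial in $k$. In particular, a set $F$ failing a Type 2 $(\sim_i, s_1, S_2)$ constraint as a level-$i$ block must satisfy $s_1 \in F$ and $F \cap S_2 = \emptyset$, while a set failing $(\not\sim_i, s_1, S_2)$ as a level-$i$ block must contain $\{s_1\} \cup S_2$; both conditions are local to $F$ and the constraint.

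Third, I would combine the level-by-level analyses into a single MWP-like computation by enumerating pairs $(F, F')$ with $F' \subseteq F \subseteq S$, where $F$ represents the set of steps lying in a common level-$i$ block and $F'$ represents a sub-block at the finer level below (ultimately, the set of steps assigned to a single user). A weight function over such pairs captures authorization and eligibility at both levels; a suitable aggregation across all levels of the canonical hierarchy then reduces the full problem to a single partition computation analogous to the one underlying Theorem~\ref{thm:MWP}. The runtime $\widetilde{O}(3^k n(c+n))$ then arises from the identity $\sum_{F \subseteq S} 2^{|F|} = 3^k$, which counts the pairs $(F, F')$ enumerated, multiplied by the cost of populating the pair-eligibility table (contributing the $n$ and $c$ factors) and running the subset convolution over blocks of $\xspbr{U}{1}$ (contributing the additional factor of $n$).

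The main obstacle will be ensuring that the level-by-level eligibility information can be aggregated without a multiplicative blow-up in $\ell$, so that the runtime stays $\widetilde{O}(3^k n(c+n))$ independent of the number of levels. The canonical form of the hierarchy---in particular, the fact that $\xspbr{U}{1}$ consists of singletons and that $\xspbr{U}{\ell}$ is a single block---should allow the $\ell$ levels to be folded into a single convolution over $(F,F')$-pairs, with the chain $\xspbr{U}{1} \subsetneq \cdots \subsetneq \xspbr{U}{\ell}$ of partitions handled by a precomputation over blocks rather than a per-level MWP call.
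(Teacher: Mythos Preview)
Your first two observations are sound and match the paper's setup: at each level the $\sim_i$/$\not\sim_i$ constraints behave like $=$/$\ne$ constraints on the blocks of $\xspbr{U}{i}$, the Type~2 versions are regular there, and eligibility of a candidate block is testable in time polynomial in $k$. Your accounting for the $3^k$ term via $\sum_{F\subseteq S}2^{|F|}=3^k$ is also the right arithmetic.

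The gap is in your third step. A pair $(F,F')$ with $F'\subseteq F$ records only two adjacent levels of the induced chain of partitions; it carries no information about what happens between level $i$ and level $1$ when $\ell>2$. Your proposal to ``fold the $\ell$ levels into a single convolution over $(F,F')$-pairs'' is exactly the hard part, and you leave it as a hope (``should allow\ldots''). Nothing in the canonical form by itself collapses the intermediate levels: a set $F$ can be eligible as a level-$i$ block and yet admit no legal refinement into level-$(i{-}1)$ blocks, each of which in turn admits a legal refinement, and so on down to single users. That nested feasibility is what must be computed, and a two-layer table over $(F,F')$ does not encode it.

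The paper does \emph{not} attempt a single convolution. Instead it identifies the \emph{significant blocks} of the hierarchy (those that change when passing to the next level), observes that under inclusion they form a tree with at most $2n-1$ nodes regardless of $\ell$, and runs a bottom-up dynamic program along this tree. For each significant block $V$ with level range $[a,b]$ and each $F\subseteq S$ it solves a derived instance ${\cal I}_{F,V}$: the ``users'' are the children $\Delta(V)$, and the MWP matrix entry for $(G,W)$ is nonzero precisely when $G$ is eligible at the levels in $[a,b]$ \emph{and} the already-computed child instance ${\cal I}_{G,W}$ is a yes-instance. Thus each node triggers its own MWP call; the $3^k$ factor arises because, for a fixed $V$ with $m$ children, summing $\widetilde{O}(2^{|F|}(c+m^2))$ over all $F\subseteq S$ gives $\widetilde{O}(3^k(c+m^2))$, and the bound $\sum_V m_V^2=O(n^2)$ over the tree yields the global $\widetilde{O}(3^k n(c+n))$. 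The independence from $\ell$ comes from the tree having $O(n)$ nodes, not from any flattening into one partition problem. Your proposal is missing precisely this recursive structure.
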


Theorem~\ref{thm:1level} is, essentially, a special case of the above result, in which the canonical hierarchy contains two levels, where $\xspbr{U}{1} = (\set{u_1},\dots,\set{u_n})$ and $\xspbr{U}{2} = (U)$.
\begin{newstuff}
To prove Theorem~\ref{thm:llevel}, we identify particular types of blocks in the hierarchy (those shaded in Figure~\ref{fig:hierarchy-canonical}) and solve multiple instances of \wsp\ for each of those ``significant'' blocks.
The results for significant blocks at a particular level are then used to solve instances of \wsp\ for significant blocks at higher levels in the hierarchy.
\end{newstuff}

\begin{figure}[h]\centering
\begin{tikzpicture}[blk/.style={rectangle,inner sep=2pt,draw,minimum height=.6cm,thin},%
                    sblk/.style={rectangle,fill=black!20,inner sep=2pt,draw,minimum height=.6cm,thin},
                    pt/.style={inner sep=0pt},
                    x=.6cm,y=.6cm,scale=.75,transform shape]
  \node[sblk,minimum width=6cm] at (0,6) {};
  \node[sblk,minimum width=5.4cm] at (-0.5,5) {};
  \node[sblk,minimum width=.6cm] at (4.5,5) {};
  \node[sblk,minimum width=2.4cm] at (-3,4) {}; \node[sblk,minimum width=3cm] at (1.5,4) {}; \node[blk,minimum width=.6cm] at (4.5,4) {};
  \node[blk,minimum width=2.4cm] at (-3,3) {}; \node[sblk,minimum width=2.4cm] at (1,3) {}; \node[sblk,minimum width=.6cm] at (3.5,3) {}; \node[blk,minimum width=.6cm] at (4.5,3) {};
  \node[blk,minimum width=2.4cm] at (-3,2) {};
  \node[sblk,minimum width=.6cm] at (-0.5,2) {};
  \node[sblk,minimum width=1.8cm] at (1.5,2) {};
  \node[blk,minimum width=.6cm] at (3.5,2) {};
  \node[blk,minimum width=.6cm] at (4.5,2) {};
  \node[sblk,minimum width=1.8cm] at (-3.5,1) {};
  \node[sblk,minimum width=.6cm] at (-1.5,1) {};
  \node[blk,minimum width=.6cm] at (-0.5,1) {};
  \node[sblk,minimum width=1.2cm] at (1,1) {};
  \node[sblk,minimum width=.6cm] at (2.5,1) {};
  \node[blk,minimum width=.6cm] at (3.5,1) {};
  \node[blk,minimum width=.6cm] at (4.5,1) {};
  \foreach \x in {-4.5,-3.5,-2.5,0.5,1.5}
    \node[sblk,minimum width=.6cm] at (\x,0) {};
  \foreach \x in {-1.5,-0.5,2.5,3.5,4.5}
    \node[blk,minimum width=.6cm] at (\x,0) {};
  \foreach \z in {6,5,4,3,2,1,0}{%
    \foreach \x/\y in {a/-4.5,b/-3.5,c/-2.5,d/-1.5,e/-0.5,f/0.5,g/1.5,h/2.5,i/3.5,j/4.5}
      \node[pt] at (\y,\z) {$\x$};}
\end{tikzpicture}
  \caption{The canonical hierarchy of Figure~\ref{fig:man-tree-to-org-hierarchy} with its significant blocks shaded}\label{fig:hierarchy-canonical}
\end{figure}

\begin{proof}[of Theorem~\ref{thm:llevel}]
Each level in a canonical hierarchy is a refinement of the one below it and no two levels are equal, so we have
  $
    n = |\xspbr{U}{1}| > \dots > |\xspbr{U}{\ell}| = 1,
  $
and we may conclude that $\ell \leqslant n$.

We say $V \in \xspbr{U}{i}$  is \emph{significant} if $V \not\in \xspbr{U}{i+1}$.
We define the \emph{level range} of $V$ to be an interval $[a,b]$, where $a$ is the least value $i$ such that $V \in \xspbr{U}{i}$ and $b$ is the largest value $i$ such that $V \in \xspbr{U}{i}$.
The level range of block $\set{a,b,c,d}$ in Figure~\ref{fig:hierarchy-canonical} is $[3,5]$, for example.

Each significant block $V$ with level range $[a,b]$, $a > 1$, can be partitioned into blocks in level $(a-1)$.
We denote this set of blocks by $\children{V}$.
Each significant block $V$ with level range $[1,b]$ comprises a single user (see Figure~\ref{fig:hierarchy-canonical}).
It is easy to see that the graph $G = (\mathcal{V},E)$, where $\cal V$ is the set of significant blocks and $(V_1,V_2) \in E$ if $V_1 \in \children{V_2}$, is a tree\begin{newstuff}, in which the leaf nodes are blocks with level range $[1,b]$ for some $b < \ell$\end{newstuff}.

\begin{newstuff}
Given an instance $\cal I$ of $\wspi{2}(\sim_1,\not\sim_1,\dots,\sim_{\ell},\not\sim_\ell)$, every subset $F$ of $S$ and every significant block $V$ with children $\Delta(V)$ defines an instance of \wsp\ in which:
  \begin{compactitem}
	\item the set of steps is $F$;
	\item the set of users is $\Delta(V)$;
	\item the authorization relation $A'$ is a subset of $F \times \Delta(V)$, where $(s,W) \in A'$ if and only if there exists a user in $v \in W$ such that $(s,v) \in A$;
	\item the set of constraints comprises those constraints in $C$ of the form $(\rho,S_1,S_2)$, where $\rho$ is $\sim_i$ or $\not\sim_i$ with $a \leqslant i \leqslant b$.
  \end{compactitem}
We denote this derived instance of \wsp\ by ${\cal I}_{F,V}$.
Note that if $V$ has level range $[1,b]$, then ${\cal I}_{F,V}$ asks whether a single user is authorized to perform all the steps in
$F$ without violating any constraints defined between levels $1$ and $b$ of the hierarchy.
If $V$ has level range $[a,b]$, with $a > 1$, then ${\cal I}_{F,V}$ is solved using the approach similar to that described in the proof of Theorem~\ref{thm:master}.
When building the matrix, the entry indexed by $G \subseteq F$ and $W$ is defined to be $0$ if and only if $G \ne \emptyset$ is ineligible or ${\cal I}_{G,W}$ is a no-instance of WSP.
Thus, a non-zero matrix entry indicates the steps in $F$ could be assigned to the block $W$ (meaning that no constraints in levels $1,\dots,a-1$ would be violated) and that no constraints would be violated in levels $a,\dots,b$ by allocating a single block to $F$.
Hence, we can solve ${\cal I}_{F,V}$ if we can solve ${\cal I}_{F,W}$ for all $W \in \Delta(V)$.

Note, finally, that $U$ is a significant set and a solution for ${\cal I}_{S,U}$ is a solution for $\cal I$.
Thus our algorithm for solving $\cal I$ solves ${\cal I}_{F,V}$ for all significant sets $V$ with level range $[a,b]$ from $a = 1$ to $a = \ell$ and all subsets $F$ of $S$.
\end{newstuff}

We now consider the complexity of this algorithm.
Consider the significant block $V$ with $m$ children.
\begin{newstuff}
If $m = 0$ then $V = \set{u}$ for some $u \in U$ and solving ${\cal I}_{F,V}$ amounts to identifying whether $F$ is an eligible set and whether $u$ is authorized for all steps in $F$.
For fixed $V$ (with $m = 0$), solving ${\cal I}_{F,V}$ for all $F \subseteq S$ takes time $O(2^kc)$.
There are exactly $n$ significant sets, one per user, with no children.
If $m > 0$ then the time taken to solve ${\cal I}_{F,V}$ is $\widetilde{O}(2^{\card{F}}(c + m^2))$, by Theorem~\ref{thm:1level}.
Hence the time taken to solve ${\cal I}_{F,V}$ for all $F \subseteq S$ (for fixed $V$) is $\widetilde{O}(3^k(c + m^2))$.
As we observed earlier, the set of significant blocks ordered by subset inclusion forms a tree.
Moreover, every non-leaf node in $G$ has at least two children, which implies that $G$ has no more than $2n-1$ nodes (so  $\card{\mathcal{V}} \leqslant 2n-1$), so there are at most $n-1$ significant sets with $2$ or more children.

The total time taken, therefore, is
  \[
	O(2^k cn) + \sum_{V \in \cal V} \widetilde{O}(3^k(c + m^2_V) = \widetilde{O}(3^k cn) + \sum_{V \in \cal V} \widetilde{O}(3^k m^2_V),
  \]
where $m_V$ denotes the number of children of $V$.

Now for some $b \geqslant 0$, we have
  \[
    \sum_{V \in \cal V} \widetilde{O}(m^2_V) = \sum_{V \in \cal V} O((m_V \log^b m_V) ^2) 
											 \leqslant \max_{V \in \cal V} \log^{2b} m_V \sum_{V \in \cal V} O(m^2_V) 
											 = O(n^2 \log^{2b} n) = \widetilde{O}(n^2).
  \]
Hence, we conclude that the total time taken to compute $\phi_V$ for all $V$ is $\widetilde{O}(3^kcn + 3^k n^2)) = \widetilde{O}(3^k n(c+n))$.
\end{newstuff}
\end{proof}

\begin{newstuff}
\begin{remark}\label{rem:optimized-proof}
The algorithm in the above proof can be optimized by computing a single matrix for each significant set $V$ (with rows indexed by $\Delta(V)$ and columns indexed by subsets of $S$), which can be used to solve ${\cal I}_{F,V}$ for all $F \subseteq S$.
This matrix can be built in time $O(cm2^k)$ and the solution to ${\cal I}_{F,V}$, for $F \subseteq S$, can be computed in time $\widetilde{O}(2^{\card{F}} m^2)$.
Hence, the optimized algorithm runs in time $\widetilde{O}(cm 2^k + m^2 3^k)$, for fixed $V$, and in time $\widetilde{O}(cn 2^k + n^2 3^k)$ overall.
\end{remark}
\end{newstuff}

\begin{theorem}\label{thm:type3}
Let $\sim_1,\dots,\sim_\ell$ define a canonical organizational hierarchy.
Let \mbox{$W = (S,U,\leqslant,A,C \cup C_{\sim} \cup C_{\not\sim})$} be a workflow, where $C$ is the set of Type $2$ constraints, $C_{\sim}$ is the set of Type $3$ constraints of the form $(\sim_i,S_1,S_2)$ and $C_{\not\sim}$ is the set of Type $3$ constraints  of the form $(\not\sim_i,S_1,S_2)$.
Then the satisfiability of $W$ can be determined in time
  \[
    \widetilde{O}( (c+2c')n 2^{k+c'} + n^2 3^{k+c'} ),
  \]
where $c = \card{C} + \card{C_{\not\sim}}$ and $c' = \card{C_{\sim}}$.
Moreover, $c' \leqslant 3^k$, so $\wspi{3}(\sim_1,\dots,\sim_\ell,\not\sim_1,\dots,\not\sim_\ell)$ is FPT.
\end{theorem}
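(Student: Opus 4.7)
The plan is to reduce the given instance to an instance of $\wspi{2}(\sim_1,\not\sim_1,\dots,\sim_\ell,\not\sim_\ell)$ and then invoke the optimized form of Theorem~\ref{thm:llevel} described in Remark~\ref{rem:optimized-proof}. The reduction handles the two classes of Type~3 constraints separately, using two rewriting tricks that have already been introduced in the paper.

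First, for each $(\sim_i,S_1,S_2)\in C_{\sim}$, I would apply the dummy-step construction from Theorem~\ref{thm:1level-type3}: introduce a fresh step $s_{\mathrm{new}}$ authorized for every user, and replace the Type~3 constraint by the two Type~2 constraints $(\sim_i,S_1,s_{\mathrm{new}})$ and $(\sim_i,s_{\mathrm{new}},S_2)$. The equivalence argument is identical to the one given in Theorem~\ref{thm:1level-type3}. Second, for each $(\not\sim_i,S_1,S_2)\in C_{\not\sim}$, I would use the rewriting $(\not\sim_i,s,(S_1\cup S_2)\setminus\{s\})$ for any chosen $s\in S_1\cup S_2$, mirroring the rewriting of Type~3 $(\ne)$ constraints discussed earlier. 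Equivalence follows from transitivity of $\sim_i$: if a pair $s'\in S_1, s''\in S_2$ witnesses $\pi(s')\not\sim_i\pi(s'')$, then $\pi(s)$ cannot be $\sim_i$-equivalent to both, so $\pi(s)$ differs in relation $\sim_i$ from at least one; conversely, if $\pi(s)\not\sim_i\pi(s^*)$ for some $s^*\in(S_1\cup S_2)\setminus\{s\}$, then for any element $s''$ on the ``other side'' of $s$, either $(s,s'')$ or $(s^*,s'')$ serves as a witness pair for the original Type~3 constraint.

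After the rewriting, all constraints are of Type~2, the step set has size at most $k+c'$, and the total number of (Type~2) constraints is at most $\card{C}+\card{C_{\not\sim}}+2c'=c+2c'$. Feeding this instance into the algorithm underlying Remark~\ref{rem:optimized-proof} yields the claimed running time $\widetilde{O}((c+2c')n\,2^{k+c'}+n^2\,3^{k+c'})$.

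For the FPT claim, I would bound $c'$ by $3^k$ along the lines of Corollary~\ref{cor:type3-equality-constraints-fpt}. Any constraint $(\sim_i,S_1,S_2)$ with some $s\in S_1\cap S_2$ is trivially satisfied by reflexivity of $\sim_i$, so we may assume $S_1\cap S_2=\emptyset$; combined with the paper's earlier observation that at most one $\sim_i$-constraint per pair $(S_1,S_2)$ is needed (the strongest level dominates), this gives $c'\le\sum_{j=1}^{k}\binom{k}{j}2^{k-j}=3^k$, so both $2^{k+c'}$ and $3^{k+c'}$ are bounded by a function of $k$ alone, while the remaining factors are polynomial in the input size. The main obstacle I expect is verifying the Type~3 $(\not\sim_i)$ rewriting in the hierarchical setting; since each constraint names a single level $i$ and the argument only uses transitivity of $\sim_i$ at that level, no cross-level subtleties arise, but this needs to be checked carefully for each direction of the equivalence.
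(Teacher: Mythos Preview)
Your proposal is correct and follows essentially the paper's route, with one minor variation worth noting. For the Type~3 constraints in $C_{\not\sim}$ you rewrite each $(\not\sim_i,S_1,S_2)$ as a single Type~2 constraint $(\not\sim_i,s,(S_1\cup S_2)\setminus\{s\})$ via a transitivity argument; the paper instead observes that no rewriting is needed at all, because the eligibility test for such a constraint is identical to the Type~2 case (a set $F$ is ineligible iff $S_1\cup S_2\subseteq F$), so the algorithm of Theorem~\ref{thm:llevel} already handles these constraints verbatim. Both approaches leave the step count at $k+c'$ and the constraint count at $c+2c'$, so they feed the same parameters into Remark~\ref{rem:optimized-proof} and yield the same bound. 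The paper's observation is shorter, while your rewriting gives a self-contained reduction to a bona fide $\wspi{2}$ instance; either is fine. The treatment of $C_{\sim}$ via dummy steps and the $c'\le 3^k$ bound are exactly as in the paper.
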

The proof of this result can be found in the appendix.

\section{Kernelization}\label{sec:kernelization}

\newcommand{\np}{\textrm{NP}}
\newcommand{\conp}{\textrm{coNP}}
\newcommand{\poly}{\textrm{poly}}

Formally, a \emph{parameterized problem} $P$ can be represented as a relation $P\subseteq \Sigma^* \times \mathbb{N}$ over a finite alphabet $\Sigma$.
The second component is call the {\em parameter} of the problem.
In particular, {\sc WSP} is a parameterized problem with parameter $k$, the number of steps.
\begin{newstuff}
We denote the size of a problem instance $(x,k)$ by $\card{x} + k$.
In this section, we are interested in transforming an instance of \wsp\ into a new instance of \wsp\ whose size is dependent only on $k$.
This type of transformation is captured in the following definition.
\end{newstuff}

\begin{definition}
Given a parameterized problem $P$, a \emph{kernelization of $P$} is an algorithm that maps an instance $(x,k)$ to an instance $(x',k')$ in time polynomial in $|x|+k$ such that
  \begin{inparaenum}[(i)]
    \item $(x,k)\in P$ if and only if $(x',k')\in P$, and
    \item $k'+|x'|\leqslant g(k)$ for some function  $g$;
  \end{inparaenum}
$(x',k')$ is the \emph{kernel} and $g$ is the {\em size} of the kernel.
\end{definition}

\begin{newstuff}
Note that a kernelization provides a form of preprocessing aimed at compressing the given instance of the problem.
The compressed instance can be solved using \emph{any} suitable algorithm (such as a SAT solver), not necessarily by an FPT algorithm.
\end{newstuff}
It is well-known and easy to prove that a decidable parameterized problem is FPT if and only if it has a kernel~\cite{FlumGrohe06}.
If $g(k)=k^{O(1)}$, then we say $(x',k')$ is a {\em polynomial-size} kernel.

Polynomial-size kernels are particularly useful in practice as they often allow us to reduce the size of the input of the problem under consideration to an equivalent problem with an input of significantly smaller size. This preprocessing often allows us to solve the original problem more quickly.
Unfortunately, many fixed-parameter tractable problems have no polynomial-size kernels (unless $\conp \subseteq \np/\poly$, which is highly unlikely~\cite{BDFH09,BodJanKra,BTY09,DLS09}).

\begin{newstuff}
In order to illustrate the benefits of kernelization, we first state and prove three simple results, the first two of which extend a result of \citeN{FeFrHeNaRo11}.
We then show that $\wspi{1}(=,\ne)$ has a kernel with at most $k$ users.

\begin{proposition}\label{pro:wsp-ne-kernel}
$\wspi{}(\ne)$ has a kernel with at most $k(k-1)$ users.
Moreover, a kernel with at most $k(k-1)$ users exists if we extend the set of constraints to include counting constraints of the form $(1,t,S')$.
\end{proposition}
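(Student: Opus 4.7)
I would prove this by a matching-based reduction, marking, for each step, a bounded set of ``indispensable'' authorized users and then showing that the instance restricted to the marked users remains equivalent.

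First, for each step $s \in S$, I would form an auxiliary bipartite graph $G_s$ with vertex classes $U_s$ (the users authorized for $s$) and $S \setminus \{s\}$, placing an edge $\{u,s'\}$ whenever $u \in U_{s'}$ as well. I compute a maximum matching $M_s$ in $G_s$ in polynomial time, and let $U'_s \subseteq U_s$ consist of the endpoints of $M_s$ lying in $U_s$, padded up to $\min(|U_s|, k-1)$ elements by adding arbitrary further users of $U_s$ if needed. Since $|M_s| \leqslant |S \setminus \{s\}| = k-1$, we get $|U'_s| \leqslant k-1$, so $U' := \bigcup_{s \in S} U'_s$ has cardinality at most $k(k-1)$. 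The reduced instance $(S,U',\leqslant, A|_{S \times U'}, C)$ is obtained in polynomial time.

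Next I would check equivalence. The direction ``reduced satisfiable $\Rightarrow$ original satisfiable'' is immediate. For the other direction, let $\pi$ be a valid plan for the original instance and let $\mathcal{P} = S/\pi = \{B_1,\dots,B_m\}$ be its induced partition, with $m \leqslant k$. By Proposition~\ref{pro:regular-constraints} (for $(\ne,S_1,S_2)$ constraints) and by the argument behind Theorem~\ref{thm:counting-constraints-fpt} (for constraints $(1,t,S')$, which are regular and whose eligibility depends only on $|F \cap S'|$), each block $B_i$ is eligible. Thus, to produce a valid $\pi' : S \to U'$, it suffices to find a system of distinct representatives (SDR) for the family $\{N(B_i)\}_{i\leqslant m}$, where $N(B) = U' \cap \bigcap_{s \in B} U_s$.

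The core step is Hall's condition: for every $\mathcal{Q} \subseteq \mathcal{P}$, $|\bigcup_{B \in \mathcal{Q}} N(B)| \geqslant |\mathcal{Q}|$. The original plan $\pi$ already certifies this with $U$ in place of $U'$, so the task is to transfer it to $U'$. I would argue this by an alternating-path/swap argument: if Hall's condition failed for $U'$, one could find a short augmenting path in some $G_{s_B}$ (for $s_B \in B \in \mathcal{Q}$) that would enlarge the maximum matching $M_{s_B}$, contradicting its maximality. Concretely, the users in $U \setminus U'$ that serve some block in the original plan can be rerouted, one at a time, to users inside $U'_s$ for an appropriate $s \in B$; the matching $M_s$ guarantees that such a replacement user exists and is compatible with the distinctness requirements between blocks.

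Finally I would observe that the extension to counting constraints of the form $(1,t,S')$ is free: these constraints restrict only the admissible block sizes and do not mention users, so the matching-based choice of $U'$ is unaffected, and the SDR argument above uses only eligibility of the $B_i$ (which is preserved under the reduction) and a swap in the matching $M_s$ (which is independent of the counting constraints). The main obstacle I anticipate is formalizing the alternating-path swap so that Hall's condition for $\{N(B_i)\}$ follows cleanly from the maximality of the per-step matchings $M_s$; a naive per-step bound is insufficient, and one needs to combine the matchings across the blocks of $\mathcal{P}$ simultaneously.
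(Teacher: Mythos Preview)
Your construction is incorrect: the reduced instance need not be equivalent to the original one. The problem is that the matching $M_s$ picks precisely those users of $U_s$ that are \emph{shared} with other steps, and your padding rule caps $|U'_s|$ at $k-1$; when $|M_s|=k-1$ there is no room left for any user who is authorized \emph{only} for $s$. But such ``private'' users are exactly what may be required once the shared users have been claimed by other steps. Concretely, take $k=3$ with $S=\{s_1,s_2,s_3\}$ pairwise constrained by $\ne$, and authorization lists $U_{s_1}=\{u_1,u_2,u_3,u_4\}$, $U_{s_2}=U_{s_3}=\{u_1,u_2\}$. The original instance is satisfiable ($s_2\mapsto u_1$, $s_3\mapsto u_2$, $s_1\mapsto u_3$). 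In $G_{s_1}$ only $u_1,u_2$ have edges (to $s_2,s_3$), so every maximum matching has size $2=k-1$ and $U'_{s_1}=\{u_1,u_2\}$; similarly $U'_{s_2}=U'_{s_3}=\{u_1,u_2\}$. Thus $U'=\{u_1,u_2\}$, and the reduced instance needs three pairwise distinct users from a two-element set---unsatisfiable. Your proposed augmenting-path argument cannot rescue this: all three matchings $M_{s_i}$ are already maximum, so there is nothing to augment.

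The paper's proof avoids this entirely by a much simpler device. It partitions $S$ into $S_{\rm easy}$ (steps with at least $k$ authorized users) and $S_{\rm hard}$ (the rest), and takes as kernel the sub-instance on $S_{\rm hard}$ together with only the users authorized for some step in $S_{\rm hard}$. Since every step in $S_{\rm hard}$ has fewer than $k$ authorized users, the kernel has at most $k(k-1)$ users. Equivalence holds because any valid plan for $S_{\rm hard}$ can be greedily extended to $S_{\rm easy}$ one step at a time: each easy step has at least $k$ authorized users, so at least one is unused, and assigning a fresh user cannot violate any $(\ne,S_1,S_2)$ or $(1,t,S')$ constraint. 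In the counterexample above, $s_1\in S_{\rm easy}$ is simply dropped from the kernel, and the remaining two-step instance on $\{u_1,u_2\}$ is (correctly) satisfiable. No matchings, no Hall condition, no alternating paths are needed.
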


\begin{proof}
Let $W = (S,U,\leqslant,A,C)$ be a workflow in which all constraints have the form $(\ne,S_1,S_2)$.
Let $S_{\rm easy}$ be the set of steps such that each step has at least $k$ authorized users and let $S_{\rm hard}$ be $S \setminus S_{\rm easy}$.
Now consider the workflow $W_{\rm hard} = (S_{\rm hard},U_{\rm hard},\leqslant,A_{\rm hard},C_{\rm hard})$, where $u \in U_{\rm hard}$ if and only if $u$ is authorized for at least one step in $S_{\rm hard}$, $A_{\rm hard} = (S_{\rm hard} \times U) \cap A$, and $(\ne,S_1,S_2) \in C_{\rm hard}$ if and only if $(\ne,S_1,S_2) \in C$ and $S_1,S_2 \subseteq S_{\rm hard}$.
A counting constraint of the form $(1,t_r,S')$ is replaced by the counting constraint $(1,t_r,S' \setminus S_{\rm easy})$.

We now solve the \wsp\ instance defined by $W_{\rm hard}$ and show that this allows us to compute a solution for $W$.
If $W_{\rm hard}$ is a no instance, then $W$ cannot be satisfiable either (since $C_{\rm hard} \subseteq C$).
Conversely, if it is a yes instance, then there exists a plan $\pi_{\rm hard} : S_{\rm hard} \rightarrow U_{\rm hard}$.
Moreover, we can extend $\pi_{\rm hard}$ to a plan $\pi : S \rightarrow U$, so $W$ is satisfiable.
Specifically, we allocate a different user from $U \setminus \pi_{\rm hard}(S_{\rm hard})$ to each step in $s \in S_{\rm easy}$ (which is possible since there are at least $k$ users authorized to perform $s$ and only $k$ steps in total) and define $\pi(s) = \pi_{\rm hard}(s)$ for all $s \in S_{\rm hard}$.
Clearly, $\pi$ does not violate any constraint of the form $(\ne,S_1,S_2)$ or $(1,t,S')$.%
\footnote{Note that this is not true for counting constraints of the form $(t_\ell,t_r,S')$ when $t_\ell > 1$.}

In other words, we can solve \wsp\ for $W$ by solving \wsp\ for $W_{\rm hard}$, which has no more than $k$ steps and each step has fewer than $k$ authorized users.
Hence, there can be no more than $k(k-1)$ authorized users in $W_{\rm hard}$.
\end{proof}


\begin{corollary}
$\wspi{1}(\ne)$ can be solved in time $\widetilde{O}(2^k)$.
\end{corollary}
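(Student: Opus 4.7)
The plan is to combine the kernelization result in Proposition~\ref{pro:wsp-ne-kernel} with the algorithmic bound in Theorem~\ref{thm:1level}, exploiting the fact that Type~1 constraints can only have a polynomial number of essentially distinct instances.

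First I would apply the kernelization of Proposition~\ref{pro:wsp-ne-kernel} to the given instance of $\wspi{1}(\ne)$. Since Type~1 constraints are in particular of the form $(\ne, S_1, S_2)$ with $\card{S_1} = \card{S_2} = 1$, the proposition applies and yields, in polynomial time, an equivalent instance $W_{\rm hard}$ whose user set has size at most $k(k-1)$. Note that the kernelization preserves the Type~1 character of the constraints, since it only restricts existing constraints to the subset $S_{\rm hard}$ of steps (so the derived constraints remain of Type~1, with the possibility that some constraints become trivial and can be discarded).

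Next I would bound the number of constraints in the kernel. Because each constraint in $\wspi{1}(\ne)$ has the form $(\ne, s_1, s_2)$ for $s_1, s_2 \in S$, and because duplicate constraints can be removed in polynomial time, we may assume the number of constraints satisfies $c \leqslant \binom{k}{2} = O(k^2)$. Combined with the kernel bound $n \leqslant k(k-1) = O(k^2)$, we have $c + n^2 = O(k^4)$.

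Finally I would apply Theorem~\ref{thm:1level} to the kernel, which gives a running time of $\widetilde{O}(2^k(c+n^2)) = \widetilde{O}(2^k \cdot k^4)$. Since $k^4$ is polylogarithmic in $2^k$, this is $\widetilde{O}(2^k)$, as required. There is no real obstacle here: the result is essentially an immediate composition of two earlier statements, and the only point worth checking is that the kernelization of Proposition~\ref{pro:wsp-ne-kernel} stays inside the class $\wspi{1}(\ne)$ and that the simple duplicate-removal step keeps $c$ polynomial in $k$.
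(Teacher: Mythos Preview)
Your proposal is correct and follows essentially the same approach as the paper: the paper's proof simply says the result follows from Theorem~\ref{thm:1level}, the fact that there are at most $O(k^2)$ Type~1 constraints, and Proposition~\ref{pro:wsp-ne-kernel}. Your additional care in checking that the kernelization keeps the instance within $\wspi{1}(\ne)$ and that duplicate removal bounds $c$ is sound and only makes the argument more explicit.
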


\begin{proof}
The result follows immediately from Theorem~\ref{thm:1level}, the fact that there can be no more than $O(k^2)$ Type 1 constraints, and the proposition above.
\end{proof}

\begin{proposition}\label{pro:wsp1-kernel}
$\wspi{1}(\ne,=)$ has a kernel with at most $k(k-1)$ users.
\end{proposition}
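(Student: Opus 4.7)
The plan is to reduce $\wspi{1}(\ne,=)$ to $\wspi{1}(\ne)$ by collapsing equivalence classes of steps induced by the $=$-constraints, and then apply Proposition~\ref{pro:wsp-ne-kernel} to the reduced instance. Given an instance $W = (S,U,\leqslant,A,C)$, I would first form the equivalence relation $\equiv$ on $S$ generated by the set $\{(s,s') : (=,s,s') \in C\}$ via transitive closure. Let $S_1,\dots,S_m$ be the resulting equivalence classes; clearly $m \leqslant k$, and they can be computed in polynomial time, for instance with a union-find data structure.

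Next I would merge each class $S_i$ into a single super-step $\sigma_i$ with authorized-user set $A_i = \bigcap_{s \in S_i} \{u \in U : (s,u) \in A\}$. Since any valid plan for $W$ must assign the same user to all steps in $S_i$, this collapse loses no satisfying assignments. If some $A_i$ is empty, or if some constraint $(\ne,s,s') \in C$ has $s$ and $s'$ in the same $\equiv$-class, then $W$ is unsatisfiable and I would return a trivial no-instance as the kernel. Otherwise, every remaining constraint $(\ne,s,s')$ with $s \in S_i$, $s' \in S_j$, $i \ne j$, is translated into the Type~1 constraint $(\ne,\sigma_i,\sigma_j)$ on the super-steps, and the $=$-constraints are now trivially satisfied and may be discarded.

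The resulting instance $W'$ is an instance of $\wspi{1}(\ne)$ with $m \leqslant k$ steps, and $W'$ is satisfiable if and only if $W$ is: any plan for $W'$ lifts to $W$ by giving every step in $S_i$ the user assigned to $\sigma_i$, and conversely the projection of any plan for $W$ onto the quotient $S/{\equiv}$ is well-defined and respects the translated $\ne$-constraints. Invoking Proposition~\ref{pro:wsp-ne-kernel} on $W'$ then produces a kernel with at most $m(m-1) \leqslant k(k-1)$ users, which serves as the kernel for $W$.

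The only real obstacle is verifying the equivalence between $W$ and the reduced instance $W'$---that no $\ne$- or $=$-constraint is lost or spuriously introduced by the quotienting---and confirming that the composition of the two reductions remains polynomial-time. Both checks are routine bookkeeping, so the substantive work has effectively already been carried out in the proof of Proposition~\ref{pro:wsp-ne-kernel}.
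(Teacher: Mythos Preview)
Your proposal is correct and follows essentially the same approach as the paper: the paper builds a graph on $S$ with edges given by the $=$-constraints, takes connected components (which are exactly your $\equiv$-classes), checks for a $\ne$-constraint internal to a component, merges each component into a superstep with the intersected authorization set, and then invokes Proposition~\ref{pro:wsp-ne-kernel}. The only cosmetic differences are that the paper phrases the quotient via graph connectivity rather than union--find, and it does not separately call out the case $A_i = \emptyset$ (which is harmless since Proposition~\ref{pro:wsp-ne-kernel} handles steps with no authorized users by leaving them in the hard instance).
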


\begin{proof}
The basic idea is to merge all steps that are related by constraints of the form $(=,s_1,s_2)$ for $s_1,s_2 \in S$.
More formally, consider an instance $\cal I$ of $\wspi{1}(=,\ne)$, given by a workflow $(S,U,\leqslant,A,C)$.
\begin{enumerate}[(1)]
  \item Construct a graph $H$ with vertices $S$, in which $s',s''\in S$ are adjacent if $C$ includes a constraint $(=,s',s'')$.
  \item If there is a connected component of $H$ that contains both $s'$ and $s''$ and $C$ contains a constraint $(\ne,s',s'')$ then $\cal I$ is unsatisfiable, so we may assume there is no such connected component. 
  \item For each connected component $T$ of $H$,
    \begin{enumerate}[(a)]
      \item replace all steps of $T$ in $S$ by a ``superstep'' $t$;
      \item for each such superstep $t$, authorize user $u$ for $t$ if and only if $u$ was authorized (by $A$) for all steps in $t$
      \item for each such superstep $t$, merge all constraints for steps in $t$.
    \end{enumerate}
\end{enumerate}
Clearly, we now have an instance of $\wspi{1}(\ne)$, perhaps with fewer steps and a modified authorization relation, that is satisfiable if and only if $\cal I$ is satisfiable.
The result now follows by Proposition~\ref{pro:wsp-ne-kernel}.

The reduction can be performed in time $O(kc + kn)$, where $c$ is the number of constraints: step (1) takes time $O(k+c)$; step (3) performs at most $k$ merges; each merge takes $O(k+c+n)$ time (since we need to merge vertices, and update constraints and the authorization relation for the new vertex set);\footnote{We can check step (2) when we merge constraints in step 3(c).} finally, if $k \leqslant c$ we have $O(k(k+c+n) = O(k(c+n))$, and if $c \leqslant k$ then we perform no more than $c$ merges in time $O(c(k+c+n)) = O(ck+cn) = O(ck+kn)$.
\end{proof}
\end{newstuff}

\begin{theorem}
$\wspi{1}(=,\ne)$ admits a kernel with at most $k$ users.
\end{theorem}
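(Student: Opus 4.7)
The plan is to reduce the input in two stages: first collapse $=$-constraints to obtain an equivalent $\wspi{1}(\ne)$ instance, and then cut the user set down to at most $k$ via a bipartite matching argument on the authorization graph.

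For the first stage I apply the construction already used in the proof of Proposition~\ref{pro:wsp1-kernel}: build the graph on $S$ whose edges are the $=$-constraints, declare the instance unsatisfiable whenever a connected component contains endpoints of a $\ne$-constraint, and otherwise contract each component to a single ``superstep'' whose authorization set is the intersection of the original authorization sets and whose $\ne$-constraints are inherited. This produces an equivalent instance of $\wspi{1}(\ne)$ with at most $k$ steps in time polynomial in the input, so from now on I may assume the instance has only $\ne$-constraints.

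For the second stage I view the authorization relation as a bipartite graph $H=(S,U,A)$ and compute a maximum matching $M$ in $H$ (for example by Hopcroft--Karp) in polynomial time. Let $U^{*}\subseteq U$ be the set of users saturated by $M$; since $M$ is a matching, $|U^{*}|=|M|\leqslant |S|\leqslant k$. I take the kernel to be $(S,U^{*},\leqslant,A\cap(S\times U^{*}),C)$. This is a well-formed \wsp\ instance: every matched step $s$ has $M(s)\in U^{*}\cap A(s)$, and for every unmatched step $s$ the maximality of $M$ forces $A(s)\subseteq U^{*}$, since any user of $A(s)\setminus U^{*}$ would be exposed and yield an augmenting edge.

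It remains to prove that the kernel is satisfiability-equivalent to the input. One direction is immediate. For the converse, suppose $\pi:S\rightarrow U$ is valid; I want to produce a valid plan $\pi^{*}:S\rightarrow U^{*}$. I proceed by induction on the number of steps sent by $\pi$ to users outside $U^{*}$. Pick any $s$ with $\pi(s)=u_{0}\notin U^{*}$. Since $u_{0}$ is exposed and $M$ is maximum, $s$ must be matched, so $M(s)\in U^{*}\cap A(s)$. The natural move is to set $\pi'(s):=M(s)$ and keep $\pi'$ equal to $\pi$ on the other steps; this only endangers $\ne$-constraints of the form $\ne(s,s')$ where $\pi(s')=M(s)$. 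In that case I propagate the swap to $s'$ by setting $\pi'(s'):=M(s')$, and iterate. Because $M$ is a matching, the users $M(\cdot)$ assigned along the swap chain are pairwise distinct; the swap chain therefore only ever becomes problematic if it reaches an unmatched step.

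The main obstacle is to prove that the propagation always stays inside the matched steps and therefore terminates in a valid plan on $U^{*}$. The key observation is that an occurrence of the swap chain reaching an exposed step $s^{\dagger}$ would exhibit, in $H$, an alternating sequence from the exposed user $u_{0}$, through the $\pi$-edges and $M$-edges used by the chain, to the exposed step $s^{\dagger}$; this sequence is precisely an $M$-augmenting path, contradicting the maximality of $M$. Hence every swap chain closes inside $S_{M}$, the propagation yields a valid plan, and the number of $\pi$-assignments outside $U^{*}$ strictly decreases. Iterating gives a valid plan $\pi^{*}$ with $\pi^{*}(S)\subseteq U^{*}$, completing the proof of equivalence. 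Combined with the merging step, the result is a polynomial-time kernelization producing an instance on at most $k$ steps and at most $k$ users.
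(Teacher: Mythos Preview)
Your proof is correct and, after the common first stage (collapsing $=$-constraints exactly as in Proposition~\ref{pro:wsp1-kernel}), it diverges from the paper's argument in an interesting way. The paper also computes a maximum matching $M$ in the authorization bipartite graph, but then invokes a structural result of Szeider to isolate the set $R_{G,M}=S'\cup U'$ of vertices reachable from exposed steps along $M$-alternating paths; the kernel is the induced sub-instance on $(S',U')$, and the three properties P1--P3 of $R_{G,M}$ give the equivalence almost for free. Your route instead keeps \emph{all} steps and restricts only the user set to the $M$-saturated users $U^{*}$, proving equivalence by the explicit reassignment/augmenting-path argument. What your approach buys is self-containment: you never need to cite the Dulmage--Mendelsohn/Szeider decomposition, and the ``no augmenting path'' hypothesis is used directly. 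What the paper's approach buys is a sharper kernel: when $M$ fails to saturate $S$ one gets $|U'|<|S'|\leqslant k$ and also fewer steps, whereas your kernel always retains the full step set and exactly $|M|$ users. One expositional point worth tightening: your ``swap chain'' can branch (several $s'$ may satisfy $\pi(s')=M(s)$ together with a $\ne$-constraint to $s$), so the reassigned set is really a tree rooted at $s_0$; the augmenting-path contradiction still goes through because following the parent pointers from any unmatched $s^{\dagger}$ back to $s_0$ yields a simple $M$-alternating path (the steps have strictly decreasing depth and the users $M(\cdot)$ along it are distinct), but saying ``chain'' undersells what you actually need.
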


\begin{newstuff}
\begin{proof}
We first use the $\wspi{1}$ constraint reduction method from the proof of Proposition~\ref{pro:wsp1-kernel} to eliminate all constraints of the form $(=,s',s'')$, leaving an instance $\cal I$ of $\wspi{1}(\ne)$.
We now construct a bipartite graph $G=(U,S;A)$, where $A \subseteq S \times U$ is the authorization relation.
We may assume that $|U|\geqslant |S|=k$.

Let $V=U\cup S$. Using the well-known Hopcroft-Karp algorithm, we can find a maximum matching $M$ in $G$ in time $O(\sqrt{|V|}|A|)$.%
\footnote{A matching in a bipartite graph is a set of edges that are pairwise non-adjacent.
		  A maximum matching contains the largest possible number of edges.}
If $M$ covers every vertex of $S$, then $\cal I$ is satisfiable and our kernel is the subgraph of $G$ induced by all vertices covered by $M$.
(Since there is at most one edge in $M$ for each vertex in $S$ and at most one edge for each vertex in $U$, there are exactly $k$ users covered and we have a kernel containing $k$ users.)

If $M$ does not cover every vertex of $S$ then we define $R_{G,M}$ to be the set of vertices of $G$ which can be reached from some uncovered vertex in $S$ by an $M$-alternating path.%
\footnote{An $M$-alternating path has the property that for any pair of successive edges one belongs to $M$ and the other does not.}
Then a result of~\citeN[Lemma 3]{Sze2004} asserts that we can compute $R_{G,M}$ in time $O(|U|+|S|+|A|)$.
We write $R_{G,M}$ in the form $U' \cup S'$ for some $U' \subseteq U$ and $S' \subseteq S$.
The set $U' \cup S'$ has the following properties~\cite[Lemma 3]{Sze2004}:
\begin{compactenum}[P1.]
  \item All vertices of $S \setminus S'$ are covered by $M$;
  \item There is no edge in $G$ from $U \setminus U'$ to $S'$ and no edge of $M$ joins vertices in $U'$ with vertices in $S \setminus S'$;
  \item In the subgraph $G$ induced by $U' \cup S'$, vertices of a set $U'' \subseteq U'$ have at least $|U''|+1$ neighbors in $S'$.
\end{compactenum}
A bipartite graph $G$, a maximum matching $M$ in $G$ (indicated by the thicker lines), and the sets $U'$ and $S'$ are shown in Figure~\ref{fig:kernel-from-matching}; the figure is based on one used by~\citeN{Sze2004}.

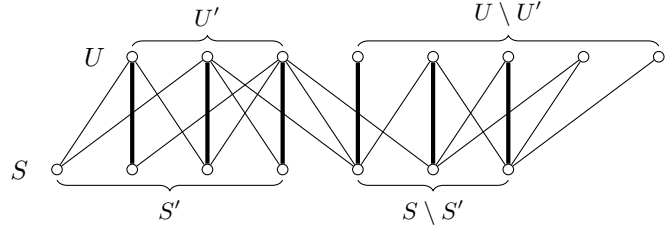
\begin{figure}[h]\centering
  \begin{tikzpicture}%
	[auto,circ1/.style={circle,draw,minimum size=4pt,fill=black!40,inner sep=0pt},%
	 circ2/.style={circle,draw,minimum size=4pt,inner sep=0pt}]
	\node[circ2] (s1) at (1,0) {};
	\node[circ2] (s2) at (2,0) {};
	\node[circ2] (s3) at (3,0) {};
	\node[circ2] (s4) at (4,0) {};
	\node[circ2] (s5) at (5,0) {};
	\node[circ2] (s6) at (6,0) {};
	\node[circ2] (s7) at (7,0) {};
	\node[circ2] (u1) at (2,1.5) {};
	\node[circ2] (u2) at (3,1.5) {};
	\node[circ2] (u3) at (4,1.5) {};
	\node[circ2] (u4) at (5,1.5) {};
	\node[circ2] (u5) at (6,1.5) {};
	\node[circ2] (u6) at (7,1.5) {};
	\node[circ2] (u7) at (8,1.5) {};
	\node[circ2] (u8) at (9,1.5) {};
	\node (U) at (1.5,1.5) {$U$};
	\node (S) at (0.5,0) {$S$};
	\draw[thin] (s1) -- (u1);
	\draw[thin] (s1) -- (u2);
	\draw[ultra thick] (s2) -- (u1);
	\draw[thin] (s2) -- (u3);
	\draw[thin] (s3) -- (u1);
	\draw[ultra thick] (s3) -- (u2);
	\draw[thin] (s3) -- (u3);
	\draw[ultra thick] (s4) -- (u3);
	\draw[thin] (s4) -- (u2);
	\draw[ultra thick] (s5) -- (u4);
	\draw[thin] (s5) -- (u5);
	\draw[thin] (s5) -- (u2);
	\draw[thin] (s5) -- (u3);
	\draw[thin] (s6) -- (u3);
	\draw[ultra thick] (s6) -- (u5);
	\draw[thin] (s6) -- (u6);
	\draw[thin] (s6) -- (u7);
	\draw[thin] (s7) -- (u5);
	\draw[ultra thick] (s7) -- (u6);
	\draw[thin] (s7) -- (u7);
	\draw[thin] (s7) -- (u8);
	\draw [decorate,decoration={brace,amplitude=4pt},yshift=4pt] (2,1.5) -- (4,1.5) node [black,midway,yshift=4pt] {\small $U'$};
	\draw [decorate,decoration={brace,amplitude=4pt},yshift=4pt] (5,1.5) -- (9,1.5) node [black,midway,yshift=4pt] {\small $U \setminus U'$};
	\draw [decorate,decoration={brace,amplitude=4pt},yshift=-4pt] (4,0) -- (1,0) node [black,midway,yshift=-4pt] {\small $S'$};
	\draw [decorate,decoration={brace,amplitude=4pt},yshift=-4pt] (7,0) -- (5,0) node [black,midway,yshift=-4pt] {\small $S \setminus S'$};
  \end{tikzpicture}
\caption{Constructing a kernel for $\wspi{}$ using a maximum matching}\label{fig:kernel-from-matching}
\end{figure}

Hence, we can assign users to all steps that are not in $S'$ (using $M$) and we will not violate any separation-of-duty constraints by doing so. Moreover, property (P2) means that allocating users in $U'$ to steps in $S'$ will not violate any separation-of-duty constraints.  In other words, we have reduced the problem instance to finding a solution to a smaller instance (the kernel) in which the set of users is $U'$, the set of steps is $S'$, and $|U'| < |S'| \leqslant k$.
\end{proof}

The authorization relation $A \subseteq S \times U$ defines the bipartite graph used to construct the matching.
The computation of a maximum matching in time $O(\card{A}\cdot\sqrt{n+k}) = O(nk\sqrt{n+k})$ enables us to compute a partial plan $\pi$, where an edge in the matching corresponds to a step $s$ and a user $u = \pi(s)$.
If the maximum matching has cardinality $k$, then we are done.
Otherwise, we solve \wsp\ for the kernel.

When the cardinality of $A$ is high (so the computation of the maximum matching is relatively slow), many users are authorized for many steps.
In this case, therefore, the observation that only those steps for which fewer than $k$ users are authorized need to be considered may mean that it is easy to decide whether the instance is satisfiable.
\end{newstuff}

\begin{newstuff}
We now state some negative results, negative in the sense that they assert that certain instances of \wsp\ do not have polynomial-size kernels.
The proofs of these results can be found in the appendix.
\end{newstuff}

\begin{theorem}\label{thm:wsp2equals-no-kernel}
$\wspi{2}(=)$ does not admit a kernel with a polynomial number of users unless $\conp \subseteq \np/\poly$.
\end{theorem}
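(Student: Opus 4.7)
The plan is to apply the OR-cross-composition framework of Bodlaender, Jansen, and Kratsch: if some NP-hard problem $L$ OR-cross-composes into a parameterized problem $(P,\kappa)$ so that the output parameter $\kappa$ is bounded by a polynomial in the maximum input size plus $\log t$ (where $t$ is the number of input instances), then $P$ admits no polynomial kernel with respect to $\kappa$ unless $\conp \subseteq \np/\poly$. Here $P = \wspi{2}(=)$ and $\kappa = n$, the number of users; the goal is therefore to compose many instances of an NP-hard problem into a single $\wspi{2}(=)$ instance while keeping the number of users small, even though the number of steps and constraints is allowed to grow polynomially.

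For concreteness I would take $L = 3\text{-}\textsc{SAT}$, and define the polynomial equivalence relation to identify formulas having the same number of variables $m$ and the same number of clauses $r$ (padding inputs so that all $t$ formulas $\varphi_1,\dots,\varphi_t$ share these parameters). Given $t=2^q$ such formulas, I build a $\wspi{2}(=)$ instance $\mathcal{I}$ whose user set consists of: two ``truth-value'' users $\top,\bot$; $m$ ``variable'' users $v_1,\dots,v_m$ associated with the variables; and $2q$ ``selector'' users $\sigma_i^0,\sigma_i^1$ ($i\in[q]$) encoding a binary index in $[t]$. This gives $n = 2 + m + 2q = O(m + \log t)$, which is polynomial in the maximum input size plus $\log t$. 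The steps of $\mathcal{I}$ include variable-assignment steps, selector steps, and per-formula clause gadgets; the authorization relation restricts variable-assignment steps to $\{\top,\bot\}\cup\{v_j\}$ and selector steps to the appropriate $\sigma$-users. Type~2 equality constraints $(=,s,S')$ play the role of existential ``OR-gadgets'': for each clause $C$ of $\varphi_j$, a clause step must agree with at least one literal step which in turn has been forced to represent a satisfying literal in the chosen assignment.

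The central obstacle is designing the selector mechanism so that $\mathcal{I}$ is satisfiable if and only if at least one of the $\varphi_j$ is satisfiable. The danger is that clause gadgets corresponding to \emph{unselected} formulas might still impose constraints that rule out a valid plan. I would handle this by making each clause gadget of $\varphi_j$ branch on the selector pattern: a fresh auxiliary step is forced, via a Type~2 equality, to either coincide with a satisfying literal step of $\varphi_j$ (when the selector encodes $j$) or with a default ``escape'' step associated with some $\sigma_i^b$ that differs from the binary encoding of $j$ (when the selector encodes some $j'\neq j$). Because a Type~2 equality $(=,s,S')$ is an existential disjunction, such a branching clause gadget is trivially satisfiable whenever the selector disagrees with $j$ in at least one bit, and reduces to the literal disjunction of $C$ only when the selector exactly matches $j$. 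Multi-way choices that cannot be directly expressed as a single $(=,s,S')$ would be broken into nested Type~2 equalities through fresh intermediate steps; this increases $k$ and $c$ polynomially but, crucially, introduces no new users.

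Once correctness is verified---a satisfying plan corresponds to choosing a selector pattern for some $j^\star$ and a satisfying assignment of $\varphi_{j^\star}$, and vice versa---the composition is complete: the output parameter is $n = O(m+\log t)$, the construction runs in polynomial time, and $3\text{-}\textsc{SAT}$ is NP-hard. The cross-composition theorem then yields that $\wspi{2}(=)$ has no kernelization producing instances with a polynomial number of users unless $\conp\subseteq\np/\poly$. The main technical work, and the step I expect to consume the bulk of the proof, is the precise specification of the clause and selector gadgets together with the verification that the ``escape'' mechanism renders unselected formulas vacuous without accidentally collapsing the encoding of the selected one.
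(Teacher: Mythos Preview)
Your proposal misidentifies the parameter. Throughout the paper WSP is parameterized by $k$, the number of \emph{steps}; ``a kernel with a polynomial number of users'' means a kernelization with respect to $k$ whose output has at most $\mathrm{poly}(k)$ users (compare the positive results immediately preceding this theorem, which produce kernels with at most $k$ or $k(k-1)$ users). Your cross-composition keeps $n=O(m+\log t)$ small while letting the number of steps grow polynomially in $t$; that is a composition with respect to the parameter $n$, not $k$, so even if it succeeds it yields a lower bound for a different parameterization than the one intended. There is also a concrete gap in the gadgetry itself: with only Type~2 equality constraints $(=,s,S')$ and the authorization relation you can express disjunctions of equalities, but nothing forces two steps to receive \emph{distinct} users. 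Encoding a negative literal $\neg x_j$ requires a step whose user is the opposite of that assigned to $x_j$, and you have no mechanism to enforce this---a plan assigning $\top$ to every variable step vacuously satisfies every clause gadget regardless of the formula.

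The paper sidesteps both difficulties by using a polynomial parameter transformation rather than a cross-composition, and by starting from \textsc{Hitting Set} (which needs no negation) rather than 3-\textsc{Sat}. Wang and Li's reduction takes a \textsc{Hitting Set} instance with ground set $U$, $m$ subsets, and budget $k$ to a $\wspi{2}(=)$ instance whose users are the elements of $U$ and whose $m+k$ steps carry the constraints $(=,v_i,S)$; this is a PPT from \textsc{Hitting Set} parameterized by $m+k$ to $\wspi{2}(=)$ parameterized by the number of steps. Since \textsc{Hitting Set} parameterized by $m+k$ admits no polynomial kernel unless $\conp\subseteq\np/\poly$, the PPT lemma of Bodlaender, Thomass\'e and Yeo gives the theorem in a couple of lines.
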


\begin{newstuff}
\begin{theorem}\label{thm:wspcounting-no-kernel}
WSP with counting constraints of the type $(2,t,S')$ does not admit a kernel with a polynomial number of users unless $\conp \subseteq \np/\poly$.
\end{theorem}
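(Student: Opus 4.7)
The plan is to prove Theorem~\ref{thm:wspcounting-no-kernel} by a polynomial parameter transformation (PPT) from $\wspi{2}(=)$ to WSP restricted to counting constraints of type $(2,t,S')$. Since Theorem~\ref{thm:wsp2equals-no-kernel} already establishes that $\wspi{2}(=)$ has no kernel with a polynomial number of users (unless $\conp \subseteq \np/\poly$), and PPTs preserve the non-existence of polynomial kernels, it suffices to encode Type 2 equality constraints using only $(2,t,S')$ constraints with at most a polynomial blow-up in the number of steps and no increase in the number of users.

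The key gadget: given an instance $(S,U,\leqslant,A,C)$ of $\wspi{2}(=)$ with $k = \card{S}$, for each step $s \in S$ introduce a ``shadow'' step $s^\circ$ inheriting the authorizations of $s$, and impose the counting constraint $(2,2,\set{s,s^\circ})$. This binding forces $\pi(s)=\pi(s^\circ)$ in any valid plan. Then, for each Type 2 equality constraint $\gamma = (=,s_1,S_2) \in C$, introduce the counting constraint
\[
  \bigl(2,\; \card{S'},\; \set{s_1} \cup S_2 \cup \set{s^\circ : s \in S_2}\bigr),
\]
where $S'$ denotes the enlarged step set. The resulting instance $I'$ has $\card{S'} = 2k$ steps and the same user set; the number of constraints grows by at most $k + c$. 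All constraints in $I'$ are of the form $(2,t,S')$, as required.

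Correctness is then verified in both directions. If $\pi$ is a valid plan for the original instance, then extending $\pi$ by $\pi(s^\circ)=\pi(s)$ satisfies each binding, and for each $\gamma = (=,s_1,S_2)$ there exists $s^* \in S_2$ with $\pi(s^*)=\pi(s_1)$, so the user $\pi(s_1)$ performs at least two steps in the large counting set (namely $s_1$ and $s^*$). Conversely, if $\pi'$ is valid for $I'$, then the counting constraint forces $\pi'(s_1)$ to perform a second step in $S_2 \cup \set{s^\circ : s \in S_2}$; if that step is some $s^\circ$, the binding gives $\pi'(s)=\pi'(s^\circ)=\pi'(s_1)$, so in either case $\pi'(s_1) \in \pi'(S_2)$, and $\pi'|_S$ satisfies $\gamma$. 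Other users touching the counting set perform steps only through $s$-together-with-$s^\circ$ pairs, so the counting constraint is satisfied automatically.

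The main obstacle is ensuring that the large counting constraint, which is a universal statement about every user's activity within $\set{s_1} \cup S_2 \cup \set{s^\circ : s \in S_2}$, does not impose spurious restrictions beyond the existential $(=,s_1,S_2)$ semantics; this is exactly what the shadow steps $s^\circ$ are designed to absorb, since any user assigned to some $s \in S_2$ already contributes two steps ($s$ and $s^\circ$) to the count ``for free.'' Once this is verified, the PPT is complete, $\card{S'} \leqslant 2k$ is polynomial in $k$, and Theorem~\ref{thm:wsp2equals-no-kernel} yields the desired conclusion.
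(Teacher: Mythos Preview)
Your proposal is correct and follows a genuinely different route from the paper. The paper does not transform an arbitrary $\wspi{2}(=)$ instance; it works directly with the particular instance $\Pi$ coming from the {\sc Hitting Set} reduction (steps $V\cup S$ with $\card{S}=k$, constraints $(=,v_i,S)$ for $i\in[m]$) and passes to an instance $\Pi^*$ on the \emph{same} step set by replacing each equality constraint with the single counting constraint $(2,k+1,S\cup\{v_i\})$. A valid plan $\pi$ for $\Pi$ is then converted to one for $\Pi^*$ by reassigning to $\pi(v_1)$ every $s\in S$ whose user performs only one step overall. Your shadow-step gadget instead doubles the step set so that any user executing some $s\in S_2$ automatically also executes $s^\circ$ inside the constraint set; this is precisely what guarantees the lower bound $t_\ell=2$ is never violated by ``bystander'' users of $S_2$. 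The paper's direct encoding lacks this padding, and in fact its reassignment rule does not cover the case of a user who performs $s_j\in S$ together with $v_j$ for some $j\neq i$: such a user already performs two steps in $V\cup S$ (so nothing is reassigned) yet contributes only the single step $s_j$ to $S\cup\{v_i\}$, so $(2,k+1,S\cup\{v_i\})$ can fail for $\pi^*$. Your construction sidesteps this difficulty cleanly, applies uniformly to every $\wspi{2}(=)$ instance rather than just the {\sc Hitting Set}-derived one, and still keeps the parameter polynomial ($k'=2k$) with no increase in the number of users.
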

\end{newstuff}

\begin{newstuff}
The above results tells us that there may be little to be gained from preprocessing an instance of $\wspi{2}(=)$ or an instance that contains arbitrary counting constraints, and we may simply apply the techniques described in Section~\ref{sec:fpt-entailment-constraints}.
Our final result in this section proves that the existence of a polynomial kernel is unlikely when we consider \wsp\ for canonical organizational hierarchies, even when we restrict attention to Type 1 constraints and hierarchies with only three levels.
\end{newstuff}

\begin{theorem} \label{no_pol_ker}
The problem $\wspi{1}(=,\ne,\sim,\nsim)$, where $\sim$ is an equivalence relation defined on $U$, does not have a polynomial kernel, unless $\np \subseteq \conp/\poly$.
\end{theorem}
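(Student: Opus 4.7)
The plan is to prove this via the cross-composition framework of~\citeN{BodJanKra}: if an NP-hard problem OR-cross-composes into a parameterized problem $Q$, then $Q$ cannot admit a polynomial kernel unless $\np \subseteq \conp/\poly$. So it suffices to exhibit a cross-composition from some NP-hard problem into $\wspi{1}(=,\ne,\sim,\nsim)$ with parameter $k$ (the number of steps).

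For the source problem I would choose \textsc{List $q$-Colouring}: given a graph $G=(V,E)$, a palette $[q]$, and lists $L(v)\subseteq[q]$, decide whether $G$ admits a proper colouring with $c(v)\in L(v)$ for every $v$. This is NP-hard and admits a natural polynomial equivalence relation, under which two instances are equivalent iff they share graph $G$ and palette size $q$. Given $t$ instances $I_1,\dots,I_t$ in one equivalence class (all on graph $G$ of $n$ vertices with palette $[q]$, differing only in their lists $L_1,\dots,L_t$), I would build a WSP instance as follows: take steps $S=\set{s_v : v\in V}$ (so $k=n$); let the users be partitioned into $t$ equivalence classes $U_1,\dots,U_t$ of $\sim$, with $U_j=\set{u_{j,1},\dots,u_{j,q}}$ representing the palette for instance $I_j$; set the authorisation to $(s_v,u_{j,c})\in A$ iff $c\in L_j(v)$; and impose the Type 1 constraints $(\sim,s_v,s_w)$ for every pair $v\neq w$ together with $(\ne,s_v,s_w)$ for every edge $(v,w)\in E$.

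Correctness follows because the global $\sim$-constraints force every step to be assigned a user from a single class $U_j$, i.e., the plan implicitly ``selects'' one source instance $I_j$. Once $U_j$ is fixed, the authorisation restricts each $s_v$ to users $u_{j,c}$ with $c\in L_j(v)$, and the $\ne$-constraints impose proper colouring on the common graph $G$. Hence a valid plan exists iff some $I_j$ is a yes-instance of \textsc{List $q$-Colouring}. Since $k=n$ depends only on the size of a single source instance, not on $t$, this is an OR-cross-composition and the theorem follows from the standard framework.

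The main obstacle is the fact that Type 1 $=$ and $\ne$ constraints are \emph{global}: they apply uniformly regardless of which equivalence class is selected by $\sim$. The cross-composition must therefore be arranged so that the per-instance information (here, the lists $L_j$) is absorbed entirely into the authorisation relation, while only information common to every $I_j$ (here, the graph $G$ and the palette size $q$) is encoded by global constraints. \textsc{List Colouring} fits this mould precisely because its instance-specific data is list-based. The remaining work is routine: verifying that the equivalence relation is indeed polynomial, confirming that the reduction runs in time polynomial in $\sum_j \card{I_j}$, and checking the formal hypotheses of~\citeN{BodJanKra} to conclude the nonexistence of a polynomial kernel.
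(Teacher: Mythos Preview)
Your overall strategy---cross-composition via~\citeN{BodJanKra}, with the $\sim$-constraints forcing the plan into a single equivalence class that ``selects'' one source instance, and the instance-specific data pushed into the authorisation relation---is exactly the right framework, and is the same framework the paper uses. However, there is a genuine gap in your execution.

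The equivalence relation you propose (two \textsc{List $q$-Colouring} instances are equivalent iff they share the graph $G$ and the palette size $q$) is \emph{not} a polynomial equivalence relation in the sense of Definition~\ref{polequ}. The second condition requires that any finite set of strings of size at most $N$ be partitioned into $N^{O(1)}$ classes, but the number of distinct graphs encodable in $N$ bits is $2^{\Theta(N)}$. So grouping by ``same graph'' yields exponentially many classes, and Theorem~\ref{res} does not apply. Your own diagnosis at the end of the proposal identifies the obstacle precisely: the $(\ne,s_v,s_w)$ constraints are global, so whatever they encode must be common to \emph{all} instances in a class; but you then encode the entire edge set $E(G)$ in these constraints, which forces ``same graph'' into the equivalence relation.

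The paper circumvents this by cross-composing from plain \textsc{3-Colouring} with the equivalence relation ``same number of vertices $k$'', which is polynomial. The price is that the edge sets of $G_1,\dots,G_t$ now differ, so they cannot be encoded by global $\ne$-constraints. Instead, the paper introduces edge-gadget steps $e_{i,j},e'_{i,j}$ for every potential edge $\{i,j\}$, together with a distinguished user $\alpha^a$ in each block $U_a$; the actual edge set of $G_a$ is encoded by whether $\alpha^a$ is authorised for $e_{i,j}$. Thus the graph structure, which varies across instances, lives entirely in the authorisation relation, and the constraints depend only on $k$. Your proposal would be repaired by a gadget of this kind, but as written the cross-composition does not go through.
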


\section{Concluding Remarks}

\begin{newstuff}
In general terms, the results reported in this paper provide a much improved understanding of the fixed parameter tractability of the workflow satisfiability problem.
In particular, we have developed a technique---the reduction of WSP to {\sc Max Weighted Partition}---that guarantees an instance of WSP is FPT, provided all constraints satisfy two simple criteria.
This enables the designer of workflow systems to determine whether the satisfiability of a workflow specification is FPT by examining the constraints defined in the specification.
Our results in this paper achieve several specific things.
\begin{itemize}
  \item First, the use of the {\sc Max Weighted Partition} problem to solve \wsp\ allows us to develop a fixed-parameter algorithm for which the worst-case run-time is significantly better than known algorithms.
  \item Second, this algorithm can be used to solve more general constraints---counting constraints, Type 3 entailment constraints and constraints based on equivalence relations---than was possible with existing methods.
      In short, we have extended the classes of workflow specifications for which the satisfiability problem is known to be FPT.
  \item Third, we have established the circumstances under which an instance of \wsp\ has a polynomial kernel.
        As well as providing the first results of this type for \wsp, kernelization is of enormous practical value.
        The computation of a maximum matching in time $O(nk \sqrt{n+k})$ is an extremely useful technique for deriving a (partial) plan for an instance of \wsp.
        Moreover, the reduction in the size of the problem instance when the maximum matching generates a partial plan will significantly reduce the complexity of solving instances of $\wspi{1}(=,\ne)$.
  \item Finally, we have significantly extended our understanding of those instances of \wsp\ that are FPT.
        Specifically, \wsp\ is FPT for any workflow specification that only includes constraints that are regular and for which (in)eligibility can be determined in time polynomial in the number of steps.
        In particular, we have established that \wsp\ problems which include constraints based on counting constraints and on user equivalence classes---enabling us to model organizational structures and business rules defined in terms of those structures---are still FPT.
\end{itemize}
In short, we believe our results represent a significant step forward in our understanding of the complexity of WSP and provide the blueprints for algorithms that can find efficient solutions for many practical instances of WSP.
\end{newstuff}

\subsection{Related Work}

Work on computing plans for workflows that must simultaneously satisfy authorization policies and constraints goes back to the seminal paper of \citeN{BeFeAt99}.
This work considered linear workflows and noted the existence of an exponential algorithm for computing valid plans.

Crampton extended the model for workflows to partially ordered sets (equivalently, directed acyclic graphs) and to directed acyclic graphs with loops~\cite{cram:sacmat05}.
Wang and Li further extended this model to include Type $2$ constraints and established the computational complexity and, significantly, the existence of fixed-parameter tractable algorithms for \mbox{$\wspi{2}(=,\ne)$}~\cite{WangLi10}.
Moreover, they established that $\wspi{2}$ is W[1]-hard, in general.

Recent work by~\citeN{BaBuKa10}  introduces the notion of \emph{release points} to model certain types of workflow patterns and defines the concept of \emph{obstruction}, which is related to the notion of unsatisfiability.
They prove that the \emph{enforcement process existence problem} (EPEP), which is analogous to \wsp\ for this extended notion of unsatisfiability, is NP-hard with complexity doubly-exponential in the number of users and constraints.

Independently of the work on authorization in workflows, there exists a vast literature on constraint satisfaction problems.
In this context, \cite{FeFrHeNaRo11} studied $\wspi{1}(\ne)$ and proved that this problem is fixed-parameter tractable.

Our work improves on that of Wang and Li and of Fellows {et al.} by establishing a tighter bound on the exponential factor of the fixed-parameter complexity for the relevant instances of \wsp\ (Theorem~\ref{thm:1level}).
Moreover, our work establishes that it is unlikely that our bound can be significantly improved (Theorem~\ref{thm:lb}).
We extend the type of constraints that can be defined by introducing counting constraints and Type $3$ entailment constraints, and we have shown that \wsp\ remains fixed-parameter tractable (Theorems~\ref{thm:counting-constraints-fpt} and~\ref{thm:type3}).

Most recently, we showed how WSP for entailment constraints could be reduced to {\sc Max Weighted Partition} for particular constraint relations.
In this paper, we have extended our approach to include any form of constraint that is regular and for which eligibility can be determined in time polynomial in the number of steps.
This represents a significant advance as it means we need only test whether a constraint is regular and devise an efficient eligibility test to deploy our techniques for solving WSP.

\subsection{Future Work}

There are many opportunities for further work in this area, both on the more theoretical complexity analysis and on extensions of \wsp\ to richer forms of workflows.
In particular, we hope to identify which security requirements can be encoded using constraints that satisfy the criteria identified in Theorem~\ref{thm:master}.
\begin{newstuff}
A very natural relationship between users is that of seniority: we would like to establish whether the inclusion of constraints based on this binary relation affects the fixed-parameter tractability of \wsp.
\end{newstuff}

There exists a sizeable body of work on \emph{workflow patterns}.
Many workflows in practice require the ability to iterate a subset of steps in a workflow, or to branch (so-called OR-forks and AND-forks) and to then return to a single flow of execution (OR-joins and AND-joins)~\cite{AaHof03}.
A variety of computational models and languages have been used to represent such workflows, including Petri nets and temporal logic.
To our knowledge, the only complexity results for richer workflow patterns are those of Basin {et al.} described above, which can handle iterated sub-workflows.
We will consider the fixed-parameter tractability of EPEP, and \wsp\ for richer workflow patterns, in our future work.

\begin{newstuff}
Wang and Li also introduced the notion of workflow \emph{resiliency}.
The \emph{static} $t$-resiliency checking problem (SRCP) asks whether a workflow specification remains satisfiable if some subset of $t$ users is absent.
Clearly SRCP is NP-hard as the case $t = 0$ corresponds to \wsp.
Evidently, SRCP can be resolved by considering the $\binom{n}{t}$ instances of \wsp\ that can arise when $t$ users are absent.
Hence, SRCP is in $\text{coNP}^{\text{NP}}$~\cite[Theorem 13]{WangLi10}.
The problems of deciding whether a workflow has \emph{dynamic} or \emph{decremental} $t$-resiliency are PSPACE-complete~\cite[Theorems 14--15]{WangLi10}.
\citeN{BaBuKa12} study a related problem called the \emph{optimal workflow-aware authorization administration problem}, which determines whether it is possible to modify the authorization relation, subject to some bound on the ``cost'' of the changes, when the workflow is unsatisfiable.
It will be interesting, therefore, to explore whether we can better understand the parameterized complexity of these kinds of problems.
\end{newstuff}

\appendix
\section{Proofs of Theorems}

In this appendix, we provide proofs of Theorems~\ref{thm:lb},~\ref{thm:wsp2equals-no-kernel},~\ref{no_pol_ker} and~\ref{thm:type3}.
Before proving Theorem~\ref{thm:lb}, we define two problems related to {\sc 3-Sat} and state two preparatory lemmas.
\begin{center}
\fbox{%
        \begin{tabulary}{.95\textwidth}{@{}r<{~}@{}L@{}}
          \multicolumn{2}{@{}l}{\sc $c$-Linear-3-Sat}\\
          \emph{Input:} & A $3$-CNF formula $\phi$ with $m$ clauses, and $n$ variables such that $m\leqslant cn$, where $c$ is a positive integer. \\
          \emph{Output:} & Decide whether there is a truth assignment satisfying $\phi$. \\
        \end{tabulary}%
      }
\end{center}

Let $\phi$ be a CNF formula. A truth assignment for $\phi$ is a {\em NAE-assignment} if, in each clause, it sets at least one literal true and at least one literal false. We say $\phi$ is {\em NAE-satisfiable} if there is a {\em NAE-assignment} for $\phi$.

\begin{center}
\fbox{%
        \begin{tabulary}{.95\textwidth}{@{}r<{~}@{}L@{}}
          \multicolumn{2}{@{}l}{\sc Not-All-Equal-3-Sat (NAE-3-Sat)}\\
          \emph{Input:} & A CNF formula $\phi$  in which every clause has exactly three literals.\\
          \emph{Output:} & Decide whether $\phi$ is NAE-satisfiable. \\
        \end{tabulary}%
      }
\end{center}

The first of our lemmas, which we state without proof, is due to Impagliazzo {\em et al.}~\cite{ImPaZa01} (see also~\cite{CoGuJoRaSa}).

\begin{lemma}\label{lem:lsat}
Assuming the Exponential Time Hypothesis, there exist a positive integer $L$ and a real number $\delta > 0$ such that {\sc $L$-Linear-3-SAT} cannot be solved in time $O(2^{\delta n})$.
\end{lemma}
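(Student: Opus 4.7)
The plan is to derive this lemma from the Exponential Time Hypothesis (ETH) by invoking the Sparsification Lemma of Impagliazzo, Paturi and Zane, which roughly says that an arbitrary 3-CNF formula on $n$ variables can be efficiently broken into a not-too-large family of \emph{sparse} 3-CNF subformulas (each having $O(n)$ clauses), and the original is satisfiable iff at least one of the subformulas is. This is precisely the tool that converts a lower bound on general 3-SAT (which is what ETH provides) into a lower bound on its linear-clause restriction.

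More concretely, first I would invoke ETH as given in the boxed statement: fix a real number $\epsilon_0 > 0$ such that 3-SAT cannot be decided in time $O(2^{\epsilon_0 n})$. I would then quote the Sparsification Lemma: for every $\eta > 0$ there exist a constant $L = L(\eta)$ and an algorithm running in time $O(2^{\eta n})$ that, given any 3-CNF $\phi$ on $n$ variables, outputs at most $2^{\eta n}$ 3-CNF formulas $\phi_1,\dots,\phi_t$, each on at most $n$ variables and at most $L \cdot n$ clauses, such that $\phi$ is satisfiable iff some $\phi_i$ is. Set $\eta := \epsilon_0/3$, and let $L$ be the corresponding constant from the Sparsification Lemma; this fixes our candidate value of $L$.

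Next I would argue by contradiction. Suppose, for a sake of contradiction, that for \emph{every} $\delta > 0$ the problem \textsc{$L$-Linear-3-SAT} admits an algorithm running in time $O(2^{\delta n})$. Choose $\delta := \epsilon_0/3$. Then given an arbitrary 3-SAT instance $\phi$ on $n$ variables, we sparsify in time $O(2^{\epsilon_0 n/3})$ into at most $2^{\epsilon_0 n/3}$ instances of \textsc{$L$-Linear-3-SAT}, each still on at most $n$ variables; we then decide each in time $O(2^{\epsilon_0 n/3})$. The overall running time is at most
\[
  O\!\bigl(2^{\epsilon_0 n/3}\bigr) + 2^{\epsilon_0 n/3}\cdot O\!\bigl(2^{\epsilon_0 n/3}\bigr) \;=\; O\!\bigl(2^{2\epsilon_0 n/3}\bigr),
\]
which beats $O(2^{\epsilon_0 n})$ and contradicts our choice of $\epsilon_0$. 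Hence there must exist some $\delta > 0$ for which \textsc{$L$-Linear-3-SAT} cannot be solved in $O(2^{\delta n})$, completing the proof.

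The only nontrivial ingredient is the Sparsification Lemma itself; quoting it as a black box turns the proof into a routine parameter-chase. If one were to prove sparsification from scratch, the main obstacle would be the combinatorial ``branching-and-shrinking'' procedure that recursively replaces a dense formula by a disjunction of formulas in which each variable appears in a bounded number of clauses, while controlling the branching factor to stay within $2^{\eta n}$ leaves. For the present purposes, however, citing Impagliazzo--Paturi--Zane is standard and entirely sufficient.
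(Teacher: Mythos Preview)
Your proposal is correct. Note, however, that the paper does not actually prove this lemma: it is stated without proof and attributed to Impagliazzo, Paturi and Zane~\cite{ImPaZa01}. Your argument via the Sparsification Lemma is precisely the standard derivation of this fact from ETH and is essentially the content of the cited reference, so you have supplied what the paper merely quotes. One minor stylistic point: the ``for every $\delta>0$'' framing of the contradiction hypothesis is slightly heavier than needed---you only ever use the single value $\delta=\epsilon_0/3$, so you could simply assume \textsc{$L$-Linear-3-SAT} is solvable in $O(2^{\epsilon_0 n/3})$ and derive the contradiction directly---but the logic is sound either way.
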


\begin{lemma}\label{lem:NAE}
Assuming the Exponential Time Hypothesis, there exists a real number $\epsilon > 0$ such that {\sc NAE-3-SAT} with $n$ variables cannot be solved in time $O(2^{\epsilon n})$, where $n$ is the number of variables.
\end{lemma}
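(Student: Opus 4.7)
The plan is to give a linear-size reduction from $L$-\textsc{Linear-3-SAT} (with $L$ from Lemma~\ref{lem:lsat}) to \textsc{NAE-3-SAT}, so that a sufficiently fast algorithm for \textsc{NAE-3-SAT} would yield a sub-exponential algorithm for $L$-\textsc{Linear-3-SAT}, contradicting the Exponential Time Hypothesis.

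First, I would set up the classical Karp-style reduction. Let $\phi$ be a 3-CNF formula with variables $x_1,\dots,x_n$ and clauses $C_1,\dots,C_m$. I introduce a single fresh global variable $s$, and for each clause $C_i = (l_1 \vee l_2 \vee l_3)$ I introduce one fresh variable $a_i$ and replace $C_i$ by the two NAE-clauses
\[
  (l_1,\; l_2,\; a_i) \quad\text{and}\quad (\neg a_i,\; l_3,\; s).
\]
The resulting NAE-3-SAT instance $\phi'$ has $n' = n + m + 1$ variables and $2m$ clauses, and is clearly constructible in time linear in the size of $\phi$.

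Second, I would verify the correctness of the reduction: $\phi$ is satisfiable iff $\phi'$ is NAE-satisfiable. For the forward direction, given a satisfying assignment for $\phi$, set $s = \mathrm{false}$ and, for each $i$, choose $a_i$ by a case analysis on which of $l_1,l_2,l_3$ is true so that both NAE-clauses become non-constant. For the reverse direction, using the symmetry of NAE-satisfiability under global complementation, assume $s = \mathrm{false}$; then for each clause, a short case split on the value of $a_i$ shows that at least one of $l_1,l_2,l_3$ must be true, so the original clause $C_i$ is satisfied. These verifications are routine and I would present them briefly.

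Finally, I would chain the reduction with Lemma~\ref{lem:lsat}. Since the input satisfies $m \leqslant Ln$, we have $n' \leqslant (L+1)n + 1$. Suppose for contradiction that for every $\epsilon > 0$, \textsc{NAE-3-SAT} can be solved in time $O(2^{\epsilon n'})$. Letting $\delta > 0$ be as in Lemma~\ref{lem:lsat} and choosing $\epsilon := \delta/(2(L+1))$, we could solve $L$-\textsc{Linear-3-SAT} by first applying the reduction (in linear time) and then running the assumed \textsc{NAE-3-SAT} algorithm, for a total runtime of
\[
  O(n) + O\bigl(2^{\epsilon((L+1)n+1)}\bigr) \;=\; O\bigl(2^{\delta n/2 + O(1)}\bigr) \;=\; O(2^{\delta n}),
\]
contradicting Lemma~\ref{lem:lsat}. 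Hence such an $\epsilon > 0$ exists for \textsc{NAE-3-SAT}. The main (and essentially only) obstacle is ensuring that the reduction is \emph{linear} in the number of variables rather than merely polynomial, which is why I use just one auxiliary variable per clause plus a single global variable; this is what lets the ETH lower bound transfer cleanly from $L$-\textsc{Linear-3-SAT} to \textsc{NAE-3-SAT}.
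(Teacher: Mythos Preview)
Your proposal is correct and follows essentially the same approach as the paper: reduce $L$-\textsc{Linear-3-SAT} (from Lemma~\ref{lem:lsat}) to \textsc{NAE-3-SAT} via a per-clause gadget that introduces only linearly many new variables (plus one global variable), then choose $\epsilon$ as $\delta$ divided by the resulting linear blow-up factor. The only difference is cosmetic: the paper's gadget replaces each clause $(u\vee v\vee w)$ by three NAE-clauses using two fresh variables $x_C,y_C$ and the global $z$, giving $n' \le n+2m+1$, whereas your two-clause gadget with a single fresh $a_i$ and global $s$ gives the slightly tighter $n' \le n+m+1$; both yield the same conclusion.
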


\begin{proof}
Let $L$ be an integer and $\delta$ be a positive real such that {\sc $L$-Linear-3-SAT} cannot be solved in time $O(2^{\delta n})$. Such constants $L$ and $\delta$ exist by Lemma \ref{lem:lsat}.
Suppose we have a polynomial time reduction from {\sc $L$-Linear-3-SAT} to {\sc NAE-3-SAT} and a positive integer $c'$ such that if a formula in {\sc $L$-Linear-3-SAT} has $n$ variables then the corresponding formula in {\sc NAE-3-SAT} has $n'$ variables and $n'\le c'n$.  Let $\epsilon = \delta/c'$ and suppose that {\sc NAE-3-SAT}  can be solved in time $O(2^{\epsilon n'})$, where $n'$ is the number of variables. Then {\sc $L$-Linear-3-SAT} can be solved in time $O(2^{\epsilon n'})=O(2^{\delta n})$, a contradiction to the definition of $\delta$.

It remains to describe the required polynomial time reduction from {\sc $L$-Linear-3-SAT} to {\sc NAE-3-SAT}. Recall that for every formula in {\sc $L$-Linear-3-SAT} we have $m\le Ln$, where $m$ and $n$ are the numbers of clauses and variables, respectively. We will show that our reduction gives $c'\le 2(1+L).$ Let $\phi$ be a formula of {\sc $L$-Linear-3-SAT}. Replace every clause $C=(u\vee v\vee w)$ in $\phi$ by
\begin{equation}\label{eq1} (u\vee v\vee x_C)\wedge (w\vee \overline{x_C}\vee y_C)\wedge (x_C\vee y_C\vee z)\end{equation}
to obtain a formula $\psi$ of {\sc NAE-3-SAT}.
Here variables $x_C$ and $y_C$ are new for every clause $C$ and $z$ is a new variable but it is common for all clauses of $\phi$.
We will show that $\phi$ is satisfiable if and only if $\psi$ is NAE-satisfiable.
This will give us $c'n\le n + 2m +1\le 2(1+L)n$ implying $c'\le 2(1+L).$

Let $V_{\phi}$ and $V_{\psi}$ be the sets of variables of $\phi$ and $\psi$, respectively. Hereafter 1 stands for {\sc true} and 0 for {\sc false}.

Assume that $\phi$ is satisfiable and consider a truth assignment $\tau:\ V_{\phi}\rightarrow \{0,1\}$ that satisfies $\phi$.  We will extend $\tau$ to $V_{\psi}$ such that the extended truth assignment is a NAE-assignment for $\psi$. We set $\tau(z)=1$. For each clause $C=(u\vee v\vee w)$ of $\phi$,
we set $\tau(y_C)=0$ and $\tau(x_C)=1-\max\{\tau(u),\tau(v)\}.$ Consider (\ref{eq1}).
 Since $\tau(y_C)=0$ and $\tau(z)=1$, $\tau$ is a NAE-assignment for the third clause in (\ref{eq1}). Since $\max\{\tau(u),\tau(v)\}\neq \tau(x_C)$, $\tau$ is a NAE-assignment for the first clause of (\ref{eq1}). Also, $\tau$ is a NAE-assignment for the second clause of (\ref{eq1}) because either $\tau(x_C)=\tau(y_C)=0$ or $\tau(u)=\tau(v)=0$ and, hence, $\tau(w)=1$.

Now assume that $\psi$ is NAE-satisfiable and consider a NAE-assignment $\tau:\ V_{\psi}\rightarrow \{0,1\}$ for $\psi$. Since
$\tau':\ V_{\psi}\rightarrow \{0,1\}$ is a NAE-assignment for $\psi$ if and only if so is $\tau''(t)=1-\tau'(t)$, $t\in  V_{\psi}$, we may assume that $\tau(z)=1.$ Since $\tau$ is a NAE-assignment for the third clause of (\ref{eq1}), we have $\min\{\tau(x_C),\tau(y_C)\}=0.$ If $\tau(x_C)=0$ then $\max\{\tau(u),\tau(v)\}=1$; otherwise $\tau(x_C)=1$ and $\tau(y_C)=0$ implying that $\tau(w)=1.$ Therefore, either $\max\{\tau(u),\tau(v)\}=1$ or $\tau(w)=1$ and, thus, $C$ is satisfied by $\tau$.
\end{proof}

\begin{proof}[of Theorem~\ref{thm:lb}]
Consider a CNF formula $\phi$, which is an instance of {\sc NAE-3-SAT}. Let $\{s_1,\ldots ,s_n\}$ be the variables of $\phi$ and let us denote the negation of $s_i$ by $s_{i+n}$ for each $i\in [n].$ For example, a clause $(s_1\vee \overline{s_2} \vee \overline{s_3})$ will be written as $(s_1\vee s_{n+2}\vee s_{n+3}).$
For $j\in [2n]$, we write $s_j=1$ if we assign {\sc true} to $s_j$ and $s_j=0$, otherwise.

Now we construct an instance of {\sc WSP}. The set of steps is $\{s_1,\ldots ,s_k\}$, where $k=2n$, and there are two users, $u_0$ and $u_1$.
We will assign user $u_i$ to a step $s_j$ if and only if $s_j$ is assigned $i$ in $\phi$.
For each $j\in [n]$ we set constraint $(\neq , s_j,s_{j+n})$. For every clause of $\phi$ with literals $s_{\ell},s_p,s_q$ we set constraint $(\neq , s_{\ell},  \{s_p,s_q\})$. We also assume that each user can perform every step subject to the above constraints.

Observe that the above instance of {\sc WSP} is satisfiable if and only if $\phi$ is NAE-satisfiable. Thus, we have obtained a polynomial time reduction of {\sc NAE-3-SAT} to {\sc WSP} with $\neq$ being the only binary relation used in the workflow and with just two users.
Now our theorem follows from Lemma~\ref{lem:NAE}.
\end{proof}

Before proving Theorem~\ref{thm:wsp2equals-no-kernel}, we introduce a definition and result due to \citeN{BTY09}.

\begin{definition}
Let $P$ and $Q$ be parameterized problems. We say a polynomial time computable function $f: \Sigma^* \times \mathbb{N} \rightarrow \Sigma^* \times \mathbb{N}$ is a \emph{polynomial parameter transformation} from $P$ to $Q$ if there exists a polynomial $p: \mathbb{N} \rightarrow \mathbb{N}$ such that for any $(x,k)\in \Sigma^* \times \mathbb{N}, (x,k) \in P$ if and only if $f(x,k)=(x',k') \in Q$, and $k' \le p(k)$.
\end{definition}

\begin{lemma}{\rm \cite[Theorem 3]{BTY09}}\label{ppt}
Let $P$ and $Q$ be parameterized problems, and suppose that $P^c$ and
$Q^c$ are the derived classical problems (where we disregard the parameter). Suppose that $P^c$ is NP-complete, and $Q^c \in \np$.
 Suppose that $f$ is a polynomial parameter transformation from $P$
to $Q$. Then, if $Q$ has a polynomial-size kernel, then $P$ has a polynomial-size kernel.
\end{lemma}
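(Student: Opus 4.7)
The plan is to compose three polynomial-time maps to construct the desired polynomial kernelization of $P$: the polynomial parameter transformation $f$ from $P$ to $Q$, the assumed polynomial-size kernelization of $Q$, and a classical Karp reduction from $Q^c$ to $P^c$ (which exists because $Q^c \in \np$ and $P^c$ is NP-complete).

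Starting from an arbitrary instance $(x,k)$ of $P$, the first step is to apply $f$ to obtain $(x',k') \in \Sigma^* \times \mathbb{N}$ satisfying $k' \leqslant p(k)$ for some polynomial $p$, with $(x,k) \in P$ if and only if $(x',k') \in Q$. I would then apply the assumed kernelization of $Q$ to $(x',k')$, producing $(x'',k'')$ with $|x''| + k'' \leqslant g(k') \leqslant g(p(k))$ for some polynomial $g$. Both transformations preserve membership, so $(x,k) \in P$ if and only if $(x'',k'') \in Q$, and the resulting instance has total size polynomial in $k$.

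Next, I would invoke the NP-completeness of $P^c$ together with $Q^c \in \np$ to fix, once and for all, a polynomial-time Karp reduction $r$ from $Q^c$ to $P^c$. Viewing $(x'',k'')$ as a single string $z$ under some fixed natural encoding of pairs, I would set $(y,k_y) := r(z)$, parsing the output back as a parameterized instance. Since $r$ runs in polynomial time, $|y| + k_y$ is polynomial in $|z|$, and therefore polynomial in $k$. The correctness of $r$, together with the equivalences already established, yields the chain $(x,k) \in P \iff (x'',k'') \in Q \iff z \in Q^c \iff r(z) \in P^c \iff (y,k_y) \in P$, so the composition is a polynomial kernelization of $P$.

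The main point requiring care is the last step: the Karp reduction is given between the \emph{classical} problems $Q^c$ and $P^c$, so it operates on plain strings, and we must ensure that its output can legitimately be viewed as a parameterized instance of $P$ with parameter bounded polynomially in $k$. Under any standard encoding of pairs this is immediate, since the parameter is recovered from a substring of the output whose length cannot exceed the polynomial bound on $|r(z)|$; nonetheless, it is the only place where the two ``levels'' (parameterized versus classical) need to be reconciled, and I would state explicitly the encoding convention used when writing this up.
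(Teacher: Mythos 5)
Your proof is correct and is essentially the standard argument for this result: the paper itself states the lemma without proof, citing \cite[Theorem 3]{BTY09}, and the proof given there is exactly your composition of the polynomial parameter transformation, the assumed polynomial kernelization of $Q$, and a Karp reduction from $Q^c$ back to $P^c$ guaranteed by NP-completeness. Your closing remark about re-parsing the Karp reduction's output as a parameterized instance is the right point to flag, and is handled correctly.
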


\begin{oldstuff}
\begin{proof}[of Theorem~\ref{thm:wsp2equals-no-kernel}]
In the {\sc Hitting Set} problem, we are given a set $Z$ and a family
${\cal F} = \{Z_1, \ldots , Z_m\}$ of subsets of $Z$ and are asked whether there exists a size-$k$
subset $H$ of $Z$ such that, for every $Z_i$ of $\cal F$, $H \cap Z_i \neq  \emptyset$. \citeN[Lemma 4]{WangLi10} gave a polynomial time reduction from {\sc Hitting Set} parameterized by $m+k$ to $\wspi{2}(=)$.
In fact, this is a polynomial parameter transformation.
\citeN{DLS09} proved that {\sc Hitting Set} parameterized by $m+k$ does not admit a polynomial-size kernel unless $\conp \subseteq \np/\poly$.
Now we are done by Lemma \ref{ppt}.
\end{proof}
\end{oldstuff}

\begin{newstuff}
\begin{proof}[of Theorem~\ref{thm:wsp2equals-no-kernel}]
We may formulate the {\sc Hitting Set} problem as a problem for bipartite graphs.
We are given a bipartite graph with with partite sets $U=\set{u_1,\dots ,u_n}$ and $V=\set{v_1,\dots ,v_m}$ and edge set $E$.
We are to decide whether there is a subset $H$ of $U$ with at most $k$ vertices such that each $v\in V$ has a neighbor in $H$.

We say that two problems are {\em equivalent} if every yes instance of one corresponds to a yes instance of the other.
\citeN[Lemma 4]{WangLi10} proved that  {\sc Hitting Set} is equivalent to the following subproblem $\Pi$ of $\wspi{2}(=)$.
We have $U$ as the set of users, $V\cup S$ as the set of $k'=m+k$ steps, every user $u_j$ is authorized to perform any step from $S$ and every step $v_i$ such that $u_jv_i\in E$, and $(=,v_i,S)$, $i\in [m]$, is the set of constraints of Type 2.

Observe that the above construction gives a polynomial parameter transformation from {\sc Hitting Set} parameterized by $m+k$ to $\wspi{2}(=)$.
\citeN{DLS09} proved that {\sc Hitting Set} parameterized by $m+k$ does not admit a polynomial-size kernel unless $\conp \subseteq \np/\poly$.
Now we are done by Lemma~\ref{ppt}.
\end{proof}

\begin{proof}[of Theorem~\ref{thm:wspcounting-no-kernel}]
We will use the polynomial parameter transformation from {\sc Hitting Set} parameterized by $m+k$ to a subproblem $\Pi$ of WSP described in the proof of Theorem~\ref{thm:wsp2equals-no-kernel}.
We obtain a subproblem $\Pi^*$ of WSP with counting constraints of the type $(2,t,S')$ from $\Pi$ by keeping the same set $U$ of users and the same set $V\cup S$ of steps, but by replacing the constraints of $\Pi$ with  $(2,k+1,S\cup \set{v_i})$, $i\in [m]$.

We now prove that $\Pi$ and $\Pi^*$ are equivalent, from which the result follows by Theorem~\ref{thm:wsp2equals-no-kernel}.

Let $\pi^*$ be a valid plan for $\Pi^*$ and let $\pi$ be obtained from $\pi^*$ by restricting it to $V\cup S$.
Observe that if a constraint $(2,k+1,S\cup \set{v_i})$ is satisfied by $\pi^*$, then $(=,v_i,S)$ is satisfied by $\pi$.
Thus, $\pi$ is a valid plan for $\Pi$.

Let $\pi$ be a valid plan for $\Pi$ and let $\pi^*$ be obtained from $\pi$ by  reassigning to $\pi(v_1)$ every step $s$ in $S$ such that the user $\pi(s)$ is assigned to perform just one step in $V\cup S$.
Observe that if $(=,v_i,S)$ is satisfied by $\pi$, then $(2,k+1,S \cup \set{v_i})$ is satisfied by $\pi^*$.
Thus, $\pi^*$ is a valid plan for $\Pi^*$.
\end{proof}
\end{newstuff}

The following two definitions and Theorem~\ref{res} are due to \citeN{BodJanKra}.

\begin{definition}[Polynomial equivalence relation]\label{polequ}
An equivalence relation $\mathcal{R}$ on $\Sigma^\ast$ is called a {\em polynomial equivalence relation} if the following two conditions hold:
\begin{compactitem}
   \item{There is an algorithm that given two strings $x,y\in\Sigma^\ast$ decides whether $x$ and $y$ belong to the same equivalence class
in $(|x|+|y|)^{O(1)}$ time.}
    \item{For any finite set $S\subseteq\Sigma^\ast$ the equivalence relation $\mathcal{R}$ partitions the elements of $S$ into at most
$(\max_{x\in S}|x|)^{O(1)}$ equivalence classes.}
\end{compactitem}
\end{definition}

\begin{definition}[Cross-composition]\label{crocom}
Let $L\subseteq\Sigma^\ast$ be a problem and let $Q\subseteq\Sigma^\ast\times\mathbb{N}$ be a parameterized problem. We say that $L$
{\em cross-composes} into $Q$ if there is a polynomial equivalence relation $\mathcal{R}$ and an algorithm which, given $t$ strings $x_1,\dots,x_t$
belonging to the same equivalence class of $\mathcal{R}$, computes an instance $(x^\ast,k^\ast)\in\Sigma^\ast\times\mathbb{N}$ in time
polynomial in $\sum_{i=1}^t|x_i|$ such that:
\begin{compactitem}
   \item {$(x^\ast,k^\ast)\in Q$ if and only if $x_i\in L$ for some $1\leqslant i\leqslant t$.}
  \item{$k^\ast$ is bounded by a polynomial in $\max_{i=1}^t|x_i|+\log t$.}
\end{compactitem}

\end{definition}

\begin{theorem}\label{res}
If some problem $L$ is $\np$-hard under Karp reductions and $L$ cross-composes into the
parameterized problem $Q$ then there is no polynomial kernel for $Q$ unless $\np \subseteq \conp/\poly$.
\end{theorem}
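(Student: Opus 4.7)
The plan is to derive the statement by contraposition, plugging the hypothesised polynomial kernel of $Q$ into the given cross-composition to obtain a polynomial-size OR-compression of $L$, and then invoking the Fortnow--Santhanam theorem (as strengthened by Dell and van Melkebeek) to force $\np\subseteq\conp/\poly$. Concretely, assume $Q$ admits a polynomial-size kernel witnessed by a polynomial $p$ and a polynomial-time algorithm $\mathcal{K}$ sending any instance $(y,k)$ to an equivalent $(y',k')$ with $|y'|+k'\leqslant p(k)$; and that $L$ cross-composes into $Q$ via a polynomial equivalence relation $\mathcal{R}$ (Definition~\ref{polequ}) and a composition algorithm $\mathcal{A}$ (Definition~\ref{crocom}) whose output $(x^\ast,k^\ast)$ on any $\mathcal{R}$-equivalent tuple $(x_1,\dots,x_t)$ satisfies $k^\ast\leqslant q(\max_i|x_i|+\log t)$ for some polynomial $q$ and $(x^\ast,k^\ast)\in Q$ iff some $x_i\in L$.

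First I would compose the two algorithms for a single $\mathcal{R}$-class: on $t$ inputs $x_1,\dots,x_t$ that all lie in the same class of $\mathcal{R}$, run $\mathcal{A}$ to obtain $(x^\ast,k^\ast)$ in polynomial time, then run $\mathcal{K}$ on $(x^\ast,k^\ast)$. The resulting kernel has total bit-length at most $p(k^\ast)\leqslant p(q(\max_i|x_i|+\log t))$, which is polynomial in $\max_i|x_i|+\log t$; by the cross-composition property and the equivalence preserved by $\mathcal{K}$, this output is a yes-instance of $Q$ exactly when some $x_i$ belongs to $L$. This is already a polynomial OR-compression of $L$ restricted to $\mathcal{R}$-equivalent inputs.

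Next I would remove the restriction to a single class. Given arbitrary $x_1,\dots,x_t$, use the polynomial-time recognition algorithm for $\mathcal{R}$ (guaranteed by Definition~\ref{polequ}) to bucket them into at most $M=(\max_i|x_i|)^{O(1)}$ equivalence classes; apply the per-class pipeline to each bucket and then OR the tagged kernel outputs in a standard way (e.g.\ concatenating them with class identifiers and viewing the result as an instance of a suitable OR-of-$M$-copies language). Because $M$ is polynomial in $\max_i|x_i|$ and each per-class output already has size polynomial in $\max_i|x_i|+\log t$, the aggregated output remains polynomial in $\max_i|x_i|+\log t$. This yields a full polynomial OR-compression of $L$ into some NP-language, and then the Fortnow--Santhanam OR-compression theorem (as strengthened by Dell--van Melkebeek) gives $L\in\conp/\poly$; since $L$ is NP-hard under Karp reductions, standard closure of $\conp/\poly$ under such reductions promotes this to $\np\subseteq\conp/\poly$, the desired contrapositive.

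The main obstacle is the invocation of the Fortnow--Santhanam result, which is the deep complexity-theoretic ingredient that must be cited as a black box; everything else is essentially a careful ``plug the kernel into the cross-composition'' argument. The bookkeeping that needs attention is threefold: (i)~ensuring the polynomial equivalence relation really bounds the number of buckets by a polynomial in $\max_i|x_i|$ so that aggregating across buckets does not blow up the size; (ii)~arranging the per-bucket outputs into a single OR-compression target language whose size bound is polynomial in $\max_i|x_i|+\log t$ and not in $\log\sum_i|x_i|$; and (iii)~checking that malformed or singleton buckets are handled trivially so that the polynomial-time guarantees of $\mathcal{A}$ and $\mathcal{K}$ transfer to the composed algorithm.
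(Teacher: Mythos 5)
The paper does not actually prove Theorem~\ref{res}; it is quoted verbatim from \citeN{BodJanKra} and used as a black box. Your sketch correctly reconstructs the standard argument from that source---bucket the inputs by the polynomial equivalence relation, compose the cross-composition with the hypothesised kernel per bucket, OR the results into a polynomial OR-compression, and invoke the Fortnow--Santhanam/Dell--van Melkebeek complementary-witness machinery (whose refined form is indeed what absorbs the $\log t$ term in the size bound)---so it matches the intended proof in both structure and the key external ingredient.
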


\begin{proof}[of Theorem~\ref{no_pol_ker}]
We may treat $\wspi{1}(=,\ne,\sim,\nsim)$ as an instance of $\wspi{1}(\sim_1,\nsim_1,\sim_2,\nsim_2,\sim_3,\nsim_3)$ for a canonical hierarchy with three levels, where $\sim_1$ and $\nsim_1$ correspond to $=$ and $\ne$ respectively.

We will use Theorem \ref{res} to show the result, hence we need an $\np$-hard problem $L$ which cross-composes into $\wspi{1}(\sim_1,\not\sim_1,\sim_2,\not\sim_2,\sim_3,\not\sim_3)$.
For this purpose we will use the problem $3$-{\sc Coloring}.
An instance of $3$-{\sc Coloring} is a graph $G$ in which we want to decide if it can be $3$-colored.
We say that two graphs $G_1$ and $G_2$ are equivalent if $|V(G_1)|=|V(G_2)|$.
It is not difficult to see that this defines a polynomial
equivalence relation on $3$-{\sc Coloring} (see Definition \ref{polequ}).

Consider now $t$ instances of $3$-{\sc Coloring}, $G_1,G_2,\ldots,G_t$. Let
  \[
    k=|V(G_1)|=|V(G_2)|= \cdots = |V(C_t)|\quad\text{and}\quad V(G_i)=\{x_1^i,x_2^i,\ldots,x_k^i\}.
  \]
We now construct an instance of \mbox{$\wspi{1}(\sim_1,\not\sim_1,\sim_2,\not\sim_2,\sim_3,\not\sim_3)$} with steps $S$ and users $U$ defined as follows.
\[
  \begin{array}{rclcl}
  S & = & (\cup_{i=1}^k V_i) \cup_{1 \leqslant i < j \leqslant k} \{e_{i,j},e_{i,j}'\} & \hspace{1cm} & \mbox{where $V_i = \{v_1^i,v_2^i\}$}; \vspace{0.2cm} \\
  U & = & \cup_{i=1}^t U_i & &  \mbox{where $U_i = \{c_1^i,c_2^i,c_3^i,\alpha^i\}$}. \\
  \end{array}
\]
Observe that $|S| = k+k^2$ is bounded by a polynomial in the
maximum size of $G_i,\ i\in [t]$.

We now define the hierarchy ${\cal H} = U^{(1)}, U^{(2)}, U^{(3)}$, where $U^{(1)}$ is the partition of $S$ containing all singletons, $U^{(2)}$ is the partition $U_1,U_2,\ldots,U_t$ and
$U^{(3)}$ is the partition containing just one set $S$. We now define the constraints $C$ as follows.
\[
  \begin{array}{rclcl}
  C & = & \{ (\not\sim_1,v_1^i,v_2^i) \; | \; i\in [k] \} \cup  \\
  & & \{ (\not\sim_1,v_1^i,e_{i,j}), (\not\sim_1,v_2^i,e_{i,j}), (\not\sim_1,e_{i,j},e_{i,j}'), (\not\sim_1,v_1^j,e_{i,j}'), (\not\sim_1,v_2^j,e_{i,j}') \; | \; 1 \leqslant i < j \leqslant k \} \\
  & & \cup \{ (\sim_2,s_1,s_2) \; | \; s_1,s_2 \in S \} \\
\end{array}
\]

We now let all users, except $\alpha^1,\alpha^2,\ldots,\alpha^t$, be authorized for all steps. Furthermore if $x_i^a x_j^a \not\in E(G_a)$, where
$1 \leqslant i < j \leqslant k$ then authorize $\alpha^a$ for $e_{i,j}$.

\paragraph{Claim A} {\em The created instance has a valid plan if and only if one of the graphs, $G_1,G_2,\ldots,G_t$ are $3$-colorable.}  The result now follows by Theorem~\ref{res}.

\paragraph{Proof of Claim A} Assume that the created instance has a valid plan. The constraints $\{ (\sim_2,s_1,s_2) \; | \; s_1,s_2 \in S \}$ imply that
all users used in the plan belong to exactly one block in $U^{(2)}$, say $U_r$. Let $\gamma_j \in \{1,2,3\}$ be defined such that the users assigned
to $v_1^j$ and $v_2^j$ are $\{c_1^r,c_2^r,c_3^r\} \setminus \{c_{\gamma_j}^r\}$, which is possible as $(\not\sim_1,v_1^j,v_2^j) \in C$ and $\{c_1^r,c_2^r,c_3^r\}$
are the only users from $U_r$ authorized for $\{v_1^j,v_2^j\}$. If $x_i^rx_j^r \in E(G_r)$ then $e_{i,j}$ must be assigned user $c_{\gamma_i}^r$ and $e_{i,j}'$
must be assigned user $c_{\gamma_j}^r$, which implies that $\gamma_i \not= \gamma_j$, by the given constraints. Therefore $\gamma_1,\gamma_2,\ldots,\gamma_k$
is a $3$-coloring of $G_r$. This shows one direction of Claim A.

Now assume we have a $3$-coloring $\gamma_1,\gamma_2,\ldots,\gamma_k$ of $G_r$, for some $r \in [t]$. Assign users  $\{c_1^r,c_2^r,c_3^r\} \setminus \{c_{\gamma_j}\}$
to the steps $v_1^j$ and $v_2^j$ and for all $e_{i,j}$ assign user $c_{\gamma_i}^r$ if $x_i^r x_j^r \in E(G_r)$ or user $\alpha_r$ if $x_i^r x_j^r \not\in E(G_r)$.
Finally assign user $c_{\gamma_j}^r$ to all steps $e_{i,j}'$. Note that the given assignment of users satisfies all constraints, which completes the proof of the claim.
\end{proof}

\begin{proof}[of Theorem~\ref{thm:type3}]
The result follows from a very similar argument to that used in the proof of Theorem~\ref{thm:llevel}.
Notice that our method for identifying ineligible sets for Type $2$ constraints of the form $(\not\sim_i,s,S')$ works equally well for  Type $3$ constraints of the form $(\not\sim_i,S_1,S_2)$ (since a set $F$ is ineligible if $S_1 \cup S_2 \subseteq F$).

However, we cannot use our method for constraints in $C_{\sim}$.
Nevertheless, we can rewrite the set of constraints in $C_{\sim}$ as Type $2$ constraints, at the cost of introducing additional workflow steps (as we did in the proof of Theorem~\ref{thm:1level-type3}).
This requires the replacement of $c'$ Type $3$ constraints by $2c'$ Type $2$ constraints and the creation of $c'$ new steps.
Finally, we solve the resulting instance of $\wspi{2}$ for a workflow with $n$ users, $k+c'$ steps and $c + 2c'$ constraints, which has complexity $\widetilde{O}((c+2c')n2^{k+c'} + n^2 3^{k+c'})$, by Theorem~\ref{thm:llevel} and Remark~\ref{rem:optimized-proof}.

We may assume without loss of generality that for all constraints of the form \mbox{$(\sim_i,S_1,S_2)$} in $C_{\sim}$, $S_1 \cap S_2 = \emptyset$.
(The constraint is trivially satisfied if there exists $s \in S_1 \cap S_2$, since we assume there exists at least one authorized user for every step.)
Hence the number of constraints having this form is no greater than $\sum_{j=1}^k \binom{k}{j}2^{k-j} = 3^k$.
\end{proof}

\bibliography{refs}
\bibliographystyle{acmsmall}

\end{document}